%% file: main.tex
\documentclass[11pt,letterpaper]{article}
\usepackage{amsmath,amsfonts,amsthm,amssymb}
\usepackage{setspace}
\usepackage{fancyhdr}
\usepackage{lastpage}
\usepackage{extramarks}
\usepackage{xspace}
\usepackage{chngpage}

\usepackage{makecell}
\usepackage{enumitem} 
\usepackage{multirow}

\usepackage{algorithm,algorithmicx}
\usepackage[noend]{algpseudocode}
\usepackage{graphicx}
\usepackage{nicefrac,bbm}
\usepackage{subcaption}
\usepackage{thm-restate}
\usepackage{tikz}
\usetikzlibrary{calc}

\usepackage{graphicx}

\usepackage{soul,color}
\usepackage{graphicx,float,wrapfig}
\usepackage[font=small,labelfont=bf]{caption}
\usepackage[margin=1in]{geometry}
\usepackage{enumitem}
\usepackage{thmtools,thm-restate}
\usepackage{bbm}
\usepackage{parskip}

\allowdisplaybreaks

\usepackage{mdframed}
\newtheorem*{mdresult}{Result}

\usepackage[dvipsnames]{xcolor}
\usepackage[linktocpage=true,breaklinks,colorlinks,citecolor=blue,linkcolor=BrickRed,psdextra]{hyperref}
\usepackage{cleveref}

\usepackage{wrapfig}
  
\title{\textmd{\bf Asymmetric Trading Prophets
  }}

\date{\today}
\author{
Gagan Aggarwal\thanks{Google Research (\texttt{gagana@google.com}). }
\and
Anupam Gupta\thanks{Department of Computer Science, New York
  University. (\texttt{anupam.g@nyu.edu}) Work supported in part by NSF awards CCF-2422926 and CCF-2608359.}
\and
Yifan Wang\thanks{School of Computer Science, Georgia Tech (\texttt{ywang3782@gatech.edu}). Part of this work was done while the author was visiting Google as a Student Researcher.}
\and
Mingfei Zhao\thanks{Google Research (\texttt{mingfei@google.com}).}
}

\input{macros}

\begin{document}

\maketitle 

\begin{abstract}


  The ``Trading Prophet'' problem challenges an online trader to
  maximize its profit by buying and selling assets under stochastic
  prices and capacity constraints, competing against an offline
  prophet with full foresight. In previous work, each arriving asset
  was assumed to have a single price $p_t$, and the trader was allowed
  to either buy a copy at this price (subject to having available
  capacity), or sell a copy (if it already held at least one copy in
  hand). However, this abstraction can fail to capture the structural
  asymmetry of decentralized dealer-based markets, where buying and
  selling opportunities could be distinct, and driven by individual
  preferences. To address this, we introduce the \emph{Asymmetric
    Trading Prophets} problem, where at each timestep the trader
  observes a price tuple $(b_t, s_t)$---representing the cost to buy,
  and the revenue from selling at this timestep. Importantly, the
  $(b_t,s_t)$ tuple could be potentially arbitrarily correlated.

  We provide the first competitive analysis for this asymmetric
  trading prophets problem, characterizing the achievable profit based
  on the trader's capacity $B$ and initial inventory $B_0$. For the
  unit-capacity case of $B=1$,
  we design online algorithms that achieve constant competitive ratios
  for both i.i.d.\ and non-i.i.d.\ distributions on the price tuples,
  when the trader has one initial copy ($B_0=1$). For the general
  capacity case where $B$ can be large, we give algorithms for i.i.d.\
  distributions that achieve a competitive ratio of
  $1 - \Theta(\log B_0/\sqrt{B_0})$. Finally, for the
  \emph{symmetric case} (where the price tuple satisfies $b_t=s_t$),
  we improve this to get a competitive ratio of
  $1 - O(\log B/\sqrt{B})$, demonstrating that the
  performance approaches optimality as the capacity increases. We show
  that both ratios are tight up to a logarithmic factor.


\end{abstract}






\clearpage

\input{intro}


\input{single-noniid}

\input{iid}

\input{lower}
{\small
\bibliographystyle{alpha}
\bibliography{ref.bib}
}

\appendix

\input{ineq}

\input{single-noniid-missing}

\input{iid-mainalg-missing}


\end{document}

%% file: macros.tex
\DeclareUnicodeCharacter{2217}{*}

\newtheorem{Theorem}{Theorem}[section]
\newtheorem{Lemma}[Theorem]{Lemma}

\newtheorem{Definition}[Theorem]{Definition}

\newtheorem{Claim}[Theorem]{Claim}
\newtheorem{Corollary}[Theorem]{Corollary}

\newcommand{\parta}{(\text{\uppercase\expandafter{\romannumeral1}})}
\newcommand{\partb}{(\text{\uppercase\expandafter{\romannumeral2}})}
\newcommand{\partc}{(\text{\uppercase\expandafter{\romannumeral3}})}

\newcommand{\E}{\mathbb{E}}

\newcommand{\one}{\mathbf{1}}%
\newcommand{\bluemath}[1]{{\color{blue}{#1}}}
\newcommand{\pr}{\mathbf{Pr}} 

\newcommand{\ignore}[1]{{}}

\newcommand{\D}{\mathcal{D}}
\newcommand{\cF}{\mathcal{F}}
\newcommand{\cE}{\mathcal{E}}

\newcommand{\IGNORE}[1]{}
\newcounter{note}[section]

\newcommand{\calI}{\mathcal{I}}

\newcommand{\lp}{{\mathsf{LP}}}

\newcommand{\nf}{\nicefrac}

\newcommand{\GTP}{{\textsf{ATP}}\xspace}
\newcommand{\ATP}{\GTP}

\newcommand{\var}{\mathsf{Var}}


%% file: intro.tex
\section{Introduction}


The maxim ``buy low, sell high'' captures the fundamental goal of any
trader, yet executing this strategy in an online setting involves
significant uncertainty. This challenge is formally captured by the
\emph{Trading Prophets} problem, introduced by \cite{CCDHOS-EC23}. In
this setting, a trader observes a sequence of stochastic prices for
the same asset; they must decide, irrevocably and in real-time,
whether to buy a copy of the asset (if they have available capacity)
or to sell a copy (if they have at least one copy they can sell), to maximize
profit within storage constraints. The objective is to design online
algorithms that compete against the ``prophet''---an offline
benchmark with full foresight of future price realizations. This
framework bridges the classic theory of prophet inequalities and
optimal stopping with the dynamics of inventory management.

The work by \cite{CCDHOS-EC23} and \cite{azar2026trading} has
established strong theoretical foundations for this problem. These
works model the market as an exogenous ticker tape: at any time step
$t$, there is a single ``market price'' drawn from a known
distribution. The trader chooses to transact (either buy or sell) at
this price. This abstraction effectively models centralized exchanges,
such as the stock market, where a unified price dictates terms for all
participants and the cost to buy is tightly coupled to the revenue
from selling. 

However, this ``single price'' abstraction fails to capture the
structural asymmetry inherent in other decentralized or dealer-based
markets. In these settings, a trader acts as a market-maker, facing a
stream of distinct agents, each with their own private valuations. For
instance, in the market for used vehicles, the price at which a dealer
can acquire inventory from a desperate seller can be often entirely
distinct from the price a collector is willing to pay for that same
item, even if they arrive at the same time. Here, the ``price'' is not
a single scalar, but rather a set of distinct buying and selling
opportunities driven by individual preferences rather than a unified
market signal. 

To address this gap, we introduce the \emph{Asymmetric Trading Prophet
  (\ATP)} problem. In our model, at each time step $t$, the trader
observes a price tuple $(b_t, s_t)$ drawn from a known distribution,
where $b_t$ represents the cost to purchase an item and $s_t$
represents the revenue from selling one. Crucially, we allow $b_t$ and
$s_t$ to be arbitrarily correlated.
The trader must now perform exactly one of the following actions: buy,
sell, or skip. This framework strictly generalizes the ``symmetric''
stock market model (by setting $b_t = s_t$) while capturing scenarios
inaccessible to prior works, such as disjoint buy-only or sell-only
opportunities (e.g., via $s_t = 0$ or $b_t = \infty$), and general
asymmetric price structures. Having formulated this model, we ask:

\begin{quote}
\emph{Question 1: Can we design online algorithms that achieve (expected) profit comparable to the offline optimal prophet in hindsight in the Asymmetric Trading Prophet problem?}
\end{quote}

The original Trading Prophet model proposed by \cite{CCDHOS-EC23} and
\cite{azar2026trading} assumes that the trader has
\emph{unit-capacity}, and 
can hold at most one item.\footnote{\cite{CCDHOS-EC23} consider
  settings where the trader can store up to $k$ items, but allow the
  trader to buy or sell the entire inventory at a single time step,
  which makes this variant equivalent to the single item case; see
  \cite[\S5.2]{CCDHOS-EC23}.} While this unit-capacity
assumption is sufficient for stock markets where inventory can be
liquidated instantly, it is restrictive for markets like used
vehicles. In such dealer-based settings, buyers and sellers are
individuals who arrive sequentially, meaning the trader can transact
at most one copy of the item per time step. Consequently, the dealer
essentially functions as an inventory buffer, requiring a capacity
$B \ge 1$ to hold multiple copies of the same good. This allows the
trader to accumulate stock during periods of low-cost supply and
gradually release it as high-value buyers arrive. In this paper, we
incorporate this constraint by allowing the trader to hold up to $B$
copies of the item, while limiting trades to one copy per time
step. We study the competitive ratio in terms of the trader’s capacity
and initial inventory. Even in the special unit-capacity stock market
setting, the competitive ratio is unbounded if the trader has no
copies initially \cite{CCDHOS-EC23} yet becomes constant when it has
one initial copy \cite{azar2026trading}. We ask:
\begin{quote}
  \emph{Question 2: How does the capacity and initial inventory of the
    trader affect the competitive ratio of the online algorithms?}
\end{quote}

\subsection{Our Model}
\label{sec:our-model}

In the \emph{Asymmetric Trading Prophets} (\ATP) problem, a decision
maker interacts with $T$ sequentially arriving requests to trade
copies of a certain product. The decision maker has a capacity
constraint of $B \geq 1$, which means she can hold at most $B$ copies
of the product at any time. Before the decision making process begins,
the decision maker has a initial inventory of $B_0 \in [0, B]$
copies. We assume $T \geq B_0$, while $B \leq B_0 + T$, otherwise
either the initial inventory level $B_0$, or the capacity $B$ would
become unnecessary. (This is only for simplicity; we can discharge
these assumptions by replacing occurrences of $B_0$ by $\min(B_0,T)$,
and those of $B$ by $\min(B_0+T, B)$ in our results.)

The request $t$ consists of a price tuple $(b_t, s_t)$ presented to
the decision maker, where $(b_t, s_t)$ is independently drawn from a
publicly known distribution $\D_t$. In general, we allow the price
tuple to be \emph{asymmetric}, i.e., the prices $b_t$ and $s_t$ are
not necessarily equal. (Some of our results are specialized to the
\emph{symmetric} setting, where $b_t=s_t$ for any tuple in the support
of $\D_t$.)  Faced with the request, the decision maker takes exactly
one of the following three actions:
\begin{itemize}
\item \textbf{Buy}: if the decision maker has at most $B - 1$ copies of the product, she can buy a new copy at price $b_t$.
\item \textbf{Sell}: if the decision maker has at least $1$ copy of the product, she can sell a copy at price $s_t$.
\item \textbf{Skip}: the decision maker does nothing.
\end{itemize}

The goal of the decision maker is to maximize their total profit,
i.e., the total revenue obtained from sales, minus the total cost
incurred from purchases. In this paper, we evaluate the performance of
an online algorithm for the \ATP problem via the framework of
\emph{competitive analysis}. i.e., we seek to design good online
algorithms such that the expected profit achieved by the online
algorithm is at least factor times the optimal offline algorithm,
i.e., the optimal achievable profit assuming knowledge of all the
arriving price tuples.

\subsection{Main Results}
\label{sec:main-results}

In this paper, we study the \ATP problem and give answers to both questions mentioned above. Our results are summarized in \Cref{tab:results}.

\begin{table}[tbh]
\centering
\renewcommand{\arraystretch}{1.6}
\resizebox{\textwidth}{!}{%
    \begin{tabular}{c|cc|cc}
    \hline
    & \multicolumn{2}{c|}{\textbf{Unit-Capacity}}
    & \multicolumn{2}{c}{\textbf{Large Capacity}} \\
    \cline{2-5}
    & \textbf{Upper Bound} & \textbf{Lower Bound}
    & \textbf{Upper Bound} & \textbf{Lower Bound} \\
    \hline

    Non-I.I.D., General
    & \makecell{$\bluemath{22}$ (Thm.~\ref{thm:label-noniid-small-informal})}
    & {\makecell{$3$ [A+'26]}}
    & -
    & -
    \\

    Non-I.I.D., Symmetric
    & \makecell{$3$ [A+'26]}
    & {\makecell{$3$ [A+'26]}}
    & -
    & -
    \\

    I.I.D., General
    & \makecell{\bluemath{$4$} (Thm.~\ref{thm:label-iid-small-informal})}
    & \makecell{$1.34$ [C+'17]}
    & \makecell{$\bluemath{1 + O(\log B_0/\sqrt{B_0})}$
      (Thm.~\ref{thm:label-iid-large-informal})} &
            \makecell{$\bluemath{1 + \Omega(1/\sqrt{B_0})}$ (Thm.~\ref{thm:label-iid-large-informal})}
    \\

    I.I.D., Symmetric
    & \makecell{$2$ [C+'23]}
    & - 
    & \makecell{$\bluemath{1 + O(\log B/\sqrt{B})}$ (Thm.~\ref{thm:unlabel-iid-large-informal})} & \makecell{$\bluemath{1 + \Omega(1/\sqrt{B})}$ (Thm.~\ref{thm:unlabel-iid-large-informal})}
    \\

    \hline
    \end{tabular}%
}
\caption{The achievable competitive ratios of the asymmetric trading
  prophet problem. Here, [A+'26] is the work of \cite{azar2026trading}, [C+'17] is
  \cite{CFHOV-EC17}, and [C+'23] is \cite{CCDHOS-EC23}. Results proved in this paper are in blue. All results assume $B_0 \geq 1$.}
\label{tab:results}
\end{table}







\vspace{0.5em} \noindent \textbf{Result for the no initial inventory
  case.} Our first result is for the \ATP problem with no initial
inventory (i.e., with $B_0 = 0$). In contrast to the symmetric setting
where a constant competitive ratio is achievable when the request
distributions are identical \cite[Theorem 1]{CCDHOS-EC23}, we show
that no finite competitive ratio is achievable for the asymmetric case
when $B_0 = 0$, even in the i.i.d.\ setting.

\begin{Theorem}[No Initial Inventory]
  \label{thm:no-initial}
  For any maximum capacity $B$, there exists an \ATP instance with
  $B_0 = 0$ such that the optimal offline algorithm achieves an
  expected profit strictly greater than $0$, while no online algorithm
  can achieve an expected profit better than $0$.
\end{Theorem}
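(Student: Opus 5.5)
Construct a short i.i.d.\ instance where buying is a losing move for every online algorithm, yet the prophet profits. Take $T=2$ (any capacity $B\ge 1$ works, since $T=2$ and $B_0=0$ mean at most one copy can ever be bought and sold) and let each $\D_t$ be the following distribution on tuples. With probability $\nf{1}{2}$ the tuple is $(b,s)=(1,0)$, a ``buy opportunity'' at cost $1$ with no selling value. With probability $\nf{1}{2}$ it is $(\infty,\, 1)$, or a very large buy price $M$ if infinite prices are not allowed; this is a ``sell opportunity'' paying $1$. The general idea is that a buy costs its full price now, while the matching sell must come later and pays only with some probability. We tune the distribution so that the expected future sale revenue is at most the purchase price, while the realized event ``buy now, sell later'' still happens with positive probability.

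\textbf{Offline.} The prophet buys at $t=1$ and sells at $t=2$ exactly when the realization is (buy-opportunity, sell-opportunity). Its profit in that event is $1-1=0$, so with these numbers the offline benefit vanishes, and the sell price must be raised. Use sell price $2$ with probability $q$ and buy price $1$ with probability $1-q$. The prophet earns $2-1=1$ on the event $\{b_1=1,\ s_2=2\}$, which has probability $(1-q)q>0$. Hence the offline profit is $q(1-q)>0$.

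\textbf{Online.} At $t=1$ the algorithm holds no copy, so it can only buy (when $b_1=1$) or skip; it cannot sell. At $t=2$ it can sell only if it bought. Buying at $t=1$ gives expected profit $-1+2q$ if it sells at $t=2$ whenever $s_2=2$. Choosing $q\le \nf12$, say $q=\nf12$, makes this $\le 0$. The algorithm's decision at $t=1$ cannot depend on the future, and any randomized policy is a mixture of these two choices. So the online expected profit is at most $\max(0,2q-1)=0$, while the prophet gets $q(1-q)=\nf14>0$.

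\textbf{Adjusting the sell tuple.} At sell-opportunity steps the buy price $M$ should be large, so that buying there is never useful either. The main subtlety is exactly the bookkeeping above: sale revenue must exceed cost on the good realization, yet the expected revenue conditioned on the online information must not exceed cost. This is resolved by making the sale equal to twice the cost with probability $\nf12$.

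For larger $T$ one can pad with dummy steps $(\infty,0)$. Making these dummy steps identically distributed means mixing them into $\D$; a cleaner alternative is to simply state the instance with $T=2$, which the model allows since $T\ge B_0=0$.
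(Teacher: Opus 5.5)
Your argument is correct and uses the same mechanism as the paper. There is a buy-only tuple $(1,0)$ and a sell-only tuple $(+\infty,2)$, and the sale probability is tuned so that every purchase has nonpositive expected value given the online information. Meanwhile the prophet profits on the realization ``buy tuple, then sell tuple''.

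The paper instead takes $T=B$ and gives the sell tuple probability $0.1/B$. This keeps the instance i.i.d.\ and respects the model's standing convention $B\le B_0+T$. The cost is that the paper must handle many outstanding copies, which it does per purchase: a copy bought at time $t$ sells at price $2$ only if a sell tuple arrives after $t$, and by independence this has probability at most $1-(1-0.1/B)^B\le 0.1$.

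Your choice $T=2$, $q=\nf12$ makes the computation exact and elementary. The online algorithm can only tie at $0$, which the statement allows. However, it makes the capacity vacuous. Your remark that the model allows $T=2$ ``since $T\ge B_0$'' overlooks the other half of that convention, $B\le B_0+T$, which fails once $B>2$.

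The theorem does not require identical distributions, so the simplest repair is the padding you set aside, done deterministically: let $\D_3=\dots=\D_B$ be the point mass on $(+\infty,0)$. Your analysis is unchanged, because those steps yield no revenue. What you give up is the ``even in the i.i.d.\ setting'' strengthening claimed in the introduction. Recovering it with $T\ge B$ essentially forces the paper's rare-sell-tuple construction.

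Also drop the false start with sell price $1$ from any write-up.
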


Given \Cref{thm:no-initial}, we now focus on cases
when the initial inventory $B_0$ is at least $1$.

\vspace{0.5em} \noindent \textbf{Results for the unit-capacity case.}
Our first set of positive results are for the unit-capacity case,
where the decision-maker only has capacity $B =1$; i.e., it can only
hold at most one copy in its inventory at any time. For the general
case where the requests distributions are non identical, we show the
existence of a $22$-competitive algorithm.

\begin{Theorem}[non-I.I.D.\ Unit-Capacity]
  \label{thm:label-noniid-small-informal}
  For \ATP problem with non-identical request distributions, when
  $B_0 = B = 1$, there exists an online algorithm that achieves an
  expected profit of at least $\nf{1}{22}$ times the expected profit
  of the optimal offline algorithm.
\end{Theorem}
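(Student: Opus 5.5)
We plan to upper bound the offline optimum by a small number of structured quantities and then build an online algorithm that mixes simple policies, each competing with one of these quantities. With $B=B_0=1$, the offline optimum alternates: it sells the initial copy at some time $\tau_1$, buys at some $\tau_1<t_1$, sells at some $t_1<t_1'$, and so on. Its profit is therefore $s_{\tau_1}+\sum_j (s_{t_j'}-b_{t_j})$, where every round trip buys strictly before it sells and the intervals are disjoint. We split $\E[\mathrm{OPT}]$ into two pieces. The first is the \emph{first-sale term}, at most $\E[\max_t s_t]$. The second is the \emph{round-trip term}, at most the expected maximum total weight of disjoint ordered buy--sell pairs. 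Since the initial copy can always be sold at the end, we also have $\mathrm{OPT}\ge 0$, and the first term alone can be competed with.

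\textbf{Step 1 (first-sale term).} Holding the initial copy and running a single-item prophet inequality on the $s_t$ (threshold $\tfrac12\E[\max s_t]$) gives $\tfrac12\E[\max_t s_t]$. The online algorithm never holds a negative position, so this profit is genuine.

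\textbf{Step 2 (round-trip term via an ex-ante LP).} For each $t$ we introduce variables $x_t$ (probability of buying at $t$) and $y_t$ (probability of selling at $t$), together with conditional-on-value allocations. The constraints are: $x_t,y_t$ are both feasible with respect to the marginal of $\D_t$, meaning a single request cannot be used both ways; inventory balance $\sum_{t'\le t} (y_{t'}-x_{t'})\le 1$ and $\ge 0$ (with initial inventory $1$); and $x_t+y_t\le 1$. The objective is $\sum_t(\E[s_t\mathbf 1_{\text{sell}}]-\E[b_t\mathbf 1_{\text{buy}}])$. The offline optimum's expected behavior is feasible, so $\lp\ge\E[\mathrm{OPT}]$. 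A caveat is that $\lp$ also includes the first sale, which is fine.

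\textbf{Step 3 (online rounding).} We round the LP online, attenuating by constant factors so that each decision is a simulated LP action taken with probability $\alpha$. The key is to show that whenever the LP wants to sell at $t$, we hold the copy with probability at least some constant $c$, and whenever it wants to buy, we are empty with probability at least $c$. Buying has a cost, so we cannot simply drop round trips: a buy whose matching sell fails loses $b_t$. To avoid this, we use the state-dependent structure directly. At time $t$, if we are empty, we buy with probability $x_t/\pr[\text{empty at }t]\cdot\alpha$. If we are full, we sell with probability $\alpha y_t/\pr[\text{full at }t]$. These are computable online by forward recursion, and we choose $\alpha$ so that the probabilities stay at most $1$, which the LP balance constraints give for a constant $\alpha$. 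The resulting expected profit is $\alpha\cdot\lp$ exactly in expectation, since each action fires with probability $\alpha x_t$ or $\alpha y_t$. The main obstacle is that the marginal state probabilities under the attenuated process must stay bounded below proportionally to $x_t$ and $y_t$. Attenuation breaks the exact match between LP buys and sells, so held copies may pile up (no, capacity is $1$) or we may stay empty. A negative profit would arise only because a purchase is never resold. We handle this by combining the rounding with Step 1's end-of-horizon liquidation: any held copy at the end is sold at the best remaining $s_t$. We then argue that the LP's inventory constraint is relaxed to the attenuated one, for example by replacing the balance constraint with the $\alpha$-scaled one and using that $\alpha\lp$ remains feasible.

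\textbf{Step 4 (combine).} We choose Step 1 or Step 3 with appropriate probabilities, or select based on which bound dominates, since the distributions are known. We then sum the constants: roughly $\E[\mathrm{OPT}]\le 2\cdot\mathrm{ALG}_1+(1/\alpha)\mathrm{ALG}_3$ with additional slack from the prophet-inequality losses. Tuning $\alpha$ (around $1/10$) yields a ratio near $22$. We expect the constant-lower-bound-on-state-probabilities argument in Step 3 to be where most of the work lies.
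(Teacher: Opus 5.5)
\textbf{Step 3 is a genuine gap.} It carries the whole proof: your LP already includes the first sale, so Steps 1 and 4 are redundant, and the ``ratio near 22'' in Step 4 is not derived from anything. Its central claim is that the LP balance constraints keep the conditional probabilities $\alpha y_t/(p_t\,\pr[\text{full}])$ and $\alpha x_t/(p_t\,\pr[\text{empty}])$ at most $1$ for a constant $\alpha$. This is only asserted, and as you set things up (holding the initial copy with probability $1$) it is false.

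Write $L_t = 1 + \sum_{t'\le t}(x_{t'}-y_{t'}) \in [0,1]$ for the LP inventory, and $F_t$ for the probability that your process is full after step $t$. As long as the rule is well defined, every action fires with exactly its target marginal. Hence $F_t = F_{t-1} + \alpha(x_t - y_t)$, i.e.\ $F_t = F_0 - \alpha(1 - L_t)$. With $F_0 = 1$ this gives $\pr[\text{empty after } t] = \alpha(1-L_t)$, and the LP can have $x_t/p_t > 1 - L_{t-1}$.

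A concrete instance: $s_1 = 10$ with $p_1 = 1$, $b_2 = 2$ with $p_2 = \nf12$, $s_3 = 100$ with $p_3 = 1$. The optimal LP solution is $y_1 = x_2 = \nf12$, $y_3 = 1$, with value $104$. Then $\pr[\text{empty after } 1] = \alpha/2$, and your rule asks to buy at step $2$ with conditional probability $\alpha x_2/(p_2\cdot \alpha/2) = 2$. Once you cap at $1$, the marginals are no longer $\alpha x_t, \alpha y_t$, and the identity ``profit $=\alpha\lp$'' has no proof.

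The end-of-horizon liquidation ``at the best remaining $s_t$'' does not repair this. It is not an online operation, it would perturb the marginals, and it addresses a non-issue: the expected profit $\alpha\sum_t (s_t y_t - b_t x_t)$ already charges every purchase, whether or not it is resold.

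\textbf{The missing idea and how it compares to the paper.} Start from a mixed state. With probability $1-f_0$, ignore the initial copy and hold it as a phantom; at the first simulated buy, skip paying and use the phantom instead, which can only increase profit since $b_t \ge 0$. The paper uses the same ``drop the initial copy'' device. Then $F_t \in [f_0-\alpha,\, f_0]$ for all $t$. Since $x_t, y_t \le p_t$, both conditional probabilities are at most $\alpha/\min\{f_0-\alpha,\,1-f_0\}$, which is at most $1$ for, e.g., $f_0 = \nf23$ and $\alpha = \nf13$.

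With this fix your route is correct and genuinely different from the paper's. The paper:
\begin{itemize}
\item uses the pairwise LP $x_{i,j}$ and uncrosses it;
\item cuts the horizon into blocks carrying at most $2.5$ units of buy/sell mass, with trades crossing at most one block boundary;
\item treats sales of the initial copy by a single-item prophet inequality;
\item runs inside each block an OCRS that restores $\pr[\text{hold}] = \nf12$ at block boundaries.
\end{itemize}
The constant factors lost in these last steps produce the $22$. Your repaired rule controls the state distribution globally instead of block by block. It therefore needs no decomposition and gives expected profit at least $\lp/3 \ge \E[\mathrm{OPT}]/3$, which implies the stated $\nf{1}{22}$ bound.
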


Compared to the \ATP problem with symmetric prices studied in
\cite{azar2026trading}, an immediate challenge in the asymmetric-price
setting lies in the complexity of the optimal offline
algorithm. Unlike the symmetric case, where the optimal offline
decision depends only on the current price and the next arriving
price, in the asymmetric case the optimal decision may depend on the
entire remaining suffix of arrivals. 
To see this, consider a simple instance where the trader with one unit of initial inventory and unit capacity facing a sequence where $s_1=10$ and $s_T=100$. In the asymmetric setting, the optimal decision at $t=1$ depends entirely on whether the inventory can be replenished later to capture the final high value. If there exists an intermediate time $k \in (1, T)$ with a low buying price (e.g., $b_k=2$), the optimal strategy is to sell at $t=1$, restock at $t=k$, and sell again at $t=T$. However, if no such buying opportunity exists (i.e., $b_t = \infty$ for all $1 < t < T$), selling at $t=1$ would be a catastrophic error. This dependency prevents us from
providing a direct analysis.

To address this issue, we propose a linear programming relaxation of
the optimal offline algorithm. This relaxation captures the
probability that the optimal offline algorithm buys a copy from
request $i$ and subsequently sells it to request $j$, allowing us to
interpret the optimal LP solution as a collection of fractional
trades.

To further round the optimal LP solution, our core idea is to
decompose it into multiple intervals such that only a constant
fraction of cross-interval trades is discarded, while each interval
contains only a constant number of fractional trades. We then propose
an algorithm using the Online Contention Resolution Scheme (OCRS) \cite{FSZ-SODA16} that extracts a constant fraction of the
profit from each interval that contains only a constant number of
trades.

Next, for the unit-capacity case, we further improve the competitive
ratio to $4$ when the request distributions are identical.

\begin{Theorem}[I.I.D.\ Unit-Capacity]
  \label{thm:label-iid-small-informal}
  For the \ATP problem with identical request distributions, when
  $B_0 = B = 1$, there exists an online algorithm that achieves an
  expected profit of at least $\nf{1}{4}$ times the expected profit of
  the optimal offline algorithm.
\end{Theorem}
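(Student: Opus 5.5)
The plan is to upper bound the prophet by a time-symmetrized LP, and then design a simple stationary two-state threshold policy whose expected profit is a constant fraction of each part of that bound.

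\textbf{Step 1: structure of the offline optimum.} With $B_0=B=1$, any offline schedule is an alternating sequence: sell, buy, sell, buy, and so on, beginning with a sale of the initial copy. Any purchase not followed by a sale can be dropped without loss. Hence the profit of the optimal offline algorithm splits into two parts:
\begin{itemize}
\item the revenue $s_{j_0}$ from the first sale;
\item the sum of the margins $s_{j}-b_{i}$ over the remaining buy-then-sell cycles.
\end{itemize}
The first part is at most $\E[\max_t s_t]$.

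\textbf{Step 2: LP relaxation and symmetrization.} For the second part, I would write an LP with the following variables:
\begin{itemize}
\item $y^S_t(v)$, the probability that the prophet sells at $t$ with $s_t=v$;
\item $y^B_t(v)$, the probability that the prophet buys at $t$ with $b_t=v$.
\end{itemize}
The constraints are:
\begin{itemize}
\item each of $y^S_t(v)$ and $y^B_t(v)$ is at most $\pr[\cdot=v]$;
\item at each step, selling and buying are exclusive, so their combined mass is at most $1$ per realization;
\item cumulative sells are at most $1+{}$cumulative buys, and cumulative buys are at most cumulative sells.
\end{itemize}
Since the $\D_t$ are identical, the LP is invariant under permutations of time except for the prefix constraints. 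Averaging the per-step variables therefore gives a stationary fractional solution: in every step, sell on a fixed acceptance rule with rate $\sigma$, and buy on a fixed rule with rate $\beta$. Its value is at least the cycle part of the LP optimum, using that total sells minus total buys is at most one.

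\textbf{Step 3: the online algorithm.} With probability $1/2$, the algorithm only sells the initial copy, using an i.i.d.\ prophet threshold (median or single-threshold rule). This earns at least $\tfrac12\E[\max_t s_t]$. With the remaining probability, it runs a two-state Markov policy:
\begin{itemize}
\item when holding a copy, sell at $t$ whenever the (independently thinned) stationary sell rule fires, with rate $p$;
\item when empty, buy whenever the thinned buy rule fires, also with rate $p=\min(\sigma,\beta)$.
\end{itemize}
Because the two rates are equal, the holding probability mixes quickly toward $1/2$. Each step then realizes, in expectation, about half of the LP's per-step margin: the sell rule earns $\E[s\mid\text{sell}]$ while holding, and the buy rule pays $\E[b\mid\text{buy}]$ while empty. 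Independence across i.i.d.\ steps makes the state at time $t$ independent of $(b_t,s_t)$, so this computation is exact up to the initial transient. Combining the two halves gives $\tfrac12\cdot\tfrac12 = \tfrac14$ of each component of the bound, which would establish \Cref{thm:label-iid-small-informal}.

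\textbf{Main obstacle: boundary effects.} I expect the hardest part to be accounting for the start and end of the horizon.
\begin{itemize}
\item At the start, the chain begins in the holding state rather than the stationary state. This should be benign, since it only helps sales.
\item At the end, a copy bought near time $T$ may never be resold, incurring a cost $b_i$ with no matching revenue. A naive per-step accounting would then lose a full factor.
\end{itemize}
My first attempt at the end effect is to measure profit per completed cycle and argue by symmetry: each buy is followed by a sale with probability comparable to a sale being preceded by a buy. Alternatively, I would charge the last unmatched purchase against the first-sale revenue, which the algorithm never double-counts because of the initial copy. If that charging is lossy, the fallback is to forbid buying once the remaining horizon is too short for a sale to occur with constant probability. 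One would then show that the LP mass in that tail is dominated by the $\E[\max_t s_t]$ term.
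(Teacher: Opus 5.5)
Your architecture matches the paper's: with probability $\tfrac12$ sell only the initial copy using a single-item rule, and otherwise run a two-state chain with equal buy and sell rates taken from a time-averaged relaxation. The gap is in Step 2.

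Your combination step needs $\mathrm{OPT}\le \E[\max_t s_t]+T p(\bar s-\bar b)$, where $\bar s,\bar b$ are the conditional prices under the (thinned) rules. The LP you wrote does not give this, because it counts the sale of the initial copy: its inventory constraints are those of the whole process. Its averaged rates therefore satisfy $\sigma\ge\beta$, so $p=\beta$, and its value is $T\beta(\bar s-\bar b)+T(\sigma-\beta)\bar s$. The leftover term $T(\sigma-\beta)\bar s$ is a fractional single-item value with mass at most one. It can reach $\tfrac{e}{e-1}\E[\max_t s_t]$, not just $\E[\max_t s_t]$.

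Concretely, take tuples $(+\infty,M)$ w.p.\ $1/T$ and $(c,0)$ otherwise, with $c>0$ tiny.
\begin{itemize}
\item The LP optimum has $T\sigma=1$ and $\beta=0$, so $p=0$ and your chain does nothing.
\item Yet $\mathrm{OPT}\approx M$: sell at every high arrival and rebuy cheaply in between.
\item Meanwhile $\E[\max_t s_t]\approx(1-1/e)M$, so your accounting certifies only $\tfrac14(1-1/e)M$.
\end{itemize}

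There are two clean repairs.
\begin{itemize}
\item[(a)] Average only the prophet's cycle trades: pointwise remove its first sale $j_0$, and w.l.o.g.\ assume it has no unmatched final purchase. Every remaining sale is matched to an earlier purchase, so the averaged cycle-sell and buy rates coincide and no thinning is needed. Per-step exclusivity still gives $2p\le 1$, and $\mathrm{OPT}=\E[s_{j_0}]+Tp(\bar s-\bar b)$ holds exactly, with $\E[s_{j_0}]\le \E[\max_t s_t]$. Your argument then goes through unchanged.
\item[(b)] Keep the full LP, but compare the single-sale branch to the LP's excess mass rather than to $\E[\max_t s_t]$. This is what the paper does: it splits $y^*=y'+y''$ with $y'=y^*/(T\alpha_S)$. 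Independent rounding of $y'$ recovers at least $1-(1-1/T)^T\ge 1-1/e\ge\tfrac12$ of $\mathsf{LP}(y',z')$, which may exceed $\E[\max_t s_t]$.
\end{itemize}
Route (a) is a legitimate alternative: it needs no relaxation for the first sale and uses the textbook prophet inequality. Route (b) keeps one relaxation that the paper also reuses for large capacity.

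Separately, the ``main obstacle'' you flag is not an obstacle, and none of your fallbacks is needed. The per-step accounting is exact. Exclusivity gives $2p\le 1$; this matters, because for $p>\tfrac12$ the holding probability would oscillate. Starting from the held copy, the holding probability is therefore $h_t=\tfrac12+\tfrac12(1-2p)^t\ge\tfrac12$ for every $t$. The expected profit is exactly $\sum_{t=1}^T\bigl(h_{t-1}p\bar s-(1-h_{t-1})p\bar b\bigr)\ge \tfrac T2\,p(\bar s-\bar b)$.

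This expression already pays for any copy still held at time $T$. The benchmark charges every purchase per step in the same way; the paper's balanced $(y'',z'')$ likewise ends with expected inventory one. So there is nothing left to charge. Truncating purchases near $T$ would only break the invariant $h_t\ge\tfrac12$.

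The paper makes $h_t=\tfrac12$ exactly by discarding the initial copy with probability $\tfrac12$. Your choice to start in the holding state works equally well, since extra holding only adds sales and removes purchases.
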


To make use of the assumption of identical distributions, our main
idea is to compare against the optimal offline algorithm which also
allows to exchange the arrival order of requests. We then propose a
new linear programming relaxation for such an optimal algorithm, where
the LP variables depend only on the support of the identical request
distribution and are independent of the arrival order of the
requests. Finally, we provide a rounding algorithm to obtain a
$\nf{1}{4}$-fraction of the optimal LP value.


\vspace{0.5em} \noindent \textbf{Results for the large capacity case.}
We then turn to the setting with larger capacity $B$ and initial
inventory $B_0$: for this setting, we focus on the i.i.d.\ case, where
the price tuples at each timestep are drawn from the same
distribution. For this setting, we show that the competitiveness
mainly depends on the initial inventory $B_0$ when price tuples are
(possibly) asymmetric.

\begin{Theorem}[Asymmetric Large-Capacity]
  \label{thm:label-iid-large-informal}
  For the \ATP problem with identical request distributions and $B_0 \geq
  1$, we show the following results:
  \begin{enumerate}[nosep,label=(\alph*)]
  \item \label{item:asym-large-ub} There exists an online algorithm that achieves an expected
    profit of at least $1 - O(\log B_0/\sqrt{B_0})$ times the expected
    profit of the optimal algorithm. 
  \item \label{item:asym-large-lb} On the other hand, no online
    algorithm can obtain a competitive ratio better than
    $1 - \Omega(1/\sqrt{B_0})$.
  \end{enumerate}
\end{Theorem}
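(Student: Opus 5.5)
For part (a), we plan to work with an LP relaxation of the prophet adapted to the i.i.d.\ setting, and to round it with a simple threshold-style policy whose inventory process is a random walk. Let $\D$ have (w.l.o.g.\ finite) support of tuples $(b,s)$. Write $x_{b}(v)$ for the expected number of times the prophet buys at a request of type $v$, and $y_{s}(v)$ for the expected number of times it sells at type $v$. Since the prophet acts at most once per step, we have $x_b(v)+y_s(v)\le T\cdot \Pr[v]$. Since the inventory starts at $B_0$ and stays in $[0,B]$, we get $\sum_v y_s(v)\le \sum_v x_b(v)+B_0$ and $\sum_v x_b(v)\le \sum_v y_s(v) + (B-B_0)$. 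The prophet's expected profit is at most $\sum_v (s_v y_s(v) - b_v x_b(v))$, so the LP optimum $\lp$ upper bounds $\E[\mathrm{OPT}]$. The online algorithm then acts i.i.d.\ per step: on type $v$ it buys with probability $\alpha_v=x_b(v)/(T\Pr[v])$ and sells with probability $\beta_v=y_s(v)/(T\Pr[v])$, whenever feasible. Each step thus buys with probability $p=\sum_v x_b(v)/T$ and sells with probability $q=\sum_v y_s(v)/T$.

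The inventory is then a lazy random walk started at $B_0$ with drift $p-q\ge -B_0/T$ and absorbing or reflecting barriers at $0$ and $B$. A planned action is lost only when the walk sits at a barrier, so the loss is controlled by the expected number of steps spent at $0$ or $B$. To make this small, we shrink the LP solution: we scale sells by $(1-\epsilon)$ and target a per-step drift $\ge -(1-\epsilon)B_0/T$, so that the expected final inventory stays around $\epsilon B_0$. We want to use only the lower barrier $0$ as the binding one. Because the algorithm never needs more capacity than the LP used, we can argue that $B$ is effectively hit rarely, or simply cap buys and bound that loss symmetrically. The number of steps at $0$ is then bounded via a Doob/Bernstein maximal inequality on the martingale part: hitting $0$ requires a deviation of $\epsilon B_0$ over at most $T$ steps with variance at most $\min(T,\,\text{number of active steps})$.

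The key step, and the main obstacle, is relating the variance to $B_0$ rather than $T$. When $T\gg B_0$ the number of active steps can be huge, and a naive deviation bound gives $\sqrt{T}$, not $\sqrt{B_0}$. To handle this, we will split profit into \emph{sell-only surplus} and \emph{matched buy/sell pairs}. Matched pairs, where buying and selling occur at similar rates, correspond to an almost-symmetric walk. For these we instead argue that a buffer of $\epsilon B_0$ units keeps the walk positive except during a fraction of time. Positive-profit round trips can be reordered within the LP, and scaling both rates lets us lose only an $O(\epsilon)$ fraction. I would first try the choice $\epsilon=\Theta(\log B_0/\sqrt{B_0})$, together with a concentration argument over epochs of length about $B_0$ steps of active trading. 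Losing $e^{-\Omega(\epsilon^2 B_0)}=1/\mathrm{poly}(B_0)$ per epoch, and charging each epoch's profit to the LP, yields the $1-O(\log B_0/\sqrt{B_0})$ bound. If epochs are unbounded in number, I would normalize profit per epoch and use that loss events require a negative excursion of length $\epsilon B_0$.

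For part (b), I would use the instance where each step is, with probability $1/2$, a sell opportunity at price $1$ ($b=\infty$, $s=1$), and otherwise a buy opportunity at price $0$ or a dead step, with $T$ of order $B_0$, arranged so that the prophet sells roughly $B_0+\sqrt{B_0}$ units by matching realized fluctuations. The online algorithm cannot anticipate the $\Theta(\sqrt{B_0})$ fluctuation in the number of sell arrivals relative to its stock. A standard argument then shows it loses $\Omega(\sqrt{B_0})$ expected profit, either from unsold inventory or from a stockout, against a benchmark of $\Theta(B_0)$.
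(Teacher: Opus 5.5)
For part (a), your LP is the paper's i.i.d.\ LP and independent rounding is the right base, but the plan as written would fail.

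\emph{No clearing phase.} Plain independent rounding does not leave final inventory near $\epsilon B_0$. Suppose buys and sells are nearly balanced and $\Gamma$ (the LP's expected number of sales; note $\Gamma\ge B_0$) satisfies $\Gamma\gg B_0^2$. Then the inventory is a reflected walk ending at $\Theta(\min\{B,\sqrt{\Gamma}\})$ in expectation. Every leftover unit forfeits the full price $\bar s$, and this can only be charged to the $B_0\bar s$ part of the benchmark. With a thin margin $\bar s-\bar b$ that is a constant fraction of the LP or more. Grouping steps into epochs does not help: a non-adaptive walk has no restoring force and deviates by $\Theta(\sqrt{\Gamma})$ however the steps are grouped. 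The paper's fix is a sell-only final phase: it stops buying after $\tau=T-\lceil 13\log B\sqrt{T/\alpha_S}\rceil$, where $\alpha_S=\Gamma/T$. This forgoes only an $O(\log B/\sqrt{\Gamma})$ fraction of the buys. The $\ge 13\sqrt{\Gamma}\log B$ expected sell attempts afterwards clear the $O(\sqrt{\Gamma}\log B)$ excess of $X_\tau$ with high probability.

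\emph{Wrong loss accounting.} Charging every step spent at a barrier at full price is the wrong accounting. Conditional on attempting an action, the realized price is independent of the inventory. So the expected profit is $\bar s\,\E[N^-_T]-\bar b\,\E[N^+_T]$, and $N^-_T=B_0+N^+_T-X_T$ rewrites it as $(B_0-\E[X_T])\bar s+\E[N^+_T](\bar s-\bar b)$. Stockouts at $0$ therefore cost nothing by themselves. The only losses are leftover inventory, charged to $B_0\bar s$, and forgone or blocked buys, each costing only the margin and charged to $\Gamma(\bar s-\bar b)$. This is what dissolves your ``$\sqrt{T}$ versus $\sqrt{B_0}$'' obstacle: the fluctuation-driven losses of order $\sqrt{\Gamma}\log B$ are compared against $\Gamma$, and $\Gamma\ge B_0$.

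\emph{Upper barrier.} You cannot wave it away: the LP constrains only the final inventory, so once $\Gamma\gg B^2$ the walk is blocked at $B$ on roughly $\Gamma/B$ steps. The paper bounds the blocked buys by $\Gamma/B+13\sqrt{\Gamma}\log B$, using a coupling with a symmetric lazy walk on $[0,B]$ and first- and second-moment drift identities. It then runs the algorithm with capacity lowered to $B_0$, which is legitimate because the optimal LP solution ends at inventory $0$ and never uses the upper constraint. This gives $\log B_0$ in place of $\log B$.

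For part (b), your instance gives no gap. With every sell price equal to $1$ and every buy price equal to $0$, the greedy rule (sell whenever inventory is positive, buy whenever below capacity) is optimal even offline. By induction, greedy's number of sales and its sales-plus-inventory both dominate those of any other policy at every step, so the online algorithm loses nothing to the prophet. A lower bound needs a genuine tradeoff, such as heterogeneous sell prices. The paper sets $b\equiv+\infty$, which makes the problem exactly the $B_0$-unit prophet inequality with i.i.d.\ values, and imports its known $1-\Omega(1/\sqrt{B_0})$ lower bound.
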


We also show that the dependency on $B_0$ in
\Cref{thm:label-iid-large-informal} can be further improved to the
maximum capacity $B$ when the price tuples are symmetric.

\begin{Theorem}[Symmetric Large-Capacity]
  \label{thm:unlabel-iid-large-informal}
  For the \ATP problem with identical request distributions (with
  \emph{symmetric} prices) and $B_0 \geq
  1$, we show the following results:
  \begin{enumerate}[nosep,label=(\alph*)]
  \item \label{item:sym-large-ub} There exists an online algorithm that achieves an expected
    profit of at least $1 - O(\log B/\sqrt{B})$ times the expected
    profit of the optimal algorithm. 
  \item \label{item:sym-large-lb} On the other hand, no online
    algorithm can obtain a competitive ratio better than
    $1 - \Omega(1/\sqrt{B})$.
  \end{enumerate}
\end{Theorem}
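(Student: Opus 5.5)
Part~\ref{item:sym-large-lb} should follow from the lower-bound instance behind Theorem~\ref{thm:label-iid-large-informal}\ref{item:asym-large-lb}, adapted to symmetric prices; part~\ref{item:sym-large-ub} needs a new algorithm whose loss scales with $B$ rather than $B_0$. The reason symmetry helps is that in the symmetric model the offline optimum is easy to describe. It is a ``buy low, sell high'' inventory process: the prophet holds stock only as a buffer between low-price and high-price steps. The key structural point is that even if $B_0$ is small, an online trader facing symmetric prices can cheaply \emph{rebuild} inventory to about $B/2$ by buying at low prices. In the asymmetric case the lower bound presumably uses buy-only/sell-only requests that make the early inventory irreplaceable; here that is impossible, because every step that offers a high selling price is also a step at which one cannot buy cheaply, and conversely.

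\textbf{Upper bound plan.} First, relax the prophet via a fluid/LP benchmark. With i.i.d.\ prices $p \sim \D$, choose two thresholds $\tau_b \le \tau_s$: buy when $p \le \tau_b$, sell when $p \ge \tau_s$. Under the i.i.d.\ assumption, the prophet's expected profit is at most $T\cdot\max\{\E[p\,\one(p\ge\tau_s)]-\E[p\,\one(p\le\tau_b)]\}$ subject to the balance constraint $\Pr[p\le \tau_b]\approx\Pr[p\ge\tau_s]$, plus an additive $O(B\cdot \text{range})$ boundary term from initial and final inventory. I would formalize this as the LP used for the asymmetric large-capacity result in the previous section, specialized to $b=s$.

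The online algorithm then runs the threshold policy, targeting a drift-free random walk for the inventory level, started near $B/2$. Concretely:
\begin{enumerate}[nosep]
\item Warm up from $B_0 \geq 1$ by buying at cheap steps until the inventory reaches $B/2$. The cost of this phase is charged against the boundary term.
\item Run the balanced buy/sell thresholds, perturbed by $O(\log B/\sqrt B)$ in probability mass so that the walk has a slight drift toward the middle.
\end{enumerate}
A reflected random walk with $\pm1$ steps and step probability $q$ leaves a band of width $B$ at the boundaries only an $O(1/\sqrt{B})$-ish fraction of the time, up to logarithmic factors from the drift correction. Each boundary hit forfeits at most one trade's profit. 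This gives profit at least $(1-O(\log B/\sqrt B))$ times the LP value.

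\textbf{Main obstacle.} The hardest step is controlling short horizons. When $T \ll B^2$, the walk never hits the boundaries, but the warm-up cost and the boundary term are no longer negligible relative to $T$. The boundary term is what the prophet gains by liquidating initial stock. I would handle this by splitting into two regimes:
\begin{enumerate}[nosep]
\item For $T \le \sqrt B$-scale horizons, compare directly. Here the prophet is essentially a greedy sell/buy, and the online algorithm with $B_0\ge1$ can mimic it using a per-step prophet-inequality argument at large capacity: with $B$ slots, selection losses are $O(\log B/\sqrt B)$ by the standard $k$-unit prophet inequality bound.
\item For longer horizons, use the random-walk analysis above.
\end{enumerate}

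\textbf{Lower bound plan.} Use a two-point symmetric distribution, price $0$ or $1$ with probability $1/2$ each, with $T=\Theta(B^2)$ steps and $B_0=B/2$. The prophet trades on essentially every profitable alternation, never constrained beyond what hindsight allows. Any online policy is a controlled walk that must hit a boundary with constant probability over $B^2$ steps. Each hit loses $\Omega(1)$, and there are $\Omega(\sqrt{T})=\Omega(B)$ forced losses against an OPT of order $T=B^2$, giving a ratio of $1-\Omega(1/B)$. To get the stated $1-\Omega(1/\sqrt{B})$, take $T=\Theta(B)$ instead, so that the fluctuation $\sqrt{T}$ is comparable to the inventory and a $\Theta(1/\sqrt B)$ fraction of opportunities is lost. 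I would calibrate this exactly as in Theorem~\ref{thm:label-iid-large-informal}\ref{item:asym-large-lb}.
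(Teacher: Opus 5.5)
\textbf{Part (b).} Your lower-bound instance gives no separation. With symmetric prices in $\{0,1\}$, the rule ``buy at $0$ unless full, sell at $1$ unless empty'' is an online policy, and it is also offline optimal. An exchange argument shows this: inserting a skipped free buy or a skipped sale never lowers profit. So the competitive ratio on that instance is exactly $1$ for every $T$ and $B_0$. The error is counting boundary hits as regret. The prophet faces the same window $[0,B]$ and, on this instance, follows the very same walk.

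Nor can you calibrate ``as in the asymmetric case''. That bound is a $B_0$-unit prophet inequality built from $b=+\infty$ requests, which you yourself note are unavailable here, and it scales with $B_0$, not $B$. A symmetric lower bound has to come from missing information. The paper uses prices $1$, $0$, $1/2$ with probabilities $0.75$, $0.25-1/\sqrt B$, $1/\sqrt B$, and $B=1.5T=3B_0$. It exhibits two constant-probability futures for the second half that look identical at time $T/2$. In one there are few low prices, so copies bought at $1/2$ can be resold at $1$. In the other there are enough $0$'s that buying at $1/2$ is wasted. Since $\Theta(\sqrt B)$ price-$1/2$ steps occur in the first half, either buying or skipping them costs $\Omega(\sqrt B)$ against an optimum of $\Theta(B)$.

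\textbf{Part (a).} You are missing the one observation that makes $B$ rather than $B_0$ appear. In \ref{program:iid} with $b^{(k)}=s^{(k)}$, any slack in $y_k+z_k\le p^{(k)}$ can be filled by raising $y_k$ and $z_k$ equally. This changes neither the objective nor the net-inventory constraint. So, without loss of generality, $\alpha_S+\alpha_B=1$ and $\Gamma=(B_0+T)/2\ge B/2$.

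The generic i.i.d.\ algorithm does the rest. It uses independent rounding of the LP plus a sell-only tail of $\lceil 13\log B\sqrt{T/\alpha_S}\,\rceil$ steps to liquidate inventory. It loses only $O(\log B/\sqrt{\Gamma}+1/B)$, which becomes $O(\log B/\sqrt B)$ once $\Gamma\ge B/2$. What symmetry gives you is many zero-profit churn trades, not an inventory buffer.

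Your substitute steps do not hold up:
\begin{itemize}
\item A buy-only warm-up to $B/2$ can consume essentially the whole horizon. With prices $0$ or $1$ each w.p.\ $1/2$, $B_0=1$, $T=B$, it earns $O(\sqrt B)$ against an optimum of $\Theta(B)$.
\item The warm-up cost cannot be charged to an additive boundary term, since that term need not be $o(\mathrm{OPT})$.
\item The $k$-unit prophet inequality applies with $k=B_0$, the number of copies actually held. It gives only $1-O(1/\sqrt{B_0})$, the asymmetric rate you are trying to beat.
\item You never say how leftover inventory is cleared. The LP value includes the revenue from the $B_0$ initial copies, so unsold stock at time $T$ is a multiplicative loss.
\end{itemize}
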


To prove the algorithmic results of
\Cref{thm:label-iid-large-informal} and
\Cref{thm:unlabel-iid-large-informal}, we adopt the same LP relaxation
as in \Cref{thm:label-iid-small-informal}. Given an optimal LP
solution, we design a unified rounding algorithm that applies to both
the general setting and the symmetric setting. The rounding procedure
is based on a standard independent rounding scheme, augmented with an
additional mechanism that ensures nearly all inventory is cleared by
the end of the trading process.

\subsection{Related Works}

\vspace{0.5em} \noindent \textbf{Online Trading and Trading Prophets.}
Our work builds directly upon the recent framework of \textit{Trading
  Prophets}, introduced by \cite{CCDHOS-EC23}. They initiated the
study of an online trader who buys and sells a single asset to
maximize profit against an offline prophet, establishing a tight
2-competitive ratio for i.i.d.\ prices and investigating the random
order model. Subsequent work has expanded this model in several
directions: \cite{rajput2025trading} generalized the setting to
multiple stocks and matroid constraints, while \cite{azar2026trading}
showed that while competitiveness is impossible with zero initial
capital, constant competitive ratios are achievable if the trader is
endowed with initial inventory. Our work strictly generalizes these
results by introducing the Asymmetric Trading Prophet problem, which
captures distinct buy and sell prices. In the non-stochastic setting,
\cite{azar2025competitive} studied a fully adversarial model for
bundle trading with bid-ask spreads, and obtained logarithmic
competitive ratios with resource augmentation.

\vspace{0.5em} \noindent \textbf{Prophet Inequalities and Optimal Stopping.}
The \ATP problem is rooted in the extensive literature on prophet
inequalities. The classic result by \cite{krengel1978semimartingales}
established that a single threshold achieves a $\nf12$-approximation
to the prophet's expected maximum, a ratio shown to be tight by
\cite{samuel1984comparison}. Given the vast literature that has
followed, starting from~\cite{hajiaghayi2007automated,chawla2010multi,Alaei-SICOMP14},
we refer the reader to the survey by \cite{correa2019recent} and the
textbook by \cite{roughgarden2021beyond} for a comprehensive
overview. While classic prophet inequalities focus on one-way
selection (stopping to maximize value), our work—and the broader
trading prophet framework—incorporates the dynamic inventory
management aspect, where decisions are reversible and the objective
allows for mixed-sign rewards. Unlike standard multi-unit prophet
inequalities where capacity is static, our capacity constraint $B$
acts as a buffer for temporary inventory, creating a dependency
structure on the entire suffix of arrivals that is absent in the
classic setting.

\vspace{0.5em} \noindent \textbf{Online Contention Resolution Schemes.}
Our rounding techniques rely on the framework of Online Contention
Resolution Schemes (OCRS), formally introduced by \cite{FSZ-SODA16}
to capture the loss inherent in converting fractional solutions to
integral online policies, extending the offline contention resolution
schemes from \cite{ChekuriVZ14}. This framework has become a standard
tool in online stochastic optimization, with applications ranging from
submodular maximization~\cite{lee2018online} to stochastic
matching~\cite{adamczyk2018improved}. Recent work has further expanded
the settings of prophet inequalities and online contention resolution
to handle correlation constraints, e.g.,
\cite{immorlica2020prophet,CaragiannisGLW21,ChekuriL21,DughmiKP24,gupta2024pairwise,LivanosP024,MaurasMR24,FeldmanMMR25}.


\subsection{Paper Organization}

In \Cref{sec:unit-noniid}, we present a constant-competitive algorithm
for the \ATP problem with unit capacity and \emph{non-identical} requests,
and prove \Cref{thm:label-noniid-small-informal}. In
\Cref{sec:iid-upper}, we study the \ATP problem with \emph{identical} request
distributions and provide proofs of
\Cref{thm:label-iid-small-informal} as well as the positive results
of 
\Cref{thm:label-iid-large-informal} and \Cref{thm:unlabel-iid-large-informal}. Finally,
in \Cref{sec:lower}, we establish hardness results for the \ATP
problem and provide proofs of \Cref{thm:no-initial}, as well as the
negative results from
\Cref{thm:label-iid-large-informal} and \Cref{thm:unlabel-iid-large-informal}.


%% file: single-noniid.tex
\section{Non-Identical Requests with Unit Capacity}
\label{sec:unit-noniid}

In this section, we analyze the unit-capacity case ($B = 1$) with
non-identical request distributions, and prove
\Cref{thm:label-noniid-small-informal}. Recall that when we have no
initial copy, no competitiveness is
achievable even when distributions are identical; given this,
we focus on the case when $B_0 = 1$.

\vspace{0.5em} \noindent \textbf{Bernoulli Assumption.} For simplicity of analysis, in this
section, we only present algorithms with the following simplifying
assumption: the support size of each request distribution $\D_t$ is
$2$, such that with probability $p_t$, the arriving price tuple is
$(b_t, s_t)$, where the decision maker knows parameters
$p_t, b_t, s_t$ in advance; with probability $1 - p_t$, the arriving
price tuple is $(+\infty, 0)$, i.e., skipping would be the only
available action for the decision maker. Furthermore, to simplify notations, we define $\D_0$
to be a deterministic distribution with $p_0 = 1$, and
$(b_0, s_0) = (0, 0)$, i.e., we use $\D_0$ to represent the assumption
of having $B_0 = 1$ initial inventory.

This simplified Bernoulli case captures most ideas behind our
algorithm, and we provide a discussion in \Cref{sec:unit-noniid-generalize} to show how to
generalize our algorithm for this Bernoulli case to arbitrary request
distributions with finite support.

\subsection{Main Ideas behind Proving \Cref{thm:label-noniid-small-informal}}

We start with giving a high-level overview of our proof. To prove
\Cref{thm:label-noniid-small-informal}, our algorithm consists of the
following three main ingredients: (a) writing a linear-programming
relaxation, (b) decomposition of the time horizon into disjoint
intervals, such that it suffices to ``make a limited number of
trades'' in each interval, and (c) showing how to efficiently
implement the limited number of trades. We now elaborate on each of
these ingredients.

\subsubsection{The Linear-Programming Relaxation}

In stepping away from the symmetric prices case where $b_t = s_t$, an immediate
challenge we face is that the structure of the optimal offline
algorithm is much more complicated. Indeed, when $b_t = s_t$, the
optimal offline optimum is \emph{localized}: by properly designing the
algorithm, the optimal decision in the current step only depends on
the current price, and the price for the previous and next request (see Algorithm 1 in \cite{azar2026trading}). However, in the
general case, the optimal decision in one step may depend on price
tuple arriving far after the current request. This makes it difficult
for us to directly compare to the profit achieved by the optimal
offline algorithm. Given this conceptual difficulty, we introduce the
following LP relaxation:
\begin{alignat}{2}
  \text{maximize} \qquad 
\sum_{i = 0}^{T-1} \sum_{j = i+1}^T & x_{i, j} \cdot (s_j - b_i) && \tag{$\text{LP}_{\textsc{Relax}}$} \label{program:relax} \\
  \text{s.t.}\quad
  \sum_{j = 0}^{i-1} x_{j, i} + \sum_{j = i+1}^T x_{i, j}
    &\leq p_i &\qquad& \forall i \in [T] \cup \{0\},  \label{eq:LP-marginals} \\
    \sum_{i = 0}^{t-1} \sum_{j = t}^T x_{i, j} &\leq 1 && \forall t
                                                          \in
                                                          [T], \label{eq:lp-capacity}
  \\
  x_{ij} &\geq 0. && \notag
\end{alignat}

Intuitively, the LP variable $x_{i,
  j}$ represents the probability that the algorithm buys a copy from
request $i$, and sells that copy to request
$j$. The constraints~(\ref{eq:LP-marginals}) ensure that the total probability of
buying or selling at time $i$ cannot exceed
$p_t$; the constraints~(\ref{eq:lp-capacity}) ensure that the algorithm can
hold at most one item when going from timestep $t-1$ to timestep $t$.

Given a feasible solution $x$ to \ref{program:relax}, for each
timestep $i \in [T] \cup \{0\}$, define
\[
  q^b_i(x) ~:=~ \sum_{j = i+1}^T x_{i,j} \qquad \text{and} \qquad q^s_i(x) ~:=~ \sum_{j = 0}^{i-1} x_{j, i}
\]
to be the probability that we buy from (and sell to, respectively)
request $i$ when following solution $x$.  Observe that
$q^s_0(x) = q^b_T(x) = 0$. Given this notation, the
constraints~(\ref{eq:LP-marginals}) can be compactly rewritten as
\begin{gather}
  \forall i \in [T] \cup \{0\}, q^s_i(x) + q^b_i(x) ~\leq~ p_i. \label{eq:compact-packing}
\end{gather}
If $\lp(x)$ is the objective value of $x$, then the definition of
$q^b_i(x)$ and $q^s_i(x)$ allows us to rewrite this objective value as 
\begin{gather}
  \lp(x) ~:=~ \sum_{i = 0}^{T-1} \sum_{j = i+1}^T x_{i, j} \cdot (s_j
  - b_i) ~=~ \sum_{i = 1}^T q^s_i(x) \cdot s_i - \sum_{i = 0}^{T-1}
  q^b_i(x) \cdot b_i. \label{eq:lp-value}
\end{gather}

\begin{Lemma}[Relaxation]
  \label{lma:lprelax}
  The linear program \ref{program:relax} is a feasible relaxation of
  the single-capacity \GTP problem. In other words, if $x^*$ is the
  optimal solution to \ref{program:relax}, then $\lp(x^*)$ is at least
  the expected profit of the optimal offline algorithm for the \GTP problem.
\end{Lemma}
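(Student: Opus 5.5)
The plan is the standard one for ex-ante relaxations: turn the optimal offline algorithm into a feasible LP solution whose objective equals its expected profit. Fix the optimal offline algorithm $\mathrm{OPT}$, which sees the whole realized sequence. Because $B=1$, at every moment it holds either $0$ or $1$ copy, and the $B_0=1$ initial copy is modeled as a purchase at time $0$ at price $b_0=0$ (recall $p_0=1$). On any realization, the trades of $\mathrm{OPT}$ therefore form a sequence of disjoint buy--sell pairs $(i_1,j_1),(i_2,j_2),\dots$ with $0\le i_1<j_1\le i_2<j_2\le\cdots$. Each copy bought at time $i_k$ is sold at $j_k$. Strict inequality $j_k<i_{k+1}$ holds because only one action is allowed per step.

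One subtlety is a copy that is never sold. I will assume without loss of generality that $\mathrm{OPT}$ never buys a copy it does not later sell: dropping such a purchase only increases profit, since $b_t\ge 0$. Leaving the initial copy unsold contributes $0$ either way, so it creates no pair.

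Define $x_{i,j}$ as the probability, over the random realization, that $\mathrm{OPT}$ buys at $i$ and sells that same copy at $j$.

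\textbf{Objective.} By linearity of expectation, the expected profit of $\mathrm{OPT}$ equals $\sum_{i<j}x_{i,j}(s_j-b_i)$ exactly.

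\textbf{Constraint~\eqref{eq:LP-marginals}.} Under the Bernoulli assumption, when request $i$ is inactive the tuple is $(+\infty,0)$. Then $\mathrm{OPT}$ never buys at $i$, and may assume it never sells at $i$, since selling at price $0$ can be replaced by skipping with no loss. Also, the events ``$i$ is a buy endpoint of some pair'' and ``$i$ is a sell endpoint of some pair'' are disjoint. So $q^s_i(x)+q^b_i(x)$ is the probability that $i$ is used at all. This is at most the probability that $i$ is active, namely $p_i$.

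\textbf{Constraint~\eqref{eq:lp-capacity}.} For each $t\in[T]$, the events $\{$pair $(i,j)$ used with $i\le t-1<t\le j\}$ are disjoint across pairs. Each such event says the copy is held in inventory between steps $t-1$ and $t$, and the capacity is one. Hence their probabilities sum to at most $1$. Nonnegativity of $x$ is immediate.

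Thus $x$ is feasible and $\lp(x^*)\ge\lp(x)=\E[\mathrm{OPT}]$. The only delicate step is the cleanup normalization of $\mathrm{OPT}$: never buying a copy it won't sell, and never selling at an inactive request. Both are justified by nonnegativity of prices. This is needed so that the marginal constraint uses $p_i$ rather than $1$. If prices could be negative, I would instead keep $p_i$ as the probability of the active tuple and note that inactive tuples allow only skipping by definition, which gives the same bound directly.
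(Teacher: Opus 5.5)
Your proof is correct and is the same as the paper's: you set $x_{i,j}$ to the probability that the offline optimum buys at $i$ and sells that same copy at $j$, and then read off the objective, the per-request marginal constraint, and the one-copy capacity constraint, exactly as the paper does. Your normalizations (never buy a copy you don't later sell; never sell at an inactive $(+\infty,0)$ request) are extra care that the paper sidesteps by declaring skipping the only available action at inactive requests. One small point: turning a zero-price sale into a skip also turns the next purchase $i'$ into a skip, with the held copy sold at the next pair's sale, and this is harmless because it only saves $b_{i'}\ge 0$.
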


Given \Cref{lma:lprelax}, it now suffices to provide a online
algorithm whose profit is at least a constant fraction of $\lp(x^*)$.

\subsubsection{Trading Within Intervals}

Our second idea is to decompose a feasible LP solution $x$ into
intervals. At a high-level, this idea is motivated by a key property
of the algorithm in \cite{CCDHOS-EC23} for the special case of
$b_t = s_t$ with i.i.d.\ requests. A key invariant of their
median-price algorithm is this: before and after the arrival of each
request, the probability that the algorithm holds an item in hand is
\emph{exactly} $0.5$. We now ask: \emph{can we obtain a similar
  structure for the general case?}

We show that such a structure is achievable for the general case if
the feasible LP solution $x$ can be properly decomposed into
intervals. Specifically, we introduce the following definition of an
\emph{$m$-trades solution} and an \emph{$m$-trades decomposition}: (In
this section, we consider settings where $m$ is a small constant.)

\begin{Definition}
  \label{def:mtrades}
  Given a feasible solution $x$ for the linear program
  \ref{program:relax}, an interval
  $[\ell, r] \subseteq [1, T]$,\footnote{Observe that $\ell$
    cannot equal $0$ in this definition.} and a value $m \geq 1$, we
  say that the solution $x$ is an \emph{$m$-trades solution} for
  interval $[\ell, r]$ if:
  \begin{enumerate}[label=(\roman*)]
  \item \label{item:mtrades-i} every $(i, j)$ such that $x_{i,j} > 0$ also satisfy $\ell \leq
    i < j \leq r$, and 
  \item \label{item:mtrades-ii} $\sum_{(i, j): \ell \leq i < j \leq r} x_{i, j} \leq m$.
  \end{enumerate}

  Given a feasible solution $x$ for \ref{program:relax}, we say that
  the solution $x$ admits an \emph{$m$-trades decomposition} if we can
  find disjoint intervals $[\ell_1, r_1], \cdots, [\ell_D, r_D]$ that
  satisfy:
  \begin{enumerate}[label=(\alph*)]
  \item \label{item:mtrades-a} for each $d \in [D]$, the induced solution
    $\{x_{i, j} \cdot \one\big[i, j \in [\ell_d, r_d]\big]\}_{0 \leq i
      < j \leq T}$ is an $m$-trades solution on $[\ell_d, r_d]$, and 
  \item \label{item:mtrades-b} $\{[\ell_d, r_d]\}_{d \in [D]}$ is a feasible decomposition of
    $x$, i.e., for every $(i, j): x_{i, j} > 0$, the indices $i$ and
    $j$ must belong to the same interval $[\ell_d, r_d]$.
  \end{enumerate}
\end{Definition}

In other words, a feasible solution $x$ for \ref{program:relax} admits
an $m$-trades decomposition if it can be decomposed into disjoint intervals,
such that the trades represented by $x$ only happen within those
intervals, and at most $m$ trades happen within each interval. Now, to
round the solution $x$, it suffices to consider each interval
independently.
Since the expected number of trades within an interval is small, it
allows us to give an algorithm which extracts a constant fraction of
the expected profit within the interval. Specifically, we have the
following lemma:

\begin{Lemma}[$m$-Trades Lemma]
  \label{lma:one-interval}
  Let $x$ be an $m$-trades solution for interval $[\ell, r]$. Then
  there exists an online algorithm for the \GTP instance, such that:
  \begin{enumerate}[label =(\roman*)]
  \item if the algorithm holds a copy with probability exactly $0.5$
    before request $\ell$ arrives, then the algorithm also holds a
    copy with probability exactly $0.5$ after servicing request
    $r$. Moreover,
  \item the expected profit obtained by the algorithm is exactly
    $\nicefrac{1}{4m} \cdot \lp(x)$.
  \end{enumerate}
\end{Lemma}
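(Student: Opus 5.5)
The plan is to exploit the symmetry between the two possible states at the start of the interval, so that the ``holds a copy w.p.\ exactly $0.5$'' invariant is preserved exactly while profit is extracted. Let $h \in \{0,1\}$ be the indicator that the algorithm holds a copy just before request $\ell$. Since requests in $[\ell,r]$ are independent of everything before $\ell$, $h$ is a fair coin independent of the interval's realizations. Inside the interval the algorithm first samples, using private randomness, at most one \emph{candidate trade} $(i,j)$ with $\ell \le i<j\le r$, taking $(i,j)$ with probability $\pi_{ij}$ and none otherwise. By property~\ref{item:mtrades-ii} of \Cref{def:mtrades}, $\sum x_{i,j}\le m$, so I would first try $\pi_{ij}=x_{i,j}/m$. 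It then defines two independent ``activation'' events, $A_i$ (request $i$ active and a private coin succeeds) and $A_j$ (similarly for $j$), with common probability $P(A_i)=P(A_j)=a_{ij}\le \min(p_i,p_j)$; this is possible because constraint~(\ref{eq:LP-marginals}) gives $p_i,p_j\ge x_{i,j}$.

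The mirrored rule is as follows. If $h=0$, buy at $i$ on $A_i$, and then sell at $j$ on $A_j$ if still holding. If $h=1$, do nothing at $i$, and sell at $j$ on $A_j\wedge\neg A_i$. Every other request in the interval is skipped, and capacity $1$ is never violated. The ending state is $1$ exactly on $\{h=0,A_i,\neg A_j\}\cup\{h=1,\neg(A_j\wedge\neg A_i)\}$, so
\[
\tfrac12\bigl(a(1-a)+1-a(1-a)\bigr)=\tfrac12 .
\]
This gives part (i) for every choice of $(\pi,a)$. The expected profit conditional on choosing $(i,j)$ is
\[
\tfrac12\bigl[a^2(s_j-b_i)-a(1-a)b_i+a(1-a)s_j\bigr]=\tfrac a2\,(s_j-b_i),
\]
so the total expected profit is $\sum_{i<j}\pi_{ij}\,\tfrac{a_{ij}}2\,(s_j-b_i)$. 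Part (ii) therefore reduces to choosing $\pi,a$ with $\pi_{ij}a_{ij}=x_{i,j}/(2m)$, $\sum\pi_{ij}\le 1$ and $a_{ij}\le\min(p_i,p_j)$.

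I expect this last feasibility condition to be the main obstacle. With a single sampled candidate, $\pi_{ij}a_{ij}=x_{i,j}/(2m)$ together with $\sum\pi\le1$ forces $a_{ij}$ to be of constant size, and $p_i$ may be much smaller. The first fix I would try is to make the candidate selection adaptive rather than fixed in advance, in the spirit of an OCRS \cite{FSZ-SODA16}. Scanning the interval online, while no trade is ``in progress'', each active request $i$ opens a trade with conditional probability proportional to $q^b_i(x)/p_i$, and then chooses its partner $j$ with probability $x_{i,j}/q^b_i(x)$. The mirrored state-$1$ branch is driven by the \emph{same} coins, so that the two branches stay in lockstep. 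The key estimate is that the probability the process is still ``idle'' when reaching $i$ is at least a constant, because the expected number of opened trades is at most $\sum x_{i,j}\le m$ (scaled by $1/m$). Scaling the opening rates by the exact inverse of this idle probability, computable since all $p_t,b_t,s_t$ are known, should make the effective rate of trade $(i,j)$ exactly $x_{i,j}/(2m)$. The same symmetry computation as above then yields exactly $\tfrac1{4m}\lp(x)$ via \eqref{eq:lp-value}. The remaining care is in checking that these rescaled probabilities stay at most $1$, which is where the factor $\nicefrac14$ (rather than $\nicefrac12$) is spent.
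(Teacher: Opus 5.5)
There is a genuine gap. The one scheme you actually analyze, a single pre-sampled candidate with mirrored branches, cannot satisfy $\pi_{ij}a_{ij}=x_{i,j}/(2m)$, $\sum\pi_{ij}\le 1$ and $a_{ij}\le\min(p_i,p_j)$, as you note yourself. The adaptive fix does not rescue it, because your symmetry computation does not carry over to it.

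In the fix, a trade is opened only when request $i$ is active and the coin succeeds, so opening \emph{is} $A_i$. The event on which your state-$1$ branch sells, ``$(i,j)$ selected and $\neg A_i$'', is then empty. That branch never sells, and the $\frac12$ invariant fails. More generally, suppose both branches open $(i,j)$ together and branch~$1$ sells only at the partner $j$. Branch~$0$ still holds after $j$ with conditional probability $1-a_j\ge 1-p_j$. For the mirrored computation to balance, branch~$1$ must sell at $j$ with conditional probability $1-a_j\le p_j$, which forces $p_j\ge\frac12$.

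If instead $A_i$ is a separate coin after the opening, so that $\neg A_i$ is possible, you again need $\pi_{ij}a_{ij}=x_{i,j}/(2m)$ with $a_{ij}\le\min(p_i,p_j)$, for candidates that block their whole window $[i,j]$. This is infeasible. On $[1,2n]$ take $x_{k,n+k}=p_k=p_{n+k}=\frac1n$ for $k\in[n]$. This is a feasible $1$-trades solution, and every two windows $[k,n+k]$ intersect, so at most one candidate is ever opened. Yet each candidate needs opening probability at least $\frac{x_{k,n+k}}{2\,a_{k,n+k}}\ge\frac12$.

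The missing idea is that the pairs $(i,j)$ never need to be realized. By \eqref{eq:lp-value}, $\lp(x)$ depends only on the marginals $q^b_i(x)$ and $q^s_i(x)$. Property~\ref{item:mtrades-i} of \Cref{def:mtrades} gives $\sum_{i=\ell}^r q^b_i(x)=\sum_{i=\ell}^r q^s_i(x)=:Q\le m$.

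The paper therefore lets the branch holding a copy make at most one sale, and the empty branch at most one purchase. Each uses a $\frac12$-OCRS on the marginals scaled by $\frac1m$. When request $i$ is active and nothing has been sold yet, the algorithm sells with probability $\frac{q^s_i(x)}{m\,p_i}\cdot\frac{1/2}{1-\frac1{2m}\sum_{j=\ell}^{i-1}q^s_j(x)}\le 1$. Hence the sale happens at $i$ with probability exactly $\frac{q^s_i(x)}{2m}$, and purchases are handled symmetrically.

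The final holding probability is then $\frac12\bigl(1-\frac{Q}{2m}\bigr)+\frac12\cdot\frac{Q}{2m}=\frac12$. The expected profit is $\frac1{4m}\bigl(\sum_i q^s_i(x)\,s_i-\sum_i q^b_i(x)\,b_i\bigr)=\frac1{4m}\lp(x)$. Because purchases and sales occur in different branches, their relative order inside the interval does not matter. That is exactly the freedom your pairwise coupling gives up.
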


As an immediate corollary of \Cref{lma:one-interval}, we get the
following result:

\begin{Corollary}
  \label{cor:multi-intervals}
  Let $x$ be a feasible solution for linear program
  \ref{program:relax}. Suppose $x$ admits an $m$-trades
  decomposition. Then there exists an online algorithm for the \GTP
  instance with an expected profit of exactly $\nicefrac{1}{4m} \cdot \lp(x)$.
\end{Corollary}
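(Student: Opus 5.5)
The plan is to run the algorithm of \Cref{lma:one-interval} on the intervals one after another and chain the invariant in part (i). Take the $m$-trades decomposition $[\ell_1,r_1],\dots,[\ell_D,r_D]$ of $x$, sorted so that $r_d<\ell_{d+1}$. For each $d$, let $x^{(d)}$ be the induced solution $\{x_{i,j}\cdot\one[i,j\in[\ell_d,r_d]]\}$. By condition~\ref{item:mtrades-a}, $x^{(d)}$ is an $m$-trades solution on $[\ell_d,r_d]$. By condition~\ref{item:mtrades-b}, every pair $(i,j)$ with $x_{i,j}>0$ lies inside exactly one interval. Hence $x=\sum_d x^{(d)}$ and, by linearity of the objective, $\lp(x)=\sum_d \lp(x^{(d)})$. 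Note $x_{0,j}=0$ for all $j$, since $\ell_d\ge 1$.

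The combined online algorithm does the following. On requests outside $\bigcup_d[\ell_d,r_d]$ it skips. On requests in $[\ell_d,r_d]$ it runs the algorithm $\mathcal{A}_d$ that \Cref{lma:one-interval} gives for $x^{(d)}$. The starting state must hold a copy with probability exactly $0.5$ before $\ell_1$. We have one initial copy, so before request $1$ we flip a fair coin and, with probability $1/2$, discard the initial copy. Discarding is realized as never using that copy, or equivalently by selling it at request $0$ at price $s_0=0$; this costs nothing and leaves the state as an empty or full unit slot. Skipping requests between intervals does not change the state, so the holding probability is $0.5$ before $\ell_1$. Then we induct on $d$. If the holding probability before $\ell_d$ is exactly $0.5$, part (i) of \Cref{lma:one-interval} gives probability exactly $0.5$ after $r_d$. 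Skipping through the gap preserves this, so the hypothesis holds before $\ell_{d+1}$.

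For the profit, part (ii) applies on each interval, giving $\frac{1}{4m}\lp(x^{(d)})$ in expectation. Summing over $d$ and using linearity of expectation gives $\frac{1}{4m}\sum_d\lp(x^{(d)})=\frac{1}{4m}\lp(x)$. The skipped requests and the coin flip contribute zero profit, so the total is exactly $\frac{1}{4m}\lp(x)$.

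The main obstacle is interpreting part (i) correctly. Part (ii) should hold whenever the hypothesis of (i) holds, and that hypothesis involves only the marginal probability of holding a copy. The concern is that the algorithm's state entering interval $d$ is correlated with its history, and we need $\mathcal{A}_d$'s guarantee to depend only on that marginal. This is fine because the requests in $[\ell_d,r_d]$ are independent of everything before $\ell_d$. The state at $\ell_d$ is a single bit, a function of past randomness, so conditioned on holding or not holding, the future arrivals have their original distribution. I would check that $\mathcal{A}_d$'s analysis uses only this, so the lemma applies verbatim at each step of the induction.
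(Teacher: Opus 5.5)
Your proposal is correct and follows the paper's proof. Restrict $x$ to each interval, start in the half-full state (the paper phrases this as taking the free copy at request $0$ with probability $\nicefrac12$), run the algorithm of \Cref{lma:one-interval} interval by interval, chain part (i) to keep the holding probability at exactly $\nicefrac12$, and sum part (ii) using $\lp(x)=\sum_d \lp(x^{(d)})$. Your explicit handling of gaps between intervals, and of why each interval's run depends only on the entering state bit, makes precise what the paper leaves implicit; of your two realizations of the discard, rely on the request-$0$ one (the paper's formalization), since merely not using a physically held copy is not equivalent under $B=1$---the simulated buys in the empty branch could not be executed.
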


\subsubsection{Decomposing the Optimal Solution}

To obtain a constant approximation algorithm, the final component is
to take an optimal solution $x^*$ to the linear program
(\ref{program:relax}) and give a constant-trades decomposition without
reducing the objective function substantially. We do this in two
steps:

\begin{itemize}
\item Step 1: We first convert the optimal solution $x^*$ into a
  ``nicer'' solution $\tilde x$ via \emph{uncrossing} (see \Cref{subsec:proof_decomposition_lemma}), such that
  $\lp(\tilde x) = \lp(x^*)$.
\item Step 2: We then decompose the solution $\tilde x$ into three
  solutions $\tilde x^{(1)}, \tilde x^{(2)}, \tilde x^{(3)}$ which
  satisfies
  \[ \tilde x = \tilde x^{(1)} + \tilde x^{(2)}+ \tilde x^{(3)}, \]
  where $\tilde x^{(1)}$ is a special solution handling trades that
  buy at request $0$. We then show that both $\tilde x^{(2)}$ and
  $\tilde x^{(3)}$ admit a constant-trades decomposition.
\end{itemize}

The following lemma formalizes the above ideas:

\begin{Lemma}[Decomposition Lemma]
  \label{lma:decompose-opt}
  If $x^*$ is an optimal solution for \ref{program:relax}, there
  exists three feasible solutions
  $\tilde x^{(1)}, \tilde x^{(2)}, \tilde x^{(3)}$ for
  \ref{program:relax} such that:
    \begin{enumerate}[label=(\roman*)]
        \item \label{item:decompose-opt-i} $\lp(x^*) = \lp(\tilde x^{(1)}) + \lp(\tilde x^{(2)}) + \lp(\tilde x^{(3)})$,
        \item \label{item:decompose-opt-ii} there exists an online algorithm obtaining an expected
          profit of $\frac{1}{2} \cdot \lp(\tilde x^{(1)})$, and 
        \item \label{item:decompose-opt-iii} both $\tilde x^{(2)}$ and $\tilde x^{(3)}$ admit a $2.5$-trades decomposition.
    \end{enumerate}
\end{Lemma}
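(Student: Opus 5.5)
The plan has three stages: uncross $x^*$, peel off the trades that sell the initial copy, and then cut the remaining solution into short intervals in two staggered ways.

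\textbf{Uncrossing.} Starting from $x^*$, we repeatedly uncross crossing pairs. Suppose $x_{i,j}>0$ and $x_{i',j'}>0$ with $i<i'<j<j'$; this is a pair of trades whose holding intervals overlap without nesting. Let $\epsilon=\min(x_{i,j},x_{i',j'})$. We move $\epsilon$ mass from $(i,j),(i',j')$ to $(i,j'),(i',j)$, or alternatively to the non-overlapping pair $(i,j),(i',j')\mapsto(i,i'),(j,j')$ is not available since $i'$ buys. So the exchange is $(i,j),(i',j')\to(i,j'),(i',j)$, which turns the crossing pair into a nested pair. This preserves every $q^b_i,q^s_i$, hence by \eqref{eq:lp-value} it preserves $\lp$ and \eqref{eq:LP-marginals}. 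It also preserves the holding count $\sum_{i<t\le j}x_{i,j}$ at every $t$, so \eqref{eq:lp-capacity} is kept. A potential argument, for instance $\sum x_{i,j}(j-i)^2$ strictly increasing with finitely many supports, or simply choosing among optimal solutions with the same marginals one maximizing this potential, yields a laminar (non-crossing) optimal $\tilde x$.

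Laminarity matters because of the capacity constraint. Any trade $(i,j)$ covers $(i,j]$, and the total coverage at any point is at most $1$. In a laminar family, the outermost trades (those not nested inside another trade) therefore have disjoint spans, up to sharing endpoints.

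\textbf{The trades from request $0$.} Set $\tilde x^{(1)}_{0,j}=\tilde x_{0,j}$, the trades selling the initial copy. These have total mass at most $p_0=1$, and a single selling decision realizes them. The algorithm walks forward and, on seeing request $j$ active, sells with probability $\tilde x_{0,j}/(p_j\cdot\Pr[\text{still holding}])$. This is a one-item OCRS or prophet-style rule, and we aim to lose at most a factor $2$. The delicate point is that the selling probabilities have total mass at most $1$ and can be attenuated by $1/2$ to guarantee availability. We must also check that values $s_j\ge0$ so that these trades have nonnegative profit; we may assume this, dropping negative trades from $x^*$ WLOG.

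\textbf{Intervals for the remaining trades.} Among the remaining trades (those with $i\ge1$), group them by outermost trade. Each outermost trade $(i,j)$, together with everything nested inside it, lives in $[i,j]$. Scanning left to right, we cut the line into consecutive blocks whose accumulated trade mass is about $1$ to $2$. The difficulty is that nested trades inside a single outermost span can carry large total mass, since nested trades only use capacity on disjoint subintervals. Inside one outermost span $(i,j)$ of weight $w\le1$, however, coverage is still at most $1$, and the inner trades are in turn laminar. We therefore recurse: peel the outermost layer as one "trade chunk", and for the inner structure cut at points where the coverage by cross trades is small.

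The two solutions $\tilde x^{(2)}$ and $\tilde x^{(3)}$ come from alternating assignment. We choose cut points $c_1<c_2<\dots$ so that each gap $[c_k,c_{k+1}]$ contains trade mass in $[1,2]$. Trades entirely within a gap are assigned to $\tilde x^{(2)}$ if $k$ is even and to $\tilde x^{(3)}$ if $k$ is odd. Trades straddling a cut must be placed so that each lands inside the union of two consecutive gaps, which is where the $0.5$ slack in $2.5$ comes from: a straddling trade has mass at most $1$ in total through any point, and its span should be covered by merging consecutive gaps into intervals of $\tilde x^{(2)}$ (pairs $[c_{2k},c_{2k+2}]$) or $\tilde x^{(3)}$ (pairs $[c_{2k+1},c_{2k+3}]$). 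The total mass of an interval is then at most $2$ internal plus $0.5$ straddling.

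The main obstacle is choosing the cuts so that every trade fits within one of the two staggered families of intervals, that is, no trade spans more than two gaps, while still bounding the mass per interval. This is exactly where laminarity and the capacity bound $1$ must be used together. I would first try cutting greedily at points $t$ where the coverage $\sum_{i<t\le j}x_{i,j}$ is at most $1/2$, and otherwise treat long outermost trades as separate single intervals.
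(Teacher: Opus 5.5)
There is a genuine gap, and it starts with the direction of your uncrossing. You turn crossing pairs $i<i'<j<j'$ into nested ones, so your $\tilde x$ is laminar (last-in-first-out). Laminarity is exactly what blocks a constant-trades decomposition, because any piece that puts positive mass on an outer trade $(1,T)$ must use the single interval $[1,T]$. Suppose an optimal $x^*$ is already laminar: one trade $(1,T)$, $n$ disjoint trades nested inside it, and $n$ disjoint trades nested inside each of those, all of mass $1/3$ (so coverage never exceeds $1$). Your procedure leaves it unchanged, and for large $n$ it cannot be split into two pieces that both admit $2.5$-trades decompositions. Any piece carrying mass on $(1,T)$ has total mass at most $2.5$. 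So the other piece must contain some middle trade together with almost all of the $n$ short trades nested in it, which is mass close to $n/3>2.5$ inside one interval. You notice this difficulty yourself, but the recursion you sketch is never carried out. The step you call the main obstacle, that no trade crosses more than one cut, cannot hold for laminar solutions, since $(1,T)$ crosses every cut. Your fallback of cutting only where coverage is at most $1/2$ also fails: trades $(i,i+2)$ of mass $1/2$ keep the coverage equal to $1$ at every $3\le t\le T-1$. Finally, the count ``$2$ internal plus $0.5$ straddling'' is asserted rather than derived.

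The missing idea is to uncross in the opposite direction. Whenever $i<i'<j'<j$ with $\tilde x_{i,j},\tilde x_{i',j'}>0$, move mass to $(i,j')$ and $(i',j)$; in terms of your potential, minimize rather than maximize $\sum x_{i,j}(j-i)^2$ among solutions with the same marginals. The result is first-in-first-out (FIFO). Then cut greedily, ending each block $[\ell_d,r_d]$ at the first $r$ with $\sum_{i=\ell_d}^{r}q^b_i\ge 1$ and $\sum_{i=\ell_d}^{r}q^s_i\ge 1$.

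Two facts then finish the argument. First, the holding across any $t$ depends only on the marginals, so \eqref{eq:lp-capacity} gives $\bigl|\sum_{i=\ell}^{r}(q^b_i-q^s_i)\bigr|\le 1$ on every window. Together with the minimality of $r_d$ and $q^b_{r_d}+q^s_{r_d}\le 1$, this bounds the buy mass of every block by $2.5$. Second, suppose a trade $(i',j')$ with $i'\ge 1$ went from block $d$ to a block $d'\ge d+2$. Every sale $j$ in block $d+1$ has $j<j'$, so FIFO forces its buy to occur at some $i\le i'<\ell_{d+1}$. The holding across $\ell_{d+1}$ is then at least $\tilde x_{i',j'}+\sum_{j=\ell_{d+1}}^{r_{d+1}}q^s_j>1$, contradicting \eqref{eq:lp-capacity}.

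Hence each trade with $i\ge 1$ lies in two consecutive blocks. Assign it to $\tilde x^{(2)}$ or $\tilde x^{(3)}$ by the parity of its buy-block, and merge each block with its successor. This gives $2.5$-trades decompositions, where $2.5$ is just the buy mass of one block and so covers internal and straddling trades at once. Your treatment of $\tilde x^{(1)}$ via a single-item prophet/OCRS rule is fine and matches the paper.
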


We defer the proofs of \Cref{lma:lprelax}, \Cref{lma:one-interval}, and \Cref{cor:multi-intervals} to \Cref{sec:unit-noniid-missing}, and present the proof of \Cref{lma:decompose-opt} in \Cref{subsec:proof_decomposition_lemma}.  Combining these lemmas proves
\Cref{thm:label-noniid-small-informal}, as we now show.

\begin{proof}[Proof of \Cref{thm:label-noniid-small-informal}]
  We first solve \ref{program:relax} and get the optimal solution
  $x^*$. Next, we apply \Cref{lma:decompose-opt} to obtain three
  solutions $\tilde x^{(1)}, \tilde x^{(2)}, \tilde x^{(3)}$. We now
  do the following:
  \begin{itemize}
  \item w.p.\ $\nicefrac{1}{11}$, we run the algorithm
    provided by \Cref{lma:decompose-opt} to 
    obtain a profit of $\frac{1}{2} \cdot \lp(\tilde x^{(1)})$, else 
  \item we run the algorithm provided by \Cref{cor:multi-intervals} on
    one of $\tilde x^{(2)}$ or $\tilde x^{(3)}$, each w.p.\
    $\nicefrac5{11}$.
  \end{itemize}
  Since $\tilde x^{(2)}$ and $\tilde x^{(3)}$ both admit a
  $2.5$-trades decomposition, our final expected profit is
  \begin{align*}
    \frac{1}{11} \cdot \frac{\lp(\tilde x^{(1)})}{2} + \frac{5}{11}
    \cdot \frac{1}{10} \cdot (\lp(\tilde x^{(2)}) + \lp(\tilde
    x^{(3)})) ~=~ \frac{\lp(\tilde x^{(1)}) + \lp(\tilde x^{(2)}) +
    \lp(\tilde x^{(3)})}{22} ~=~ \frac{\lp(x^*)}{22}. 
  \end{align*}
  Since \Cref{lma:lprelax} guarantees that $\lp(x^*)$ is at least the
  expected profit of the optimal offline algorithm, the algorithm
  achieves a $\nicefrac{1}{22}$-approximation.
\end{proof}


\subsection{Proof of the Decomposition Lemma}\label{subsec:proof_decomposition_lemma}

In this subsection, we show how to give a constant-trades decomposition for the optimal solution of \ref{program:relax} and prove \Cref{lma:decompose-opt}. We first note that an arbitrary optimal solution $x^*$ of \ref{program:relax} may be impossible to decompose. For instance, assume $x^*_{1, T} > 0$, i.e., solution $x^*$ includes a trade that buys from request $1$, and sells to request $T$. To include this trade inside the decomposition, it is necessary to include interval $[1, T]$, and this would be the only interval in our decomposition. However, since there can be more than constant number of trades in $[1, T]$, a constant-trades decomposition becomes impossible in this case.

To resolve this issue, we first apply the idea of \emph{uncrossing} to eliminate the existence of such long-length trades in the optimal solution. To be specific, we use \Cref{alg:uncrossing} to come up with a refined optimal solution via uncrossing.

\begin{algorithm}[tbh]
\caption{\textsc{Refining Optimal Solution via Uncrossing}}
\label{alg:uncrossing}
\begin{algorithmic}[1]
\State \textbf{Input:} Optimal solution $x^*$ of \ref{program:relax}.
\State Initialize solution $\tilde x = x^*$, i.e., $\tilde x_{i, j} = x^*_{i, j}$.
\While{there exists $0 \leq i < i' < j' < j$ such that $\tilde x_{i, j}, \tilde x_{i', j'} > 0$}
\State Let $\delta = \min\{\tilde x_{i, j}, \tilde x_{i', j'}\}$.
\State  Decrease $\tilde x_{i, j}, \tilde x_{i', j'}$ by $\delta$, and increase $\tilde x_{i, j'}, \tilde x_{i', j}$ by $\delta$
\EndWhile
\State \textbf{Output:} Refined solution $\tilde x$ 
\end{algorithmic}
\end{algorithm}

Intuitively, \Cref{alg:uncrossing} ensures that the refined solution satisfies a "first buy first sell" policy: Since the bought (fractional) items are interchangable, for each step it sells first at request $j'$ the item bought early at $i$, and then sells the remaining ones at $j$. The refined solution remains feasible and the objective is unchanged, which is formally proved in \Cref{clm:uncrossing}.
\begin{Claim}
\label{clm:uncrossing}
    The output $\tilde x$ of \Cref{alg:uncrossing} is an optimal solution for \ref{program:relax}, i.e., $\tilde x$ is feasible and it satisfies $\lp(\tilde x) = \lp(x^*)$.
\end{Claim}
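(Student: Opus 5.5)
We show that each iteration of the while loop in \Cref{alg:uncrossing} preserves feasibility and the objective value, and that the loop terminates after finitely many iterations. Optimality of $\tilde x$ then follows from $\lp(\tilde x)=\lp(x^*)$.

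\textbf{Objective.} Fix one iteration with indices $0\le i<i'<j'<j$ and shift $\delta$. The objective changes by
\[
\delta\bigl[(s_{j'}-b_i)+(s_j-b_{i'})-(s_j-b_i)-(s_{j'}-b_{i'})\bigr]=0 .
\]
Equivalently, by \eqref{eq:lp-value} the objective depends only on the marginals $q^b_k, q^s_k$, and these are unchanged (see below).

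\textbf{Nonnegativity and \eqref{eq:LP-marginals}.} Since $\delta=\min\{\tilde x_{i,j},\tilde x_{i',j'}\}$, the two decreased entries remain nonnegative. Both new pairs are valid: $i<j'$ and $i'<j$. Buy index $i$ loses mass on $(i,j)$ and gains the same mass on $(i,j')$, so $q^b_i$ is unchanged. The same holds for $q^b_{i'}$, $q^s_j$ and $q^s_{j'}$, and all other indices are untouched. Hence every constraint in \eqref{eq:LP-marginals} is preserved.

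\textbf{Capacity constraint \eqref{eq:lp-capacity}.} For each $t$, the left-hand side is the total mass of pairs $(a,c)$ with $a<t\le c$. We compare old and new contributions:
\begin{itemize}
\item $t\le i$ or $t>j$: none of the four pairs crosses $t$, so there is no change.
\item $i<t\le i'$: only $(i,j)$ among the old pairs crosses $t$, and only $(i,j')$ among the new pairs does. The net change is $-\delta+\delta=0$.
\item $i'<t\le j'$: all four pairs cross $t$, so the net change is $0$.
\item $j'<t\le j$: the old pair $(i,j)$ crosses and $(i',j')$ does not; the new pair $(i',j)$ crosses and $(i,j')$ does not. The net change is again $0$.
\end{itemize}
So the crossing mass at every $t$ is unchanged and feasibility is preserved.

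\textbf{Termination.} This is the step that needs care, because the minimum of two values is not an obviously monotone quantity. We use the potential $\Phi(\tilde x)=\sum_{a<c}\tilde x_{a,c}(c-a)^2$. Each iteration changes $\Phi$ by $\delta$ times
\[
(j'-i)^2+(j-i')^2-(j-i)^2-(j'-i')^2 = -2(i'-i)(j-j') < 0 .
\]
The potential strictly decreases, but $\delta$ could be tiny, so we need a discrete progress measure. After each step, at least one of $\tilde x_{i,j},\tilde x_{i',j'}$ becomes zero. Every entry of $\tilde x$ is always an integer combination of the original entries of $x^*$. Together with the strictly decreasing $\Phi$, this means the support pattern never repeats.

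If that argument proves awkward, there is a cleaner alternative. Process the pairs lexicographically: repeatedly take the smallest $j'$ and uncross all nested pairs $(i,j)\supset(i',j')$. This way each pair $(i',j')$ is uncrossed against at most $O(T^2)$ others, each such step zeroes one of the two entries, and once fixed a pair is never refilled. That gives a finite bound on the number of iterations.

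Alternatively, termination can be sidestepped altogether. Take $\tilde x$ to be an optimal solution minimizing $\Phi$, which exists by compactness of the LP polytope. By the computation above, any nested pair in its support would allow a strict decrease of $\Phi$ while staying optimal, a contradiction. So a minimizer has no nested pairs, which is the property the decomposition lemma uses.

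Either way, the output $\tilde x$ is feasible with $\lp(\tilde x)=\lp(x^*)$, and is therefore optimal.
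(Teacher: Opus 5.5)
Your argument that a single uncrossing step preserves the marginals $q^b_k,q^s_k$ is correct, and it is the paper's proof. Unchanged marginals give both the objective and \eqref{eq:LP-marginals}, and the unchanged crossing mass $\sum_{a<t\le c}\tilde x_{a,c}$ at every $t$ gives \eqref{eq:lp-capacity}. Your four-case check simply spells out the step the paper dismisses with ``it can be verified''.

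The paper never discusses termination of \Cref{alg:uncrossing}, so that part of your write-up goes beyond it. Your first termination argument, however, does not work as written, for two reasons:
\begin{itemize}
\item Strict decrease of $\Phi$ only prevents the exact vector $\tilde x$ from recurring. It does not prevent its support pattern from recurring with different values.
\item Being an integer combination of the entries of $x^*$ gives no discreteness. The $p_i$ are arbitrary reals, so $x^*$ need not be rational. Integer combinations of such reals can be dense in $\mathbb{R}$, and nothing bounds the coefficients.
\end{itemize}
The argument would go through if $x^*$ were rational, since all iterates would then lie in a finite grid.

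Your lexicographic alternative is only a sketch. You assert without proof that ``once fixed a pair is never refilled''.

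Your third route is sound: take an optimal solution minimizing $\Phi$, which exists by compactness; it cannot contain a nested pair, since one uncrossing step would strictly decrease $\Phi$ while keeping optimality. This gives exactly the no-nesting property used in \Cref{lma:init-decompose}. Note, though, that it produces such a $\tilde x$ rather than showing the loop halts.

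An equally clean option: since both feasibility and the objective depend only on the marginals, you could define $\tilde x$ directly as the first-in-first-out matching of $q^b(x^*)$ and $q^s(x^*)$.
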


\begin{proof}
    It suffices to show that after performing one step of uncrossing in \Cref{alg:uncrossing}, the feasibility of $\tilde x$ holds, and the objective remains unchanged. For the objective, note that the objective of \ref{program:relax} only depends on the value of every $q^b_i(\tilde x)$ and $q^s_i(\tilde x)$. Since the values of $q^b_i(\tilde x)$ and $q^s_i(\tilde x)$ remain unchanged after performing one uncrossing step, the objective remains unchanged. For Constraints \eqref{eq:LP-marginals}, it can be rewritten as $q^b_i(\tilde x) + q^s_i(\tilde x) \leq p_i$, which remains feasible as every $q^b_i(\tilde x)$ and $q^s_i(\tilde x)$ remain unchanged. For Constraints \eqref{eq:lp-capacity}, it can be verified that for every $t \in [T]$, the quantity $\sum_{i = 0}^{t-1} \sum_{j = t}^T x_{i, j}$ is unchanged, so the constraint remains feasible.
\end{proof}

Next, we present the algorithm for decomposing the uncrossed optimal solution $\tilde x$. We start from presenting an initial decomposition via \Cref{alg:init-decompose}.

\begin{algorithm}[tbh]
\caption{\textsc{Initial Decomposition for Refined Optimal Solution}}
\label{alg:init-decompose}
\begin{algorithmic}[1]
\State \textbf{Input:} Optimal solution $\tilde x$ of \ref{program:relax} given by \Cref{alg:uncrossing}.
\State Initialize interval set $\calI = \varnothing$.
\State Initialize $\ell = 1$ (note that the algorithm skips request $0$)
\While{$\ell \leq T$}
\State Let $r \in [\ell, T]$ be the smallest index that satisfies
    \[
    \sum_{i = \ell}^r q^b_i(\tilde x) ~\geq~ 1 \qquad \text{and} \qquad \sum_{i = \ell}^r q^s_i(\tilde x) ~\geq~ 1.
    \]
\phantom{xxi} If such index does not exist, set $r = T$.
\State Add $[\ell, r]$ into $\calI$.
\State Update $\ell = r + 1$.
\EndWhile
\State \textbf{Output:} Interval set $\calI$
\end{algorithmic}
\end{algorithm}

\Cref{alg:init-decompose} provides an almost ready decomposition for solution $\tilde x$. To be specific, we show the following \Cref{lma:init-decompose}:

\begin{Lemma}
    \label{lma:init-decompose}
    Assume the output $\calI$ of \Cref{alg:init-decompose} equals to $\{[\ell_1, r_1], \cdots [\ell_D, r_D]\}$, such that intervals are sorted in order, i.e., $r_d + 1 = \ell_{d+1}$ for every $d \in [D - 1]$. Then, the following statements hold when the input $\tilde x$ of \Cref{alg:init-decompose} is provided by \Cref{alg:uncrossing}:
    \begin{enumerate}[label = (\roman*)]
        \item \label{item:init-decompose-i} For every $d \in [D]$, we have $\sum_{i = \ell_d}^{r_d} q^b_i(\tilde x) \leq 2.5$ and $\sum_{i = \ell_d}^{r_d} q^s_i(\tilde x) \leq 2.5$.
        \item \label{item:init-decompose-ii} For every $(i, j)$ that satisfies $\tilde x_{i, j} > 0$ and $i \neq 0$, $i$ and $j$ cross at most two adjacent intervals, i.e., assume $i \in [\ell_d, r_d]$ and $j \in [\ell_{d'}, r_{d'}]$, there must be $d' \leq d + 1$.
    \end{enumerate}
\end{Lemma}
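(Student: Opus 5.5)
The plan is to prove both parts from two ingredients: the stopping rule of \Cref{alg:init-decompose}, and the capacity constraints~\eqref{eq:lp-capacity}, which $\tilde x$ satisfies by \Cref{clm:uncrossing}. Part~\ref{item:init-decompose-ii} additionally uses the non-nesting property that holds when \Cref{alg:uncrossing} terminates. For a cut point $t$, write $F(t) := \sum_{i<t\le j}\tilde x_{i,j}$ for the flow crossing into $t$; constraint~\eqref{eq:lp-capacity} says $F(t)\le 1$.

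\textbf{Part~\ref{item:init-decompose-i}.} Fix an interval $[\ell,r]=[\ell_d,r_d]$ and let $Q^b(a,c):=\sum_{i=a}^c q^b_i(\tilde x)$, with $Q^s$ defined analogously.

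\emph{Case $r$ is the minimal stopping index.} Then at $r-1$ one of the two prefix sums is below $1$. Suppose first $Q^s(\ell,r-1)<1$. Every buy at some $i\in[\ell,r-1]$ is sold either at some $j\in[\ell+1,r-1]$, contributing at most $Q^s(\ell,r-1)<1$ in total, or at some $j\ge r$, which is counted in $F(r)\le 1$. Hence $Q^b(\ell,r-1)<2$ and $Q^b(\ell,r)<2+q^b_r$. Symmetrically, every sell at $j\in[\ell,r]$ comes from a buy before $\ell$ (counted in $F(\ell)\le1$) or from a buy in $[\ell,r-1]$. Here I would bound the latter by another crossing argument, or else use $Q^s(\ell,r)<1+q^s_r\le 2$ directly.

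The key arithmetic step is to obtain, for the buy side, both $Q^b(\ell,r)<2+q^b_r$ and $Q^b(\ell,r)\le 2+q^s_r$. For the second bound, split the buys in $[\ell,r]$ into those sold inside $[\ell,r-1]$ (less than $1$), those sold at $r$ (at most $q^s_r$), and those sold after $r$ but bought before $r$; the last group lies in $F(r)$. More cleanly, I would use $Q^b(\ell,r-1)\le Q^s(\ell,r-1)+F(r)$ and then add $q^b_r$. Since $q^b_r+q^s_r\le p_r\le 1$ by~\eqref{eq:compact-packing}, the minimum of the two bounds is at most $2.5$. The case $Q^b(\ell,r-1)<1$ is symmetric with buys and sells swapped, crossing at $\ell$ instead of $r$.

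\emph{Case $r=T$ because no stopping index exists.} One of the two sums over the whole interval is below $1$, and the same crossing argument bounds the other sum by less than $2$.

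\textbf{Part~\ref{item:init-decompose-ii}.} Suppose $\tilde x_{i,j}>0$ with $i\ge1$, $i\in[\ell_d,r_d]$, $j\in[\ell_{d'},r_{d'}]$, and $d'\ge d+2$. Then the interval $I=[\ell_{d+1},r_{d+1}]$ is not the last interval, so it was closed by the stopping rule and $Q^b(I)\ge 1$. Consider any buy $i'\in I$ with $\tilde x_{i',j'}>0$. We have $i<i'<j'$, and because \Cref{alg:uncrossing} has terminated we cannot have $j'<j$. So $j'\ge j>r_{d+1}$, and every such trade crosses the cut $t=r_{d+1}+1$. The trade $(i,j)$ also crosses that cut, so $F(r_{d+1}+1)\ge Q^b(I)+\tilde x_{i,j}>1$, contradicting~\eqref{eq:lp-capacity}.

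The main obstacle is part~\ref{item:init-decompose-i}: getting the constant $2.5$ rather than $3$. This requires pairing the two crossing bounds with the per-step packing constraint $q^b_r+q^s_r\le p_r$ at the endpoint $r$. I would carefully check which crossing cut ($\ell$ or $r$) to use in each of the two subcases.
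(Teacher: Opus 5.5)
Your part (ii) is correct. It is the mirror image of the paper's argument: you charge the buys in $[\ell_{d+1},r_{d+1}]$ to the cut $t=r_{d+1}+1$, while the paper charges the sells in that interval to the cut $t=\ell_{d+1}$. Part (i) uses the paper's ingredients: cut constraints, which the paper packages as \Cref{clm:bounded-diff}, plus $q^b_r+q^s_r\le p_r\le 1$. However, the step you call key does not go through as written. In the subcase $Q^s(\ell,r-1)<1$, your split of the buys in $[\ell,r]$ omits the buys made at $r$ itself. Each of these is sold after $r$, yet none is counted in $F(r)=\sum_{i<r\le j}\tilde x_{i,j}$, since $i=r$. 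Putting them back, the split only yields $Q^b(\ell,r)<2+q^s_r+q^b_r\le 3$, which is exactly the constant you were trying to beat. Your ``cleaner'' alternative, $Q^b(\ell,r-1)\le Q^s(\ell,r-1)+F(r)$ plus $q^b_r$, just reproduces the first bound $2+q^b_r$, not $2+q^s_r$. So as written the proposal proves $3$, not $2.5$.

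The missing idea is to use the cut at $r+1$, which is neither of the cuts ($\ell$ or $r$) you planned to check. Every buy at some $i\in[\ell,r]$, including $i=r$, falls into one of two groups. Either it is sold at some $j\in[\ell+1,r]$, for a total of at most $Q^s(\ell,r)<1+q^s_r$. Or it is sold after $r$, so it crosses $t=r+1$, for a total of at most $1$ by \eqref{eq:lp-capacity} (and $0$ if $r=T$). Hence $Q^b(\ell,r)<2+q^s_r$, which is precisely \Cref{clm:bounded-diff} applied to $[\ell,r]$. Combined with your first bound $Q^b(\ell,r)<2+q^b_r$ and $q^b_r+q^s_r\le 1$, your minimum argument then gives $2.5$.

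The paper reaches the same constant by summing rather than taking a minimum. On $[\ell,r-1]$ one sum is below $1$ and the difference is at most $1$, so the two sums total less than $3$. Adding $q^b_r+q^s_r\le 1$ gives a total below $4$, and the difference bound on $[\ell,r]$ then caps each sum at $2.5$. Your other subcase ($Q^b(\ell,r-1)<1$, cut at $\ell$) is fine, and there you in fact get both sums below $2$ directly. The case where no stopping index exists is also fine.
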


Our proof of \Cref{lma:init-decompose} relies on the following \Cref{clm:bounded-diff}. At a high level, the claim follows the fact that for each interval $[l,r]$, $\sum_{i = \ell}^r q^b_i(x) - \sum_{i = \ell}^r q^s_i(x)$ is the difference between total fractional items after request $r$ and before request $l$ w.r.t. the solution $x$, whose absolute value must be at most 1 since the decision maker can hold at most one item.

\begin{Claim}
    \label{clm:bounded-diff}
    Let $x$ be a feasible solution of \ref{program:relax}. For every $0 \leq \ell \leq r \leq T$, there must be
    \[
    \left|\sum_{i = \ell}^r q^b_i(x) - \sum_{i = \ell}^r q^s_i(x)\right| ~\leq ~ 1.
    \]
\end{Claim}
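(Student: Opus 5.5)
The plan is to write the difference as a telescoping sum of ``inventory levels'' and then bound each level between $0$ and $1$ using the LP constraints. For each $t \in [T]$, define the fractional inventory held when going from timestep $t-1$ to timestep $t$ as
\[
  H_t(x) ~:=~ \sum_{i = 0}^{t-1} \sum_{j = t}^T x_{i,j}.
\]
Extend this with $H_{T+1}(x) := 0$, since no pair has $j \geq T+1$. Similarly set $H_0(x) := 0$, since no pair has $i \leq -1$. Nonnegativity of $x$ gives $H_t(x) \geq 0$ for all $t$. Constraint~\eqref{eq:lp-capacity} gives $H_t(x) \leq 1$ for $t \in [T]$. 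The boundary values $0$ are trivially in $[0,1]$.

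The key step is the one-step identity $H_{t+1}(x) - H_t(x) = q^b_t(x) - q^s_t(x)$ for every $0 \leq t \leq T$. To verify it, note that going from $H_t$ to $H_{t+1}$ the index ranges change from $i \leq t-1,\ j \geq t$ to $i \leq t,\ j \geq t+1$.
\begin{itemize}
  \item The pairs gained are those with $i = t$ and $j \geq t+1$; their total is $q^b_t(x)$.
  \item The pairs lost are those with $j = t$ and $i \leq t-1$; their total is $q^s_t(x)$.
\end{itemize}
The edge cases are routine. For $t = 0$ we use $q^s_0 = 0$, and for $t = T$ we use $q^b_T = 0$.

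Summing the identity over $t = \ell, \dots, r$ telescopes:
\[
  \sum_{i=\ell}^r q^b_i(x) - \sum_{i=\ell}^r q^s_i(x) ~=~ H_{r+1}(x) - H_\ell(x).
\]
Both terms on the right lie in $[0,1]$, so their difference has absolute value at most $1$, which is the claim.

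The only place needing care is the bookkeeping at the boundaries $\ell = 0$ and $r = T$. There the relevant $H$ values are not covered by constraint~\eqref{eq:lp-capacity}, but they equal $0$ by definition. I do not expect any real obstacle beyond this.
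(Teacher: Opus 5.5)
Your proof is correct and is essentially the paper's argument. The paper also compares the interval sum against the capacity constraints at $t = r+1$ and $t = \ell$ (this is the ``fractional inventory'' intuition it states just before the claim), but it argues by contradiction in two sign cases, proving only a one-sided bound each time, such as $\sum_{i=\ell}^r (q^b_i(x) - q^s_i(x)) \le H_{r+1}(x)$, and handling $r = T$ and $\ell = 0$ by shrinking the interval. Your exact telescoping identity $\sum_{i=\ell}^r (q^b_i(x) - q^s_i(x)) = H_{r+1}(x) - H_\ell(x)$, with the conventions $H_0(x) = H_{T+1}(x) = 0$, covers both directions and both boundary cases at once, so it is a slightly cleaner packaging of the same idea.
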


\begin{proof}
    We prove via contradiction. Assume for interval $[\ell, r]$, we have
    \[
    \sum_{i = \ell}^r q^b_i(x) - \sum_{i = \ell}^r q^s_i(x) ~>~ 1.
    \]
    If $r = T$, we may without loss of generality reduce $r$ to $T - 1$, as $q^b_T(x)$ must be $0$. When $r \leq T - 1$, note that
    \begin{align*}
        \sum_{i = 0}^r \sum_{j = r+1}^T x_{i, j} ~&\geq~  \sum_{i = \ell}^r \sum_{j = r+1}^T x_{i, j} \\
        ~&=~ \sum_{i = \ell}^r q^b_i(x) - \sum_{i = \ell}^{r - 1} \sum_{j = i + 1}^r x_{i,j} \\
        ~&\geq~ \sum_{i = \ell}^r q^b_i(x) - \sum_{i = \ell}^r q^s_i(x) ~>~ 1,
    \end{align*}
    which is in contrast to the assumption that $x$ is feasible as the constraint in the third line of \ref{program:relax} with $t = r + 1$ is not satisfied. Therefore, there must be $\sum_{i = \ell}^r q^b_i(x) - \sum_{i = \ell}^r q^s_i(x) ~\leq~ 1$.

    Symmetrically, when $\sum_{i = \ell}^r q^s_i(x) - \sum_{i = \ell}^r q^b_i(x) > 1$ holds, we may assume without loss of generality that $\ell \geq 1$, as $q^s_0(x)$ must be $0$. Then, note that
    \begin{align*}
        \sum_{i = 0}^{\ell - 1} \sum_{j = \ell}^T x_{i, j} ~&\geq~ \sum_{i = 0}^{\ell - 1} \sum_{j = \ell}^r x_{i, j} \\
        ~&=~ \sum_{j = \ell}^r q^s_j(x) - \sum_{i = \ell}^{r - 1} \sum_{j = i + 1}^r x_{i, j} \\
        ~&\geq~ \sum_{j = \ell}^r q^s_j(x) - \sum_{j = \ell}^r q^b_j(x) ~>~ 1,
    \end{align*}
    which violates the constraint in the third line of \ref{program:relax} with $t = \ell$.  Combining two cases proves \Cref{clm:bounded-diff}.
\end{proof}

Now, we are ready to prove \Cref{lma:init-decompose}.

\begin{proof}[Proof of \Cref{lma:init-decompose}]
We first show  \Cref{lma:init-decompose}.\ref{item:init-decompose-i}, i.e.,  within each interval $[\ell_d, r_d]$, the summation of $q^b_i(\tilde x)$ and $q^s_i(\tilde x)$ are both bounded by $2.5$. Note that if the desired index $r$ in Line 5 of \Cref{alg:init-decompose} does not exist, then at least one summation must be bounded by $1$, and \Cref{clm:bounded-diff} guarantees that another summation is bounded by 2, so \Cref{lma:init-decompose}.\ref{item:init-decompose-i} holds. We then consider the general case where index $r_d$ can be found. Since taking $r_d - 1$ is not feasible, there must be
\[
    \sum_{i = \ell_d}^{r_d - 1} q^b_i(\tilde x) ~<~ 1 \qquad \text{or} \qquad \sum_{i = \ell_d}^{r_d - 1} q^s_i(\tilde x) ~<~ 1.
\]
By \Cref{clm:bounded-diff}, we have
\begin{align*}
    \sum_{i = \ell_d}^{r_d - 1} q^s_i(\tilde x) + \sum_{i = \ell_d}^{r_d - 1} q^b_i(\tilde x) ~&=~ \min\left\{ \sum_{i = \ell_d}^{r_d - 1} q^s_i(\tilde x), \sum_{i = \ell_d}^{r_d - 1} q^b_i(\tilde x)\right\} + \max\left\{ \sum_{i = \ell_d}^{r_d - 1} q^s_i(\tilde x), \sum_{i = \ell_d}^{r_d - 1} q^b_i(\tilde x)\right\} \\
    ~&\leq~ 2 \cdot \min\left\{ \sum_{i = \ell_d}^{r_d - 1} q^s_i(\tilde x), \sum_{i = \ell_d}^{r_d - 1} q^b_i(\tilde x)\right\} + 1 ~<~ 3.
\end{align*}
Note that Constraint \eqref{eq:LP-marginals} guarantees 
\[
q^s_{r_d}(\tilde x) + q^b_{r_d}(\tilde x) ~\leq~ p_{r_d} ~\leq~ 1.
\]
Summing the above two inequalities together gives
\[
\sum_{i = \ell_d}^{r_d} q^s_i(\tilde x) + \sum_{i = \ell_d}^{r_d} q^b_i(\tilde x) ~\leq~ 4.
\]
Since \Cref{clm:bounded-diff} guarantees that
\[
\left |\sum_{i = \ell_d}^{r_d} q^s_i(\tilde x) - \sum_{i = \ell_d}^{r_d} q^b_i(\tilde x) \right| ~\leq~ 1,
\]
there must be
\[
    \sum_{i = \ell_d}^{r_d} q^b_i(\tilde x) ~\leq~ 2.5 \qquad \text{and} \qquad \sum_{i = \ell_d}^{r_d} q^s_i(\tilde x) ~\leq~ 2.5.
\]

Next, we prove  \Cref{lma:init-decompose}.\ref{item:init-decompose-ii}  via contradiction. Assume there exists $1 \leq i' < j' \leq T$ that satisfies $\tilde x_{i', j'} > 0$, such that $d' - d \geq 2$ when $i' \in [\ell_d, r_d]$ and $j' \in [\ell_{d'}, r_{d'}]$. Consider $j \in [\ell_{d+1}, r_{d+1}]$, and $i \leq j$ that satisfies $\tilde x_{i, j} > 0$. Recall that solution $\tilde x$ is given by \Cref{alg:uncrossing} after finishing the uncrossing process. Since we have $j \leq r_{d+1} < \ell_{d'} \leq j'$, there must be $i \leq i' \leq r_d < \ell_{d+1}$. Then, we have
\begin{align*}
    \sum_{i = 0}^{\ell_{d+1} - 1} \sum_{j = \ell_{d+1}}^T \tilde x_{i, j} ~&\geq~ \tilde x_{i', j'} + \sum_{i = 0}^{\ell_{d+1} - 1} \sum_{j = \ell_{d+1}}^{r_{d+1}} \tilde x_{i, j} \\
    ~&=~ \tilde x_{i', j'} + \sum_{j = \ell_{d+1}}^{r_{d+1}} q^s_j(\tilde x) \\
    ~&\geq~ \tilde x_{i', j'} + 1 > 1,
\end{align*}
where the first inequality uses the fact that $j' \notin [\ell_{d+1}, r_{d+1}]$, and therefore not counted in the following summation, the second equality uses the fact that $\tilde x_{i, j} > 0$ for $j \in [\ell_{d+1}, r_{d+1}]$ implies $i < \ell_{d+1}$, and the last line uses the fact that \Cref{alg:init-decompose} guarantees that $\sum_{i = \ell_{d+1}}^{r_{d+1}} q^s_i(\tilde x) \geq 1$ (note that $d+1 < d'$, so $r_{d+1}$ can't be $T$). Since the above inequality violates the constraint in the third line of \ref{program:relax} with $t = \ell_{d+1}$, the assumption that $(i', j')$ exists is in contrast to the feasibility of solution $\tilde x$. Therefore, \Cref{lma:init-decompose}.\ref{item:init-decompose-ii}  holds.
\end{proof}

Finally, we prove \Cref{lma:decompose-opt} by combining \Cref{clm:uncrossing} and \Cref{lma:init-decompose}.

\begin{proof}[Proof of \Cref{lma:decompose-opt}]
    To prove \Cref{lma:decompose-opt}, we decompose solution $\tilde x$ into three solutions $\tilde x^{(1)}$, $\tilde x^{(2)}$, and $\tilde x^{(3)}$, such that these three solutions satisfy the first and second statements in \Cref{lma:decompose-opt}. Then, we use \Cref{clm:uncrossing}, which states that $\lp(\tilde x) = \lp(x^*)$, to prove the third statement.

\vspace{0.5em} \noindent \textbf{Constructing $\tilde x^{(1)}$ and proof of \Cref{lma:decompose-opt}.\ref{item:decompose-opt-i}.} We first give the definition of $\tilde x^{(1)}$ and show that there exists an online algorithm achieving $\frac{1}{2} \cdot \lp(\tilde x^{(1)})$. For $0 \leq i < j \leq T$, we define
    \[
    \tilde x^{(1)}_{i, j} ~:=~ \tilde x_{i, j} \cdot \one[i = 0],
    \]
    i.e., $\tilde x^{(1)}$ represents all trades that buys (for free) from request $0$. Since solution $\tilde x^{(1)}$ only buys from request $0$ with at most one copy, the selling process given by $\tilde x^{(1)}$ can be viewed as the LP solution of a single-item Prophet Inequality problem, and a standard Prophet Inequality algorithm achieves an expected profit of $\frac{1}{2} \cdot \lp(\tilde x^{(1)})$. See e.g. \cite{correa2019recent} and its references for details.
    
\vspace{0.5em} \noindent \textbf{Constructing $\tilde x^{(2)}, \tilde x^{(3)}$ and proof of \Cref{lma:decompose-opt}.\ref{item:decompose-opt-ii}.} Given $\tilde x$, we run \Cref{alg:init-decompose} to get the initial decomposition $\calI$ for $\tilde x$, and we assume without loss of generality that $\calI = \{[\ell_1, r_1], \cdots [\ell_D, r_D]\}$, such that intervals are sorted in order, i.e., $\ell_1 = 1, r_D = T$, and $r_d + 1 = \ell_{d+1}$ for every $d \in [D - 1]$. We define $\tilde x^{(2)}$ and $\tilde x^{(3)}$ as follows:
\begin{align*}
    \tilde x^{(2)}_{i, j} ~:=~ \tilde x_{i, j} \cdot \one\big[i \in [\ell_d, r_d]: d \text{ is odd}\big], \quad \text{and} \quad \tilde x^{(3)}_{i, j} ~:=~ \tilde x_{i, j} \cdot \one\big[i \in [\ell_d, r_d]: d \text{ is even}\big].
\end{align*}

Now, we prove \Cref{lma:decompose-opt}.\ref{item:decompose-opt-ii}. We only prove the statement for $\tilde x^{(2)}$, as the proof for $\tilde x^{(3)}$ is identical. 
Define $\calI_{odd} = \{[\ell_d, r_d] \cup [\ell_{d+1}, r_{d+1}]: d \in [D]\land d \text{ is odd}\}$, where the last interval is refined as $[\ell_D, r_D]$ instead of $[\ell_D, r_D] \cup [\ell_{D+1}, r_{D+1}]$ when $D$ is odd. Since \Cref{lma:init-decompose} guarantees that for $(i, j): \tilde x_{i, j} > 0 \land i \in [\ell_d, r_d]$, there must be either $j \in [\ell_d, r_d]$ or $j \in [\ell_{d+1}, r_{d+1}]$ (if exists), for $(i, j): \tilde x^{(2)}_{i, j} > 0$, both $i$ and $j$ must belong to one interval in $\calI_{odd}$. Therefore, interval set $\calI_{odd}$ is a feasible decomposition of solution $\tilde x^{(2)}$.

It remains to show that for each interval $[\ell, r] \in \calI_{odd}$, solution $\left\{\tilde x^{(2)}_{i, j} \cdot \one\big[i, j \in [\ell, r]\big]\right\}$ is a $2.5$-trades solution. Assume $\ell$ equals to $\ell_d$, where $d$ is odd. By the definition of $\tilde x^{(2)}$ and \Cref{lma:init-decompose}.\ref{item:init-decompose-ii}, solution $\left\{\tilde x^{(2)}_{i, j} \cdot \one\big[i, j \in [\ell, r]\big]\right\}$ includes all trades of solution $\tilde x$ that buys from a request within interval $[\ell_d, r_d]$, i.e., we have
\begin{align*}
    \sum_{i, j \in [\ell, r]^2} \tilde x^{(2)}_{i, j} \cdot \one\big[i, j \in [\ell, r]\big]  ~=~ \sum_{i \in [\ell_d, r_d]} q^b_i(\tilde x) ~\leq~ 2.5,
\end{align*}
where the inequality follows from \Cref{lma:init-decompose}.\ref{item:init-decompose-i}. Therefore, $\left\{\tilde x^{(2)}_{i, j} \cdot \one\big[i, j \in [\ell, r]\big]\right\}$ is a $2.5$-trades solution, and solution $\tilde x^{(2)}$ admits a $2.5$-trades decomposition.

\vspace{0.5em} \noindent \textbf{Proof of \Cref{lma:decompose-opt}.\ref{item:decompose-opt-ii}.} 
    We note that it suffices to show $\tilde x^{(1)}+ \tilde x^{(2)}+ \tilde x^{(3)} = \tilde x$, as when the equality holds, we have 
    \[
    \lp(\tilde x^{(1)}) + \lp (\tilde x^{(2)}) + \lp(\tilde x^{(3)}) ~=~ \lp(\tilde x) ~=~ \lp(x^*), 
    \]
    where the last equality follows from \Cref{clm:uncrossing}. Note that for every $(i, j): \tilde x_{i, j} > 0$, we have
\begin{align*}
    \tilde x_{i, j} ~&=~ \tilde x_{i, j} \cdot \left(\one[i = 0] + \one[i \in [\ell_d, r_d]:d \text{ is odd}] + \one[i \in [\ell_d, r_d]:d \text{ is even}]\right) \\
    ~&=~ \tilde x^{(1)}_{i, j} + \tilde x^{(2)}_{i, j} + \tilde x^{(3)}_{i, j},
\end{align*}
where the first equality holds since the conditions in three indicator functions partition all possibilities for tuple $(i, j)$. Therefore, equality $\tilde x^{(1)}+ \tilde x^{(2)}+ \tilde x^{(3)} = \tilde x$ holds.
\end{proof}


%% file: iid.tex
\newcommand{\Expectsell}{\alpha_S}
\newcommand{\Expectbuy}{\alpha_B}

\section{Positive Results for Identical Request Distributions}
\label{sec:iid-upper}

In this section, we analyze the case with where the requests at each
time are i.i.d.\ draws from some distribution $\D$. This i.i.d.\
structure allows us to write a time-independent LP giving an upper
bound on the optimal value; we then show a generic rounding algorithm
whose performance is parameterized in terms of the expected number of
sales made by the algorithm (see \Cref{thm:iid-large-general}). As a
corollary of this result, we get the  positive results stated in
\Cref{thm:label-iid-small-informal}, \Cref{thm:label-iid-large-informal}, and \Cref{thm:unlabel-iid-large-informal}.

\subsection{A Stronger Benchmark and Linear-Programming Relaxation}

To better exploit the assumption that all request are i.i.d.\ draws
from the distribution $\D$, we compare our online algorithms in this
section to a stronger benchmark. Consider an offline algorithm with no
capacity constraint for intermediate steps, i.e., the decision maker
only needs to make sure that the final number of copies falls between
$[0, B]$, but it may hold more than $B$ or even fewer than $0$ copies
during intermediate steps; equivalently, this offline algorithm is
allowed to \emph{exchange the order} of all arriving requests to stay
within the capacity bounds of $[0,B]$. Let such an offline algorithm
be called an ``order-independent algorithm'', and let the
``order-independent profit'' be the optimal profit achieved by such an
order-independent algorithm. Since this is a relaxation, the optimal
order-independent profit is at least the profit of the optimal offline
algorithm with capacity constraints (which is not allowed to reorder
requests). Furthermore, since the request are i.i.d., the optimal
order-independent algorithm, performs the same operation on every
request in expectation. Given this observation, we can further relax
the optimal order-independent profit to a linear program: assume the
distribution $\D$ has support size $K$, such that the distribution
$\D$ generates price tuple $(b^{(k)}, s^{(k)})$ with probability
$p^{(k)}$, for $k \in [K]$. Our LP relaxation is:

\begin{alignat}{2}
  \text{maximize} \qquad 
  T \cdot \sum_{k \in [K]} \big(s^{(k)} \cdot y_k &- b^{(k)} \cdot z_k
  \big)  && \tag{$\text{LP}_{\textsc{IID}}$} \label{program:iid} \\ 
  \text{s.t.}\quad
  y_k + z_k 
  &\leq p^{(k)} &\qquad& \forall k \in [K],  \label{eq:lp-iid-marginals} \\
  B_0 + T \cdot \sum_{k \in [K]} (z_k - y_k) &\in [0, B] && \forall t \in [T],\label{eq:lp-iid-capacity} \\
  y_k, z_k &\geq 0. && \notag
\end{alignat}

Intuitively, the LP variables $y_k$ and $z_k$ represent the
probability of an order-independent algorithm selling or buying a
copy, when the type of a request is $k$. Constraints
\eqref{eq:lp-iid-marginals} ensure that the probability of buying or
selling when the type is $k$ cannot exceed $p^{(k)}$; constraints
\eqref{eq:lp-iid-capacity} ensure that the algorithm holds at most $B$
copies and at least $0$ copy when the process ends. 

Let $(y^*, z^*)$
be the optimal solution to \ref{program:iid}. Define
\begin{align}
    \Expectbuy ~:=~ \sum_{k \in [K]} z^*_k \qquad \text{and} \qquad
  \Expectsell ~:=~ \sum_{k \in [K]} y^*_k \label{eq:iid-bsprob-balance} 
\end{align}
to be the expected number of copies we buy and sell at each
step. Without loss of generality, we assume that for solution $(y^*,
z^*)$, constraint \eqref{eq:lp-iid-capacity} achieves its lower bound,
i.e.,
\begin{gather}
  B_0 + T \cdot \sum_{k \in [K]} (z^*_k - y^*_k) ~=~ B_0 + T \cdot
  (\Expectbuy - \Expectsell) ~=~0, \label{eq:1}
\end{gather}
as otherwise we can either find some $z^*_k > 0$ such that decreasing
the value of $z^*_k$ increases the objective function, or such $z^*_k$
does not exist and the instance becomes trivial where we always sell
during the whole process. This observation further implies that the optimal solution $(y^*, z^*)$ is not constrained by the upper bound $B$ in \eqref{eq:lp-iid-capacity}.

We further define
\begin{align}
    \Gamma ~:=~ T \cdot \Expectsell ~=~ B_0 + T \cdot \Expectbuy  \label{eq:iid-gamma-def}
\end{align}
to be the expected number of copies we sell during the whole process.

If $\lp(y, z)$ is the objective value of \ref{program:iid}
corresponding to a feasible solution $(y, z)$, the following lemma
shows that $\lp(y, z)$ is at least the expected profit of an optimal
offline algorithm to the \ATP problem in the i.i.d.\ case.

\begin{Lemma}[Relaxation]
  \label{lma:iid-lprelax}
  The linear program \ref{program:iid} is a relaxation of the \ATP
  problem for i.i.d.\ requests. In other words, let $(y^*, z^*)$ be
  the optimal solution to \ref{program:iid}, then $\lp(y^*, z^*)$ is
  at least the expected profit of the optimal offline algorithm for
  the \ATP problem.
\end{Lemma}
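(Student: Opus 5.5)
Fix the optimal offline algorithm $\mathsf{OPT}$; it sees the whole realized sequence $(b_1,s_1),\dots,(b_T,s_T)$. From it we build a feasible solution $(y,z)$ of \ref{program:iid} whose objective equals the expected profit of $\mathsf{OPT}$. For each type $k\in[K]$ let
\[
y_k ~:=~ \frac{1}{T}\sum_{t=1}^T \pr\big[\text{type}_t = k \text{ and } \mathsf{OPT} \text{ sells at } t\big], \qquad z_k ~:=~ \frac{1}{T}\sum_{t=1}^T \pr\big[\text{type}_t = k \text{ and } \mathsf{OPT} \text{ buys at } t\big].
\]
These are time-averaged marginals. No symmetrization argument is needed: averaging over $t$ already removes the dependence on time, and since the requests are i.i.d., $\pr[\text{type}_t=k]=p^{(k)}$ for every $t$.

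\textbf{Objective.} By linearity of expectation, the expected profit of $\mathsf{OPT}$ is
\[
\sum_t \sum_k \Big(s^{(k)}\,\pr[\text{type}_t=k,\ \text{sell at } t] - b^{(k)}\,\pr[\text{type}_t=k,\ \text{buy at } t]\Big) ~=~ T\sum_k \big(s^{(k)} y_k - b^{(k)} z_k\big),
\]
which is exactly $\lp(y,z)$.

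\textbf{Marginal constraints \eqref{eq:lp-iid-marginals}.} At each step $\mathsf{OPT}$ takes exactly one action. Hence for every $t$ and $k$,
\[
\pr[\text{type}_t=k,\ \text{sell}] + \pr[\text{type}_t=k,\ \text{buy}] ~\le~ \pr[\text{type}_t=k] ~=~ p^{(k)}.
\]
Averaging over $t$ gives $y_k+z_k\le p^{(k)}$. Nonnegativity of $y,z$ is immediate.

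\textbf{Capacity constraint \eqref{eq:lp-iid-capacity}.} Let $N_{\text{buy}}$ and $N_{\text{sell}}$ be the total numbers of purchases and sales made by $\mathsf{OPT}$. On every realization, the final inventory $B_0+N_{\text{buy}}-N_{\text{sell}}$ lies in $[0,B]$, because $\mathsf{OPT}$ respects capacity at every step. Taking expectations,
\[
B_0 + T\sum_k (z_k-y_k) ~=~ B_0+\E[N_{\text{buy}}]-\E[N_{\text{sell}}] ~\in~ [0,B].
\]
So $(y,z)$ is feasible. Therefore $\lp(y^*,z^*)\ge \lp(y,z)=\E[\text{profit of }\mathsf{OPT}]$, and the same argument applies verbatim to the stronger order-independent benchmark.

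\textbf{Main obstacle.} The only real subtlety is that $\mathsf{OPT}$'s decision at time $t$ depends on the entire sequence, including future requests. This does not hurt the argument: every step uses only linearity of expectation and the pointwise facts that there is one action per step and the final inventory lies in $[0,B]$, none of which needs the decisions to be online. If $\mathsf{OPT}$ is randomized, or ties are broken arbitrarily, we include that randomness in the probabilities above and nothing changes.
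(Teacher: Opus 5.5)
Your proof is correct, but it takes a different and cleaner route than the paper's.

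\textbf{The paper's route.} It first passes to the optimal order-independent algorithm, which only needs its final inventory to lie in $[0,B]$. It then argues that, since requests are i.i.d., this algorithm's behavior on a request depends in expectation only on the request's type, not its position. Those time-invariant per-type buy/sell probabilities become the LP solution.

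\textbf{Your route.} You skip both the order-independent relaxation and the symmetry claim. Defining $y_k, z_k$ as time-averaged marginals of the offline optimum makes time-invariance automatic. This works because the objective and both constraints are linear in the per-step marginals, and $\pr[\text{type}_t = k] = p^{(k)}$ for every $t$.

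\textbf{What each approach buys.} Yours gains rigor and some generality. The paper asserts its symmetry step ``without loss of generality,'' but it is not automatic. An optimal order-independent policy that breaks ties among equal-type requests by index does not have time-invariant marginals. Making the step precise requires symmetrizing, which is essentially your averaging. Your argument uses only three facts: actions are exclusive at each step, the final inventory is pointwise in $[0,B]$, and the per-step marginals are identical. Independence is never used, and, as you note, the argument applies verbatim to the order-independent benchmark. The paper's route, in turn, explains why an LP with only an end-of-horizon inventory constraint is the natural relaxation: it is exactly the averaged description of an algorithm allowed to reorder requests.

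Your variable conventions ($y$ for sells, $z$ for buys) match the LP; the paper's write-up of this proof swaps them.
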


\begin{proof}
  Consider the structure of the optimal order-independent
  algorithm. Since the decision of such algorithm does not depend on
  the arrival order of requests, while every request is identically
  distributed, in expectation the optimal decision of the algorithm
  should only depend on the type of each request, rather than the
  order of their arrivals. Given this observation, we define without
  loss of generality $\tilde y_k$ to be the probability that the
  arriving request has type $k$ and the optimal order-independent
  algorithm buys at price $b^{(k)}$, and define $\tilde z_k$ to be the
  probability that the arriving request has type $k$ and the optimal
  order-independent algorithm sells at price $s^{(k)}$. Then, there
  must be $b^{(k)} + s^{(k)} \leq p^{(k)}$, as $p^{(k)}$ is the
  probability that type $k$ is realized. The inequality
  \[
    B_0 + T \cdot \sum_{k \in [K]} \tilde z_k - \tilde y_k ~\in [0, B]
  \]
  also holds, as the LHS represents the expected final inventory of
  the optimal order-independent algorithm. Then,
  $(\tilde y, \tilde z)$ is a feasible solution of \ref{program:iid},
  and therefore $\lp(y^*, z^*)$ upper-bounds the expected profit of
  the optimal offline algorithm, as we have
  $\lp(y^*, z^*) \geq \lp(\tilde y, \tilde z)$, while the optimal
  order-independent profit upper-bounds the expected profit of the
  optimal offline algorithm.
\end{proof}

Given \Cref{lma:iid-lprelax}, it suffices to provide an online
algorithm with an expected profit comparable to $\lp(y^*, z^*)$.

\subsection{A Unified Algorithm for  Large Capacity }

In this subsection, we prove the results for the setting where capacity
$B$ is sufficiently large. Our main technical theorem is
\Cref{thm:iid-large-general}; we later derive our results for both the
the symmetric case (where $b_t = s_t$ for all $t$) and the asymmetric
case from this theorem.

Consider \Cref{alg:iid-large} which performs a natural randomized
rounding, with one important change: as we approach the end of the
online process, we implement a sell-only strategy to clear the
remaining inventory to the extent possible. This sell-only stage is
necessary, as the benchmark $\lp(y^*, z^*)$ decomposes into two parts:
the revenue from selling the initial inventory $B_0$, and the trading
profit earned from $T$ requests. As the revenue from either part can
be arbitrarily larger than the other, we ensure that
\Cref{alg:iid-large} tries to clear its inventory when it ends.

\begin{algorithm}[tbh]
\caption{\textsc{Algorithm for IID Requests with Large Capacity}}
\label{alg:iid-large}
\begin{algorithmic}[1]
\State \textbf{Input:} Distribution $\D$ and optimal solution $(y^*, z^*)$ of \ref{program:iid}
\State Let $\tau = T - \left \lceil 13\log B \cdot \sqrt{T/\Expectsell} \right \rceil$.
\For{$t = 1, \ldots, T$}
\State Observe request $t$ with price tuple $(b_t, s_t) \sim \D$.
\State Let $k_t$ be the type of the request, i.e., say $(b_t, s_t) = (b^{(k_t)}, s^{(k_t)})$.
\State Draw $\rho_t \sim U[0, 1]$.
\If{$\rho_t \in [0, z^*_{k_t}/p^{(k_t)}]$ \textbf{and} $t \leq \tau$}
\State If the current inventory level is less than $B$, buy at $b^{(k_t)}$; otherwise, skip the request.
\EndIf
\If{$\rho_t \in [1 - y^*_{k_t}/p^{(k_t)}, 1]$}
\State If the current inventory level is not $0$, sell at $s^{(k_t)}$; otherwise, skip the request.
\EndIf 
\EndFor
\end{algorithmic}
\end{algorithm}

\begin{restatable}[General I.I.D.\ Theorem]{Theorem}{iidmain}
  \label{thm:iid-large-general}
  Given an \ATP instance with i.i.d.\ requests, \Cref{alg:iid-large}
  achieves a competitive ratio of
  $1 - O(\log B \cdot \Gamma^{-1/2} + B^{-1})$ against the optimal objective of \ref{program:iid}, provided $B_0 \geq 1$.
\end{restatable}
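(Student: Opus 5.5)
We compare the profit of \Cref{alg:iid-large} with $\lp(y^*,z^*) = T\sum_k (s^{(k)}y^*_k - b^{(k)}z^*_k)$ term by term. At each step $t$ the \emph{intended} action is the randomized rounding of $(y^*,z^*)$: an intended buy at price $b^{(k_t)}$ with probability $z^*_{k_t}$ (restricted to $t\le\tau$) and an intended sell with probability $y^*_{k_t}$. The two events are disjoint because $y^*_k+z^*_k\le p^{(k)}$. Let $I_t$ be the inventory before step $t$. The profit then splits into four pieces: the intended sales revenue $T\sum_k s^{(k)}y^*_k$; minus sales lost when $I_t=0$; minus intended purchase cost over steps $t\le \tau$, which is at most $T\sum_k b^{(k)} z^*_k$; plus purchase cost saved when $I_t=B$ (a nonnegative term we drop). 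The lost-sales term is exactly $\sum_t \Pr[I_t=0]\cdot \E[s_t\one[\text{intended sell}]] = \Pr[I_t=0]\cdot\sum_k s^{(k)}y^*_k$, since the step-$t$ request and $\rho_t$ are independent of $I_t$. The structural fact driving the argument is that the per-step expected sell revenue $S:=\sum_k s^{(k)}y^*_k$ is a fixed quantity, so all losses can be charged as $\sum_t\Pr[I_t=0]\cdot S$.

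Two further losses need care. First, when $I_t=B$ an intended buy is skipped; this loses nothing in cost but can later cause stockouts. Second, and more importantly, intended sales are not matched to intended buys. We can write $\lp = T S - T\sum_k b^{(k)}z^*_k$, and by \eqref{eq:1} the trading part $T(S-\ldots)$ is nonnegative while the revenue $TS$ can be much larger than $\lp$. We will charge losses relative to the sales value, so we need $\lp \ge$ the initial-inventory value. We lower bound $\lp$ by $\tfrac{B_0}{\Gamma}\cdot TS$: scaling $z^*$ down to $0$ and $y^*$ by $B_0/\Gamma$ gives a feasible solution with this value, so $\lp\ge B_0 S T/\Gamma$. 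Losses of order $\epsilon\cdot TS$ are therefore not directly acceptable. Instead we will write the algorithm's profit as $\lp$ minus (stockout losses) and bound those losses by $O(\log B/\sqrt\Gamma + 1/B)\cdot \lp$, coupling each lost sale to a trade of the LP. The cleaner route is to express $\lp = \sum_t(\text{intended sell value} - \text{intended buy cost})$ and note that a lost sale at time $t$ has expected value $S$ per step. We then need $\sum_t \Pr[I_t=0]\cdot S = O(\epsilon)\lp$, i.e., the expected number of stockout steps weighted appropriately is small.

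\textbf{Stockout analysis (main obstacle).} Before $\tau$ the inventory is a lazy random walk reflected at $0$ and $B$, with drift $\Expectbuy-\Expectsell=-B_0/T$ and per-step variance about $\Expectsell+\Expectbuy$, started at $B_0$. Ignoring the cap $B$, the number of steps with $I_t=0$ is, by a ballot/reflection argument, roughly the local time at $0$. The walk's total expected net displacement is $-B_0$ with standard deviation $\sqrt{\Gamma}$, so the expected number of stockout steps before time $\tau$, times $\Expectsell$, is about $O(\sqrt\Gamma)$ intended sales lost. This is small compared with $\Gamma$ (the total sales count) only in ratio $1/\sqrt\Gamma$, and I would charge it against $\lp$ via a comparison: the LP restricted to its best sales yields at least $S\cdot$(lost count) proportionally. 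Concretely, I would use Freedman/Bernstein concentration to show that, with probability $1-B^{-2}$, $|I_t - (B_0 - tB_0/T)|\le 6\log B\sqrt{t\Expectsell}$ whenever this is below $B$. This bounds the cap-$B$ losses by $O(1/B)$ and stockout losses by $O(\log B/\sqrt\Gamma)$. The final window of length $T-\tau = 13\log B\sqrt{T/\Expectsell}$ sells about $13\log B\sqrt{\Gamma}$ times, which exceeds the residual inventory deviation w.h.p., so the final unsold inventory (whose purchase cost was paid) is small. Meanwhile the purchases skipped in that window cost only a $(T-\tau)/T=O(\log B/\sqrt\Gamma)$ fraction of the trading term. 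The delicate point I expect to be hardest is converting these count bounds into a multiplicative loss against $\lp$ rather than against $TS$. I would handle it by noting that every unit sold leftover or lost corresponds, in a symmetric i.i.d. way, to an average LP trade, using exchangeability across time of the intended actions to argue that the lost value is proportional (at rate $O(\log B/\sqrt\Gamma+1/B)$) to $\lp$ itself.
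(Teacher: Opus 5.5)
\textbf{Your reduction is false, so the plan cannot be completed as written.} You discard two terms: the purchase cost saved when $I_t=B$, and the cost of the buys skipped after $\tau$. This gives profit $\ge \lp-\bar s\,L$, where $L$ is the expected number of sell attempts blocked at $0$. You then need $\bar s\,L=O(\log B/\sqrt{\Gamma}+1/B)\cdot\lp$. But inventory conservation (sales $=B_0+{}$buys${}-{}$final inventory) gives exactly
\[
L ~=~ \E[X_T]+(T-\tau)\,\alpha_B+M ~\ge~ 13\log B\,\sqrt{\Gamma}\,\bigl(1-B_0/\Gamma\bigr),
\]
where $X_T$ is the final inventory and $M$ is the expected number of buy attempts blocked at $B$. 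The buys skipped in your own sell-only window already account for $(T-\tau)\alpha_B$ of these lost sales.

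Now take any instance with $\bar b=\bar s$, e.g.\ every tuple equal to $(1,1)$ with the optimal solution $y=\tfrac12+\tfrac{B_0}{2T}$, $z=\tfrac12-\tfrac{B_0}{2T}$. Then $\lp=B_0$, and your target would say $\log B\sqrt{\Gamma}=O(\log B\cdot B_0/\sqrt{\Gamma}+B_0/B)$. This fails once $\Gamma\gg B_0$, even though the algorithm's true profit there is $B_0-\E[X_T]$, which is fine. No exchangeability argument can repair a false inequality. A blocked or skipped buy costs only the margin $\bar s-\bar b$, and you discarded exactly the $\bar b$ part.

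The missing idea is to substitute (executed sales) $=B_0+{}$(executed buys)${}-X_T$ into the profit. This gives $(B_0-\E[X_T])\,\bar s+\E[\text{executed buys}]\,(\bar s-\bar b)$, which you compare termwise with $\lp=B_0\bar s+T\alpha_B(\bar s-\bar b)$. Both summands are nonnegative, the second by the scaling argument you already gave. Lost sales then never need to be bounded. What remains is:
\begin{itemize}
\item $\E[X_T]\le(\tfrac{T-\tau}{T}+O(1/B))B_0$, which your final-window argument essentially supplies;
\item the $(T-\tau)/T$ term;
\item a bound on $M$, charged at rate $\bar s-\bar b$ against $\Gamma(\bar s-\bar b)\le\lp$.
\end{itemize}

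\textbf{The bound on $M$ is a second gap.} Your concentration estimate shows only that the cap is never reached when $\sqrt{\Gamma}\log B\ll B$. For $\Gamma\gtrsim B^2$ the inventory reaches $B$ repeatedly, there is no high-probability avoidance, and you give no mechanism for the claimed $O(1/B)$ loss. What is needed is $M\le\Gamma/B+O(\sqrt{\Gamma}\log B)$, a bound on the expected time spent at the boundary.

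The paper obtains this in two steps. First, monotone couplings reduce to a driftless lazy walk $\widetilde X_t$ on $\{0,\dots,B\}$ started at $0$: run the capped walk for all $T$ steps, start it at $B$, replace the step probabilities by $\bar\alpha=(\alpha_S+\alpha_B)/2$ in both directions, then mirror $x\mapsto B-x$. Second, it combines the one-step drift identities for $\widetilde X_t$ and $\widetilde X_t^2$ to get $\E[\text{steps blocked at }0]\le\bar\alpha T/B+\E[\widetilde X_T]$, and Freedman's inequality gives $\E[\widetilde X_T]=O(\sqrt{\bar\alpha T}\log B)$. Some such second-moment argument is indispensable in the $\Gamma\gtrsim B^2$ regime.
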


We defer the somewhat technical proof of \Cref{thm:iid-large-general}
to \Cref{sec:iid-mainalg-missing}; here is a high-level idea of the
proof.  Under the assumption that \Cref{alg:iid-large} clears all the
inventory, its performance depends on the number of \emph{blocked}
trades, i.e., buys that cannot happen when the inventory equals $B$,
and the sales that are blocked because the inventory reaches $0$. We
show that when $\Gamma$ is much smaller than $B^2$, since the standard
deviation of the whole trading process is $O(\sqrt{\Gamma})$, and
hence the number of blocked trades is bounded by $O(\sqrt{\Gamma})$;
on the other hand, when $\Gamma$ is much larger than $B^2$, the
trading process tends to stabilize, allowing us to show that the
inventory level at each timestep is close to being uniformly
distributed over $[0, B]$. In turn, this means that an $O(B^{-1})$
fraction of trades are blocked in expectation. Combining these two
cases proves \Cref{thm:iid-large-general}. While we defer the full
proof to the appendix, we now derive our results for the i.i.d.\ case
from it in the next subsections.

\subsection{Positive Results for  the Large Capacity Case}
\label{sec:posit-results-large-B}

We first use \Cref{thm:iid-large-general} to prove our positive
results in
\Cref{thm:label-iid-large-informal} and \Cref{thm:unlabel-iid-large-informal} for
the case of large budgets; the corresponding lower bounds of these
will be proved in \Cref{sec:lower}.

\begin{proof}[Proof of
  \Cref{thm:unlabel-iid-large-informal}\ref{item:asym-large-ub} and \ref{thm:label-iid-large-informal}\ref{item:sym-large-ub}]
    To prove the competitive ratio for the asymmetric case, we apply \Cref{alg:iid-large} with an extra pre-processing step: given an ATP instance with i.i.d. requests, we first solve \ref{program:iid} with the original capacity upper bound $B$, but run \Cref{alg:iid-large} with a  reduced capacity upper bound $B_0$. This pre-processing step is feasible, as we recall that the optimal solution $(y^*, z^*)$ of \ref{program:iid} is not constrained by the upper bound $B$ in constraint \eqref{eq:lp-iid-capacity}. Since $\Gamma = B_0 + T \cdot \Expectbuy \geq B_0$, \Cref{alg:iid-large} achieves at least $1 - O(\frac{\log B_0}{\sqrt{B_0}})$ times the $\lp(y^*, z^*)$, and therefore the competitive ratio of \Cref{alg:iid-large} is $1 - O(\frac{\log B}{\sqrt{B_0}})$.

  To prove the stronger bound of $1 - O(\nicefrac{(\log B)}{\sqrt{B}})$ for
  the symmetric case where $b^{(k)} = s^{(k)}$, our proof relies on
  the following observation: given an \ATP instance with symmetric
  prices, let $(y^*, z^*)$ be the optimal solution of
  \ref{program:iid}. If $y^*_k + z^*_k < p^{(k)}$ for some
  $k \in [K]$, increase $y^*_k$ and $z^*_k$ equally until constraint
  \eqref{eq:lp-iid-marginals} becomes tight; note that the constraint
  \eqref{eq:lp-iid-capacity} and the objective function value remain
  unchanged. Therefore, without loss of generality we may assume that
  constraint \eqref{eq:lp-iid-marginals} is tight, and therefore
  \[
    \Expectsell + \Expectbuy ~=~ \sum_{k \in [K]} (y^*_k + z^*_k) ~=~ \sum_{k \in [K]} p^{(k)} ~=~ 1,
  \]
  which implies
  \[
    2\Gamma ~=~ B_0 + T \cdot (\Expectsell + \Expectbuy) ~=~ B_0 + T ~\geq~ B;
  \]
  the final inequality uses the assumption from~\Cref{sec:our-model}
  that $B \leq B_0+T$. Therefore, the competitive ratio of
  \Cref{alg:iid-large} for the \ATP problem with symmetric prices
  becomes $1 - O(\frac{\log B}{\sqrt{B}})$.
\end{proof}

\subsection{Positive Results for the Unit Capacity Case}
\label{sec:posit-results-unit}

We now prove \Cref{thm:label-iid-small-informal}, our result for the
\ATP problem with i.i.d.\ requests. Given an \ATP instance with
request distribution $\D$, having unit capacity $B$ and unit initial
inventory level $B_0$, let $(y^*, z^*)$ be the optimal solution of
\ref{program:iid}. To begin, observe that~(\ref{eq:1})
implies that
\[ T \cdot \sum_k y^*_k = T\cdot \Expectsell \geq B_0 = 1. \]
Now define two solutions to (\ref{program:iid}):
\begin{alignat}{2}
  y'_k &= y^*_k/(T\Expectsell) & \qquad\text{and}\qquad & z'_k = 0, \\
  y''_k &= y^*_k - y'_k &\qquad\text{and}\qquad & z''_k = z^*_k. \label{eq:balanced-solution}
\end{alignat}



\begin{Lemma}
  \label{lma:unit-iid-decompose}
  The solutions $(y', z')$ and
  $(y'', z'')$ are both feasible solutions of \ref{program:iid}.
\end{Lemma}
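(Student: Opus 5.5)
We check the three constraint families of \ref{program:iid} directly for each solution, relying only on the optimality-derived identity \eqref{eq:1} and the unit-capacity setting $B = B_0 = 1$. The key quantity is the scaling factor $1/(T\Expectsell)$. By \eqref{eq:1} we have $T\Expectsell = B_0 + T\Expectbuy \geq B_0 = 1$, so this factor lies in $(0,1]$; this is well defined because $T\Expectsell \geq 1 > 0$.

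\textbf{Nonnegativity and marginals.} For $(y',z')$: $y'_k = y^*_k/(T\Expectsell) \le y^*_k$ and $y'_k \ge 0$. Hence $y'_k + z'_k = y'_k \le y^*_k \le y^*_k + z^*_k \le p^{(k)}$. For $(y'',z'')$: $y''_k = y^*_k\bigl(1 - 1/(T\Expectsell)\bigr) \ge 0$ because the factor is at most $1$. Also $z''_k = z^*_k \ge 0$, and $y''_k + z''_k \le y^*_k + z^*_k \le p^{(k)}$.

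\textbf{Capacity constraint \eqref{eq:lp-iid-capacity}.} This is the step to handle with care. For $(y',z')$ we get
\[
B_0 + T\sum_k (z'_k - y'_k) = 1 - T\cdot \frac{\Expectsell}{T\Expectsell} = 0 \in [0,1].
\]
For $(y'',z'')$, note that $\sum_k y''_k = \Expectsell - 1/T$. So
\[
B_0 + T\sum_k (z''_k - y''_k) = 1 + T\Expectbuy - T\Expectsell + 1 = 2 + (-B_0) - 1 \cdot 0,
\]
which is not yet in usable form. Expanding with \eqref{eq:1}, $T(\Expectbuy - \Expectsell) = -B_0 = -1$, so the value is $1 - 1 + 1 = 1 \in [0,B] = [0,1]$. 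This resolves the concern.

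Conceptually, $(y',z')$ is the ``sell the initial copy'' solution, ending with inventory $0$. The solution $(y'',z'')$ is a pure trading solution that ends with the initial unit still held, so its final inventory is $B_0 = 1 = B$, which is still feasible. The only potential obstacle is making sure $B_0 = B = 1$ is used so that the final inventory $1$ of $(y'',z'')$ does not exceed $B$. Everything else is immediate from $T\Expectsell \ge 1$.
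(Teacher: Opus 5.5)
Your proof is correct and follows the paper's argument essentially step for step. The bound $T\alpha_S = B_0 + T\alpha_B \ge 1$ gives nonnegativity and the marginal constraints for both solutions, and the identity $B_0 + T(\alpha_B - \alpha_S) = 0$ gives final inventory $0$ for $(y',z')$ and $B_0 + T(\alpha_B - \alpha_S) + 1 = 1 \le B$ for $(y'',z'')$. The intermediate display ``$2 + (-B_0) - 1\cdot 0$'' is garbled but harmless, since it already equals $2 - B_0 = 1$; you can simply drop it in favor of the clean computation that follows.
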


\begin{proof}
  Since $T\Expectsell \geq 1$, we have that $y'_k \in [0, y^*_k]$ and
  hence $y''_k \leq y^*_k$. Therefore both the non-negativity
  constraints and constraint~(\ref{eq:lp-iid-marginals}) are satisfied
  for both solutions. To verify~(\ref{eq:lp-iid-capacity}) for
  $(y', z')$, note that
  $B_0 + T \sum_k (z'_k - y'_k) = B_0 - \sum_k y'_k = B - 1 = 0$.
  Finally,
  $B_0 + T \sum_k (z''_k - y''_k) = B_0 + T(\Expectbuy - \Expectsell)
  + 1 = 0 + 1 \leq B$, where we used the definition of $(y'',z'')$ for the
  first equality, and (\ref{eq:1}) for the second equality.
\end{proof}

\Cref{lma:unit-iid-decompose} shows the feasibility of $(y', z')$ and $(y'', z'')$. We now give rounding algorithms for the
two solutions $(y', z')$ and $(y'', z'')$. For solution $(y', z')$, we show in \Cref{lma:unit-iid-ocrs} that an independent rounding algorithm recovers a constant fraction of $\lp(y', z')$.

\begin{Lemma}
  \label{lma:unit-iid-ocrs}
  There exists an algorithm that rounds solution $(y', z')$ to get an
  expected profit of $\nicefrac12 \cdot \lp(y', z')$.
\end{Lemma}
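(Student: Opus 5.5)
The solution $(y',z')$ never buys: $z'=0$. It sells in expectation exactly $T\sum_k y'_k = T\Expectsell/(T\Expectsell)=1$ copy over the horizon, and that copy is the single initial unit of inventory. So $\lp(y',z') = T\sum_k s^{(k)} y'_k$ is the LP value of a single-item prophet-inequality instance. There are $T$ i.i.d.\ arrivals, and at each one the item is offered with value $s^{(k)}$ with marginal probability $y'_k$ per type. The marginals satisfy $\sum_t\sum_k y'_k = 1$, which is the ex-ante feasibility constraint for a single item. I will therefore round using the standard single-item OCRS / ex-ante prophet inequality, which guarantees a $\nicefrac12$ fraction of the ex-ante relaxation. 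Throughout the run I buy nothing, and I sell the initial copy at most once.

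Concretely, at each step $t$ the algorithm observes type $k_t$ and declares the request \emph{active} with probability $y'_{k_t}/p^{(k_t)}$. This is at most $1$ by constraint \eqref{eq:lp-iid-marginals}. Each step is therefore active with probability $\sum_k y'_k = 1/T$, independently across steps. If the step is active and the copy is still held, the algorithm sells with an additional probability $c_t$. I choose $c_t$ so that the unconditional probability of selling at step $t$ is exactly $\frac{1}{2T}$, which is the $\nicefrac12$-selectable OCRS for a rank-one matroid (the Feldman--Svensson--Zenklusen construction). Let $h_t$ be the probability the copy is still held before step $t$. Then $c_t=\frac{1}{2h_t}$, and by induction
\[
h_t = 1-\frac{t-1}{2T}\ \ge\ \tfrac12 ,
\]
so $c_t\le 1$ and the scheme is well defined.

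The step that needs care is the value accounting, i.e., showing the profit is exactly $\nicefrac12\cdot\lp(y',z')$. The decision to sell at step $t$ depends only on the past and on independent coins. Hence, conditioned on being active at step $t$, the type is distributed proportionally to $y'_k$, independently of whether the copy is still held. The expected revenue at step $t$ is then
\[
\frac{1}{2T}\cdot \frac{\sum_k y'_k s^{(k)}}{\sum_k y'_k} \;=\; \frac12\sum_k y'_k s^{(k)} .
\]
Summing over the $T$ steps gives $\frac12\cdot T\sum_k y'_k s^{(k)} = \nicefrac12\cdot\lp(y',z')$, since no buying cost is incurred. I expect the independence argument in this step to be the main thing to verify carefully, and the induction on $h_t$ is what keeps the probabilities $c_t$ valid.
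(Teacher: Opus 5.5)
Your proof is correct, but it takes a different route from the paper. The paper does not attenuate. It sells the copy at a type-$k$ arrival with probability $y'_k/p^{(k)}$ and stops once the copy is sold. Because arrivals are i.i.d., the conditional sale probability while the copy is held is exactly $\sum_k y'_k = 1/T$ at every step, and the price conditioned on a sale is always distributed proportionally to $y'_k$. The revenue is therefore $\Pr[\text{sold}]\cdot T\sum_k y'_k s^{(k)}$ with $\Pr[\text{sold}] = 1-(1-1/T)^T \ge 1-1/e$. This gives $(1-1/e)\cdot\lp(y',z') \ge \frac12\lp(y',z')$, using $s^{(k)}\ge 0$.

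Your attenuated scheme chooses $c_t = 1/(2h_t)$ so that every step sells with probability exactly $\frac{1}{2T}$. It gives up the better constant $1-1/e$, but it returns exactly $\frac12\lp(y',z')$. That matches the lemma as literally stated and the equality written in the proof of \Cref{thm:label-iid-small-informal}. It also never uses that the per-step activation probabilities are equal, so it would carry over to non-identical per-step marginals. There, plain independent rounding has no constant-factor guarantee, since an early step activated with probability close to $1$ starves later high-value steps.

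Your independence argument for the value accounting is sound. So is the bound $h_t \ge \frac{T+1}{2T} > \frac12$, which keeps $c_t\le 1$.
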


\begin{proof}
  Observe that the fractional process corresponding to $(y', z')$
  performs no buying action, and the total number of (fractional)
  selling action is $1$. Therefore, solution $(y', z')$ can be viewed as the LP relaxation of a  single-item Prophet Inequality instance. To be specific, consider the following independent rounding algorithm: when price tuple $(b_t, s_t)$ is realized with type $k$, i.e., $(b_t, s_t) = (b^{(k)}, s^{(k)})$, we sell the unit inventory at price $s^{(k)}$ with probability $\nf{y'_k}{p^{(k)}}$, and terminate the process once this only copy is sold. Then, given that the inventory is not sold before $(b_t, s_t)$ is revealed, the probability that the inventory get sold is exactly
  \[
  \sum_{k \in [K]} p^{(k)} \cdot \frac{y'_k}{p^{(k)}} ~=~ \sum_{k \in [K]} p^{(k)} ~=~ \frac{\sum_{k\in [K]} y^*_k}{T \cdot \Expectsell} ~=~ T^{-1},
  \]
  and therefore the inventory remains unsold during the whole process is at least
  \[
  \left( 1 - \frac{1}{T} \right)^T ~\leq~ \frac{1}{e}.
  \]
  Therefore, the expected profit from this independent rounding process is at least
  \[
  \left(1 - \frac{1}{e}\right) \cdot \frac{\sum_{k \in [K]} p^{(k)} \cdot \frac{y'_k}{p^{(k)}} \cdot s^{(k)}}{T^{-1}} ~=~ \left(1 - \frac{1}{e}\right) \cdot \lp(y', z') ~\geq~ \frac{1}{2} \cdot \lp(y', z').
  \]
  where the equality follows from the fact that $z'_k = 0$ for every $k \in [K]$. 
\end{proof}

To round the solution $(y'',z'')$, we give the following \Cref{alg:unit-iid}.
\begin{algorithm}[tbh]
  \caption{\textsc{Algorithm for Solution $(y'', z'')$}}
  \label{alg:unit-iid}
  \begin{algorithmic}[1]
    \State \textbf{Input:} Solution $(y'', z'')$ defined in \eqref{eq:balanced-solution}.
    \State Drop the initial copy with probability $0.5$.
    \For{$t = 1, \ldots, T$}
    \State  Observe request $t$ with price tuple $(b_t, s_t) \sim \D$. Let $k_t$ be its type, i.e., $(b_t, s_t) = (b^{(k_t)}, s^{(k_t)})$.
    \State Draw $\rho_t \sim U[0, 1]$.
    \If{$\rho_t \in [0, z''_{k_t}/p^{(k_t)}]$}
    \State If the current inventory is $0$, buy at $b^{(k_t)}$; otherwise, skip the request.
    \EndIf
    \If{$\rho_t \in [1 - y''_{k_t}/p^{(k_t)}, 1]$}
    \State If the current inventory is $1$, sell at $s^{(k_t)}$; otherwise, skip the request.
    \EndIf 
    \EndFor
  \end{algorithmic}
\end{algorithm}

\begin{Lemma}
  \label{lma:unit-iid-main}
  \Cref{alg:unit-iid} gets an expected profit of
  $\nicefrac12 \cdot\lp(y'', z'')$.
\end{Lemma}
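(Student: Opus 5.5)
Our approach is to show that \Cref{alg:unit-iid} keeps the inventory distribution stationary at exactly $(\nf12,\nf12)$ over $\{0,1\}$ before every request. Once that invariant holds, linearity of expectation gives the profit immediately.

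First we set up the invariant. After the initial coin flip in Line 2, the inventory is $1$ with probability exactly $\nf12$. We claim by induction on $t$ that, just before request $t$ arrives, $\pr[\text{inventory}=1]=\nf12$. The induction also needs the fact that the inventory level before request $t$ is independent of the pair $(k_t,\rho_t)$. This holds because $(k_t,\rho_t)$ is fresh randomness, while the inventory is a function of the coin flip and of $(k_{t'},\rho_{t'})_{t'<t}$.

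Next we compute the one-step transition. Given inventory $0$, the algorithm buys with probability
\[
\sum_k p^{(k)}\cdot \frac{z''_k}{p^{(k)}} = \sum_k z''_k = \Expectbuy .
\]
Given inventory $1$, it sells with probability $\sum_k y''_k = \Expectsell - \nf1T$, because $\sum_k y'_k = \nf1T$. Equation~\eqref{eq:1} with $B_0=1$ gives $1+T(\Expectbuy-\Expectsell)=0$, so $\Expectbuy = \Expectsell - \nf1T$. The buy and sell probabilities are therefore equal; call the common value $\beta$.

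We also check that the two intervals $[0,z''_k/p^{(k)}]$ and $[1-y''_k/p^{(k)},1]$ do not overlap, which follows from $y''_k+z''_k\le p^{(k)}$. This matters only for well-definedness: in any case, at most one of the two actions is feasible at a given inventory level. The invariant is then preserved, since
\[
\pr[\text{inv}_{t+1}=1]=\tfrac12(1-\beta)+\tfrac12\beta=\tfrac12 .
\]

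Now we sum the profit. At step $t$ the algorithm sells at price $s^{(k)}$ exactly when the inventory is $1$, the type is $k$, and $\rho_t$ lands in the sell interval. By independence this event has probability $\tfrac12\cdot p^{(k)}\cdot y''_k/p^{(k)}=\tfrac12 y''_k$. Likewise, it buys at price $b^{(k)}$ with probability $\tfrac12 z''_k$. The expected profit over the $T$ steps is therefore
\[
T\sum_k\Bigl(\tfrac12 s^{(k)}y''_k-\tfrac12 b^{(k)}z''_k\Bigr)=\tfrac12\,\lp(y'',z'').
\]
The initial copy carries no cost in either the algorithm or the LP objective, and dropping it contributes nothing. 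The identity is therefore exact, and Lemma~\ref{lma:unit-iid-main} follows.

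The only delicate point is the balance $\Expectbuy=\sum_k y''_k$. Without this balance the chain drifts, the probability of holding a copy moves away from $\nf12$, and the per-step factor of exactly $\nf12$ is lost. That is precisely why the solution was split off as $(y',z')$ to absorb the single extra unit of selling mass, leaving $(y'',z'')$ perfectly balanced. We expect the rest to be routine.
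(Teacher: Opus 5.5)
Your proposal is correct and follows the same argument as the paper. You use the balance $\sum_k y''_k = \Expectsell - \nf1T = \Expectbuy = \sum_k z''_k$ (from \eqref{eq:1} with $B_0=1$) to show by induction that the inventory is $1$ with probability exactly $\nf12$ before every request, then conclude by linearity of expectation; your extra checks, that the inventory is independent of $(k_t,\rho_t)$ and that the buy and sell intervals are almost surely disjoint, are details the paper leaves implicit.
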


\begin{proof}
  Our proof relies on the following key observation: note that
  \begin{align*}
    \sum_{k \in [K]} y''_k ~=~ \sum_{k \in [K]} y^{*}_k - \sum_{k \in
    [K]} y'_k ~=~ \Expectsell - \frac{B_0}{T} ~=~ \frac{\Gamma -
    B_0}{T} ~=~ \Expectbuy ~=~ \sum_{k \in [K]} z''_k, 
  \end{align*}
  i.e., regarding each request, following solution $(y'', z'')$ leads
  to the same probability of buying or selling a copy. Then, the
  probability that the inventory is $0$ (or $1$) before every request
  $t \in [T]$ arrives must be $0.5$.

We prove the above statement via induction. The base case is $t = 1$. The statement holds as we drop the initial copy with probability $0.5$. Now assume $\pr[\text{inventory is }0 \text{ before }t] = 0.5$. For request $t$, we have
\begin{align*}
    \pr[\text{inventory is }0 \text{ before }t + 1] ~=~& \pr[\text{inventory is }0 \text{ before }t] \cdot \Big(1 - \sum_{k \in [K]} p^{(k)} \cdot \frac{z''_k}{p^{(k)}} \Big) \\
    ~&+ \pr[\text{inventory is }1 \text{ before }t] \cdot  \sum_{k \in [K]} p^{(k)} \cdot \frac{y''_k}{p^{(k)}} \\
    ~=~& \nicefrac12 \Big(1 - \sum_{k \in [K]} z''_k + \sum_{k \in [K]} y''_k \Big) ~=~ 0.5.
\end{align*}
Therefore, $\pr[\text{inventory is }0 \text{ before }t + 1] = 0.5$ also holds, which implies the probability that the inventory is $0$ (or $1$) before every request $t \in [T]$ arrives is $0.5$. Then, by linearity of expectation, the expected profit of \Cref{alg:unit-iid} is
\begin{align*}
    &\sum_{t \in [T]} \sum_{k \in [K]} p^{(k)} \cdot \frac{y''_k}{p^{(k)}} \cdot s^{(k)} \cdot \pr[\text{inventory is }1 \text{ before }t] \\
    &~~- \sum_{t \in [T]} \sum_{k \in [K]} p^{(k)} \cdot \frac{z''_k}{p^{(k)}} \cdot b^{(k)} \cdot \pr[\text{inventory is }0 \text{ before }t] \\
    ~=~& \nf{T}2 \cdot \Big(\sum_{k \in [K]} y''_k \cdot s^{(k)} -
         \sum_{k \in [K]} z''_k \cdot b^{(k)} \Big) ~=~ \nf12\cdot
         \lp(y''_k, z''_k).
         \qedhere
\end{align*}
\end{proof}

Finally, combining
\Cref{lma:unit-iid-decompose}, \Cref{lma:unit-iid-ocrs}, and \Cref{lma:unit-iid-main} together
proves \Cref{thm:label-iid-small-informal}.

\begin{proof}[Proof of \Cref{thm:label-iid-small-informal}]
  Running the algorithm from \Cref{lma:unit-iid-ocrs} with probability
  $\nicefrac12$, and \Cref{alg:unit-iid} otherwise, gives an expected
  profit of 
  \[
    \nicefrac12 \cdot \nicefrac12 \cdot \lp(y'_k, z'_k) + \nicefrac12
    \cdot \nicefrac12 \cdot \lp(y''_k, z''_k) ~=~ \nicefrac{1}{4}
    \cdot \lp(y^*, z^*),
  \]
  where the equality uses the fact that $y^*_k = y'_k + y''_k$ and
  $z^*_k = z'_k + z''_k$ for every $k \in [K]$.
\end{proof}


%% file: lower.tex
\section{Negative Results for the \ATP Problem}
\label{sec:lower}

In this section, we prove the hardness results for the \ATP
problem. We start with \Cref{thm:no-initial}, which states that no
competitiveness is achievable with no initial inventory, i.e., with
$B_0 = 0$.

\begin{proof}[Proof of \Cref{thm:no-initial}]
  Consider an \ATP instance with $B_0 = 0$ and $T = B$, where requests
  are i.i.d.\ draws from the following distribution $\D$:
  \begin{align*}
    (b, s) \sim D = 
    \begin{cases} 
      (1, 0) & \text{w.p. } 1 - \frac{0.1}{B} \\
      (+\infty, 2) & \frac{0.1}{B}  \\
    \end{cases}
  \end{align*}
  Note that if an online algorithm observes a price tuple $(1, 0)$ and
  decides to buy at price $1$, the probability that it can sell at
  price $2$ in the future is at most
  \[
    1 - \Big(1 - \frac{0.1}{B}\Big)^T ~\approx 1 - e^{-0.1} ~\leq~ 0.1.
  \]
  This suggests that an online algorithm should never buy at price
  $1$, since the expected profit is at most $-1 + 2 \cdot 0.1 < 0$;
  consequently, the optimal online algorithm has an expected profit of
  $0$. However, an optimal offline algorithm, which can see the
  future, can check if there is a price tuple $(+\infty, 2)$ following
  some occurrence of $(1,0)$---which happens with constant
  probability---and get unit profit in these scenarios. This implies
  that no online algorithm can have finite competitiveness with the
  offline optimum.
\end{proof}

\begin{proof}[Proof of \Cref{thm:label-iid-large-informal}\ref{item:asym-large-lb}]
  For the negative result in \Cref{thm:label-iid-large-informal}, note that when we have $b^{(k)} = +\infty$ for every price tuple $(b^{(k)}, s^{(k)})$ in the support of the identical request distribution $\D$, the \ATP problem degenerates to the $B_0$-unit Prophet Inequality problem with identical value distributions, leading to a $1 - \Omega(1/\sqrt{B_0})$ lower bound (see \cite{hajiaghayi2007automated}).
\end{proof}

\subsection{The Negative Result for Symmetric Prices}

Recall that we now need to prove
\Cref{thm:unlabel-iid-large-informal}\ref{item:sym-large-lb}, i.e., to
bound the performance of any online algorithm for the setting where
the prices are symmetric (i.e., $b_t = s_t)$) and the requests are
i.i.d.\ draws from some distribution $\D$.

Consider the following symmetric \ATP instance with $B = 1.5T = 3B_0$
for some large enough initial inventory level $B_0$; observe that we
can ignore the capacity bound $B$, because $B = B_0 + T$. (We assume
$T$ is suitably large for the lower bound.) The underlying
distribution $\D$ is defined as follows:
\begin{align*}
  (b, s) \sim D = 
  \begin{cases} 
    (1, 1) & \text{w.p. } 0.75 \\
    (0, 0) & \text{w.p. } 0.25 - 1/\sqrt{B}  \\
    (0.5, 0.5) & \text{w.p. } 1/\sqrt{B}
  \end{cases}
\end{align*}
The strategy that buys items at price $0$ and sells at $1$ (whenever
possible) makes $\Omega(B)$ profit, and this is trivially the best
possible. Therefore, to show that the competitive ratio of any online
algorithm is at most $1 - \Omega(1/\sqrt{B})$, it suffices to show
that any online algorithm incurs at least an $\Omega(\sqrt{B})$
(additive) regret against some offline algorithm---i.e., to show that
for any online algorithm, there exists an offline algorithm achieving
an expected profit of $\Omega(\sqrt{B})$ more than the online
algorithm.


For the proof, we define three events, which we show happen with
constant probability each.
For constants $c_2 \ll c_1 \ll 1$, the two
events are as follows:
\begin{itemize}
\item Event $\cE_0$: The number of occurrences of each price
  $0, \nf12, 1 $ in the time interval $[1,T/2]$ deviates from its
  expectation by at most $c_1\, \sqrt{B}$.
\item Event $\cE_1$: The number of occurrences of the ``low'' prices
  $0, \nf12$ during $[T/2+1, T]$ is low enough that following the
  strategy (at all times in $[1,T]$) of buying when a price $0$ or
  $\nf12$ arrives, and selling when the price $1$ arrives, ensures
  that the inventory never becomes negative at any moment, and that
  the final inventory level is in
  $[0.5c_2 \, \sqrt{B}, c_2 \, \sqrt{B}]$.
\item Event $\cE_2$: The number of occurrences of price $0$ during
  $[T/2+1, T]$ is high enough that following the strategy (at all
  times $T$) of buying when $0$ arrives, while selling when price $1$
  arrives, ensures that the inventory never becomes negative at any
  moment, and that the final inventory is again in
  $[0.5c_2 \, \sqrt{B}, c_2 \, \sqrt{B}]$.
\end{itemize}

\begin{Claim}
  \label{clm:const}
  The two events $\cF_1 := (\cE_0 \cap \cE_1)$ and
  $\cF_2 := (\cE_0 \cap \cE_2)$ happen with at least some absolute constant
  probability. Moreover, these events $\cF_1, \cF_2$ are disjoint.
\end{Claim}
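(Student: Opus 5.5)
The plan is to realise all three events on the single random-walk picture induced by the two reference strategies, and then prove disjointness and constant probability separately. Disjointness is short; constant probability is the main work.

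\textbf{Setup.} Let $Z_I, H_I, O_I$ denote the number of occurrences of the prices $0$, $\nf12$, $1$ during an interval $I$. Recall that $B_0 = T/2$ and $B = 1.5T$. Write $X^{(1)}_t$ and $X^{(2)}_t$ for the inventories of the two reference strategies in $\cE_1$ and $\cE_2$. Then
\[
X^{(1)}_t = B_0 + Z_{[1,t]} + H_{[1,t]} - O_{[1,t]}, \qquad X^{(2)}_t = B_0 + Z_{[1,t]} - O_{[1,t]}.
\]
So $X^{(1)}$ is a $\pm1$ walk with drift $0.25-0.75=-\nf12$ per step. $X^{(2)}$ is a lazy walk with drift $-\nf12 - 1/\sqrt{B}$ per step. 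In particular $X^{(1)}_T - X^{(2)}_T = H_{[1,T]}$.

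\textbf{Disjointness.} On $\cF_1\cap\cF_2$ both final inventories lie in $[0.5c_2\sqrt B, c_2\sqrt B]$. Hence $H_{[1,T]} = X^{(1)}_T - X^{(2)}_T \le 0.5c_2\sqrt B$. But on $\cE_0$ we have
\[
H_{[1,T/2]} \ge \frac{T}{2\sqrt B} - c_1\sqrt B = \Big(\tfrac13 - c_1\Big)\sqrt B,
\]
using $T = \tfrac23 B$. This exceeds $0.5c_2\sqrt B$ once $c_2 \ll c_1 \ll 1$, a contradiction. So $\cF_1$ and $\cF_2$ are disjoint.

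\textbf{Constant probability.} $\cE_0$ depends only on the first half of the sequence. By Chebyshev (each count has variance $O(T)=O(B)$) together with a union bound over the three prices, $\pr[\cE_0] \ge 1-O(1/c_1^2)$ after scaling the constant suitably; alternatively use the CLT and take $c_1$ a large-enough multiple of the standard deviation scale, so $\cE_0$ has constant probability. Next fix any first-half realisation in $\cE_0$. On the first half, both walks start at $T/2$ and have drift about $-\nf12$, so deterministically they stay at least $T/4 - O(c_1\sqrt B) > 0$ (counts only, no fluctuation issue). Both end at $T/2$ at level $x_0 = T/4 \pm O(c_1 \sqrt B)$. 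The second half is independent of the first. For $\cE_1$, the final level $X^{(1)}_T$ has conditional mean $x_0 - T/4 = O(c_1\sqrt B)$ and standard deviation $\Theta(\sqrt T)=\Theta(\sqrt B)$. By the CLT, or a local limit theorem for the lattice walk, the probability of landing in a window of width $0.5c_2\sqrt B$ located $O(c_1\sqrt B)$ from the mean is at least a constant $\kappa(c_1,c_2)>0$. The same holds for $X^{(2)}_T$: its mean is shifted by $-(T/2)/\sqrt B = -\tfrac13\sqrt B$, which is still $O(\sqrt B)$.

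\textbf{Main obstacle: non-negativity in the second half.} The remaining difficulty is showing that, conditioned on landing in the final window, the walk stays non-negative throughout the second half. I would handle this by time reversal. Conditioned on its endpoint $h \in [0.5c_2\sqrt B, c_2\sqrt B]$, the reversed walk starts at $h$ and has upward drift $\approx \nf12$. A gambler's-ruin/Chernoff bound shows it hits $0$ with probability at most $e^{-\Omega(h)} = e^{-\Omega(c_2\sqrt B)} = o(1)$. Some care is needed to justify this under conditioning on the endpoint. Exchangeability of the increments makes the conditioned reversed path a uniform arrangement with positive net drift, so a ballot-type or cyclic-lemma estimate applies. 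As a fallback, cut the second half at $T/2 + (T/2 - C\sqrt T)$: before the cut the walk is at height $\Theta(\sqrt T)$ w.h.p. by Chernoff, and the last $C\sqrt T$ steps are handled by the drift argument. Multiplying the constants gives $\pr[\cF_i] \ge \pr[\cE_0]\cdot(\kappa - o(1)) = \Omega(1)$ for $i=1,2$, for $B$ large enough.
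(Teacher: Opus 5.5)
Your disjointness argument is correct, and it covers something the paper's proof of this claim never addresses. The identity $X^{(1)}_T - X^{(2)}_T = H_{[1,T]} \ge H_{[1,T/2]} \ge (\frac13 - c_1)\sqrt B > 0.5c_2\sqrt B$ settles it.

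\textbf{The step that fails is your lower bound on $\pr[\cE_0]$.} The first-half counts $O$ and $Z$ of prices $1$ and $0$ each have variance about $\frac{T}{2}\cdot\frac{3}{16} = \frac{B}{16}$. Chebyshev with a union bound therefore gives $\pr[\cE_0^c] \le \frac{1}{8c_1^2} + o(1)$, which is nontrivial only for $c_1 > 1/\sqrt{8} > \frac13$. But you need $c_1 < \frac13$ (indeed $0.5c_2 < \frac13 - c_1$) for your own disjointness step. The theorem also needs $K = (\frac13 - c_1)\sqrt B > 0$, and the paper works with $c_1 \ll 1$. So taking $c_1$ ``a large-enough multiple'' is not an option. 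In the regime $c_1 \ll 1$ the window $c_1\sqrt B$ is narrower than the standard deviation $\sqrt B/4$, and Chebyshev/Chernoff tail bounds say nothing there.

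You need anti-concentration instead. Since $Z + H + O = T/2$, we have $|Z - \E Z| \le |O - \E O| + |H - \E H|$. The count $H$ has standard deviation $O(B^{1/4})$, so $|H - \E H| \le \frac{c_1}{2}\sqrt B$ with probability $1 - o(1)$. The CLT gives $\pr[|O - \E O| \le \frac{c_1}{2}\sqrt B] \to 2\Phi(2c_1) - 1 > 0$. Hence $\pr[\cE_0] \ge 2\Phi(2c_1) - 1 - o(1)$ for every fixed $c_1 > 0$. (The paper's ``standard Chernoff bounds'' sentence has the same defect.)

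With that repaired, the rest works. It differs from the paper mainly in the non-negativity step.

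\textbf{Your route.} You condition on the endpoint $h$ and reverse time. The clean way to make this rigorous is the reflection principle for the bridge. Given the first half and $X^{(1)}_T = h$, all orderings of the $n = T/2$ second-half steps are equally likely. The fraction of them touching $-1$ is $\binom{n}{D+h+1}/\binom{n}{D} \le ((n-D)/D)^{h+1} \approx 3^{-(h+1)}$, where $D \approx 3n/4$ is the number of down-steps. For the lazy $X^{(2)}$, condition also on the positions of the zero steps. So your fallback is unnecessary, and its ``w.h.p.'' claim is only a constant-probability statement for constant $C$.

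\textbf{The paper's route.} The paper argues forward and without conditioning. If the final level lies in the window but the walk hit $-1$, then by the strong Markov property at that hitting time, a walk with drift $-\frac12$ must afterwards climb $0.5c_2\sqrt B + 1$ levels. By gambler's ruin this has probability at most $3^{-0.5c_2\sqrt B}$. Subtracting this from the window probability finishes, with no bridge conditioning needed. On the other hand, your CLT treatment of the window is more accurate than the paper's appeal to Chernoff, which only gives upper bounds.

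\textbf{Two harmless slips.} First, in the first half the right deterministic bound is just $X_t \ge B_0 - t \ge 0$; a path consistent with $\cE_0$ can dip to about $T/8$, so it need not stay above $T/4 - O(c_1\sqrt B)$. Second, $X^{(2)}_{T/2} \approx T/4 - \frac13\sqrt B$, so the conditional mean of $X^{(2)}_T$ is about $-\frac23\sqrt B$. That is still $O(\sqrt B)$ from the window, so the argument is unaffected.
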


\begin{proof}
  We give the proof for event $\cF_1$; the proof for $\cF_2$ is
  similar. The probability of $\cE_0$ is easy to analyze: standard
  Chernoff bounds ensure that the number of occurrences of each of the
  tuples deviates by more than $c_1 \sqrt{B}$ is at most some
  constant. So it remains to bound the probability of $\cE_1$ given
  $\cE_0$.
  Fix the realization of the first half. Let $B_t$ be the inventory
  after request $t$ is processed, assuming we follow the strategy of
  selling a copy whenever the price equals $1$, and buying a copy at
  all other ``low'' prices. Recall that $\cE_1$ requires two
  conditions:
  \begin{enumerate}
  \item[(a)] \textbf{Final Inventory Constraint:} The final
    inventory $B_T$ lands in $[0.5 c_2 \sqrt{B}, c_2 \sqrt{B}]$.
  \item[(b)] \textbf{Intermediate Constraint:} $B_t \ge 0$ for all
    $t \in (T/2, T]$.
  \end{enumerate}

  Conditioned on the starting inventory $B_{T/2}$, the Final Inventory
  Constraint requires the final inventory to be off by its expectation
  by $[c'\cdot \sqrt{B}, c'' \cdot \sqrt{B}]$ for some constants
  $c', c''$ satisfying $c'' - c' = 0.5\cdot c_2$. By a Chernoff bound,
  this happens with at least a constant probability.

  Next, we take the Intermediate Constraint into consideration. We
  show that the probability that the Final Inventory Constraint holds
  while the Intermediate Constraint breaks is small enough. Note that
  if we follow the strategy of selling a copy whenever the price is
  $1$ and buying a copy otherwise, in expectation the inventory
  decreases by $0.5$ after processing each request. Therefore, when
  the Final Inventory Constraint holds while the Intermediate
  Constraint breaks, it indicates that within at most $T$ steps after
  first hitting $B_t = 0$, this random walk process, which has a
  constant leftward drift in expectation, travels
  $0.5 c_2 \cdot \sqrt{B}$ steps to the right. Then, a standard random
  walk analysis can show that this event happens with probability at
  most $e^{-\Omega(\sqrt{B})}$. Therefore, when the realization of the
  first half is fixed, we have
  \begin{align*}
    &\pr[\text{Final Inventory OK}\land \text{Intermediate OK}] \\
    =~& \pr[\text{Final Inventory OK}] - \pr[\text{Final Inventory OK} \land \lnot \text{Intermediate OK}] \\
    \geq~&\text{constant}
  \end{align*}
  Finally, since the realizations of the two halves of the instance
  are independent, $\pr[\cE_1]$ is at least an absolute
  constant. Furthermore, an identical analysis can also show that
  $\pr[\cF_2]$ is at least an absolute constant.
\end{proof}

\begin{proof}[Proof of \Cref{thm:unlabel-iid-large-informal}]
  Given the symmetric $\ATP$ instance above.  Now, given any online
  algorithm, we construct the following offline benchmark.  When event
  $\cF_1$ happens, the offline algorithm simply buys whenever the
  price is $0$ or $0.5$, and sells whenever the price is~$1$.
  Similarly, when the disjoint event $\cF_2$ occurs, the offline
  algorithm buys whenever the price is~$0$ and sells whenever the
  price is~$1$. Otherwise, for all outcomes outside
  $\cF_1 \cup \cF_2$, we mimic the online algorithm. It suffices to
  show that the offline algorithm significantly outperforms the online
  algorithm on at least one of these events.
  
  We now analyze the performance loss of the online algorithm in these
  two cases.  Observe that at time~$T/2$, the online algorithm cannot
  distinguish, conditioned on satifying $\cE_0$, whether the
  realization satisfies $\cF_1$ or $\cF_2$---indeed, given the initial
  inventory level of $B_0$ and the fact that $T = 2B_0$, even selling
  at every single timestep would not cause inventory levels to be
  negative at time $T/2$. Moreover, given $\cE_0$, the number of timesteps
  with price $\nf12$ is at least
  \[ K := [(T/2) \cdot (1/\sqrt{B})] - c_1 \sqrt{B} = (1/3 - c_1)\cdot
    \sqrt{B}. \] We now consider whether the online algorithm
  purchases at least $K/2$ copies at price $\nf12$ by $T/2$.
  \begin{itemize}
  \item If it does buy at least $K/2$ copies at price $\nf12$, then its profit is less than the
    profit of the offline algorithm which only buys at 0 by at least $K/2$, for
    all outcomes in $\cE_2$.
  \item On the other hand, if the online algorithm purchases fewer
    than $K/2$ copies at price~$\nf12$ by time~$T/2$, then it must
    have rejected at least $K/2$ copies at price $\nf12$. Now the
    algorithm that buys all items at prices $0$ and $\nf12$ on
    outcomes in $\cE_1$ would buy all $K$  these copies, and still
    sell them, thereby making at least $K/2\cdot \nf12$ more profit.
  \end{itemize}
  Since $c_2, c_1 \ll 1$, this implies the regret of the online
  algorithm (versus an offline algorithm) is at least
  $\Omega(\sqrt{B}) \cdot \min(\pr[\cF_1], \pr[\cF_2])$, which is
  $\Omega(\sqrt{B})$ by \Cref{clm:const}, thereby completing the proof.
\end{proof}

%% file: ineq.tex
\section{Concentration Inequalities}

The following two concentration inequalities, for independent random
variables, and for martingales, will be used repeatedly in our analyses.

\begin{Theorem}[Bernstein's Inequality]
\label{Bernstein-bounded}
Let $X_1, \ldots, X_N$ be independent mean-zero random variables such
that $|X_i| \leq R$ for all $i$. Let $\sigma^2 := \sum_{i \in [N]} \E[X^2_i]$ be the variance of $\sum_{i \in [N]} X_i$. Then, for any $\epsilon \geq 0$,  we have
\begin{align*}
    \pr \left(\sum_{i = 1}^N X_i \geq \epsilon \right) ~&\leq~ \exp\left(- \frac{\epsilon^2/2}{\sigma^2 + R\epsilon/3}\right) \quad \text{and} \\
    \pr \left(\sum_{i = 1}^N X_i \leq -\epsilon \right) ~&\leq~ \exp\left(- \frac{\epsilon^2/2}{\sigma^2 + R\epsilon/3}\right) \enspace .
\end{align*}
\end{Theorem}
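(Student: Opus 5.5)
The plan is the standard Chernoff (exponential moment) method. By Markov's inequality and independence, for every $\lambda > 0$,
\[
\pr\Big(\sum_{i=1}^N X_i \geq \epsilon\Big) ~\leq~ e^{-\lambda \epsilon} \prod_{i=1}^N \E\big[e^{\lambda X_i}\big].
\]
The proof then reduces to a sharp bound on each moment generating factor, followed by a good choice of $\lambda$. The lower tail needs no separate argument: the variables $-X_1,\ldots,-X_N$ are independent and mean-zero, satisfy $|-X_i| \leq R$, and have the same total variance $\sigma^2$. So it suffices to prove the upper tail.

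For the single-variable bound, fix $\lambda \in (0, 3/R)$ and expand $e^{\lambda X_i}$ as a power series. The linear term vanishes because $\E[X_i] = 0$. For $k \geq 2$, the bound $|X_i| \leq R$ gives $\E[X_i^k] \leq \E[X_i^2] R^{k-2}$. We also use the elementary inequality $k! \geq 2 \cdot 3^{k-2}$, which follows by induction starting from $k=2$. Together these give
\[
\E\big[e^{\lambda X_i}\big] ~\leq~ 1 + \frac{\lambda^2 \E[X_i^2]}{2} \sum_{k \geq 2} \Big(\frac{\lambda R}{3}\Big)^{k-2} ~=~ 1 + \frac{\lambda^2 \E[X_i^2]}{2(1 - \lambda R/3)} ~\leq~ \exp\Big(\frac{\lambda^2 \E[X_i^2]}{2(1-\lambda R/3)}\Big),
\]
where the last step uses $1+u \leq e^u$. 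Taking the product over $i$, the total exponent becomes $\lambda^2\sigma^2 / (2(1-\lambda R/3))$.

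It remains to optimize over $\lambda$. Choose $\lambda = \epsilon/(\sigma^2 + R\epsilon/3)$. This choice lies in $(0, 3/R)$, and it satisfies
\[
1 - \frac{\lambda R}{3} ~=~ \frac{\sigma^2}{\sigma^2 + R\epsilon/3}.
\]
Substituting, the exponent is
\[
-\lambda\epsilon + \frac{\lambda^2 \sigma^2}{2(1-\lambda R/3)} ~=~ -\frac{\epsilon^2}{\sigma^2+R\epsilon/3} + \frac{\epsilon^2}{2(\sigma^2+R\epsilon/3)} ~=~ -\frac{\epsilon^2/2}{\sigma^2 + R\epsilon/3},
\]
which is exactly the claimed bound.

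The degenerate cases are trivial. If $\epsilon = 0$ the right-hand side equals $1$. If $\sigma^2 = 0$, every $X_i$ is almost surely $0$, so the probability is $0$ for $\epsilon > 0$.

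The only step requiring any care is the moment generating function estimate, in particular the factorial comparison $k! \geq 2\cdot 3^{k-2}$. That comparison is what produces the constant $3$ in the denominator term $R\epsilon/3$. Everything else is routine algebra.
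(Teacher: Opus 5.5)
Your proof is correct. The paper states this inequality in its appendix without proof, as a standard tool, and your Chernoff-method argument is the textbook derivation it relies on: bound each moment generating function via $\E[X_i^k]\le R^{k-2}\E[X_i^2]$ and $k!\ge 2\cdot 3^{k-2}$, take $\lambda=\epsilon/(\sigma^2+R\epsilon/3)$, and handle $\epsilon=0$ and $\sigma^2=0$ separately. The one step you leave implicit, exchanging the expectation with the power series of $e^{\lambda X_i}$, follows directly from $|X_i|\le R$ by dominated convergence.
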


\begin{Theorem}[Freedman's Inequality]
\label{thm:freedman}
Let $\{Y_k\}_{k \ge 0}$ be a martingale with increments $\Delta Y_k = Y_k - Y_{k-1}$ satisfying $|\Delta Y_k| \le R$. Let $\langle Y \rangle_n = \sum_{k=1}^n \mathbb{E}[X_k^2 \mid \mathcal{F}_{k-1}]$ be the predictable quadratic variation. Then for any $\epsilon > 0$ and $v > 0$:
\begin{equation}
\pr\left[ \max_{1 \le k \le n} |Y_k| \ge \epsilon \text{ and } \langle
  Y \rangle_n \le v \right] ~\le~ 2 \exp\left( -\frac{\epsilon^2 /
    2}{v + R\epsilon / 3} \right) . \notag
\end{equation}
\end{Theorem}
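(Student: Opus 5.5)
We will prove Theorem~\ref{thm:freedman} by the standard exponential supermartingale argument, reading $X_k$ in the definition of $\langle Y\rangle_n$ as the increment $\Delta Y_k$, and assuming $Y_0 = 0$. It suffices to prove the one-sided bound
\[
\pr\Big[\max_{1\le k\le n} Y_k \ge \epsilon \text{ and } \langle Y\rangle_n \le v\Big] \le \exp\Big(-\frac{\epsilon^2/2}{v+R\epsilon/3}\Big).
\]
Applying it to the martingale $-Y$, which has the same increment bound and the same $\langle Y\rangle_n$, and taking a union bound then gives the factor $2$.

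\textbf{Step 1 (one-step mgf bound).} Let $\phi(x) = (e^x - 1 - x)/x^2$, with $\phi(0)=\nicefrac12$; this function is nondecreasing in $x$. Fix $\lambda > 0$. Since $\lambda\Delta Y_k \le \lambda R$, we have
\[
e^{\lambda \Delta Y_k} \le 1 + \lambda\Delta Y_k + \phi(\lambda R)\lambda^2(\Delta Y_k)^2 .
\]
Take conditional expectations, use $\E[\Delta Y_k \mid \cF_{k-1}] = 0$, and use $1+u\le e^u$. This gives
\[
\E[e^{\lambda\Delta Y_k}\mid\cF_{k-1}] \le \exp\big(\phi(\lambda R)\lambda^2\,\E[(\Delta Y_k)^2\mid\cF_{k-1}]\big).
\]
Consequently
\[
M_k := \exp\big(\lambda Y_k - \phi(\lambda R)\lambda^2\langle Y\rangle_k\big)
\]
is a nonnegative supermartingale with $M_0 = 1$. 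This uses that $\langle Y\rangle_k$ is $\cF_{k-1}$-measurable, i.e.\ predictable.

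\textbf{Step 2 (stopping).} Let $\tau = \min\{k : Y_k \ge \epsilon\}$. On the event $\{\tau\le n,\ \langle Y\rangle_n\le v\}$ we have $Y_\tau \ge \epsilon$. Since $\langle Y\rangle$ is nondecreasing, also $\langle Y\rangle_\tau \le v$. Hence on this event
\[
M_{\tau\wedge n} \ge \exp\big(\lambda\epsilon - \phi(\lambda R)\lambda^2 v\big).
\]
Optional stopping for the bounded stopping time $\tau\wedge n$ gives $\E[M_{\tau\wedge n}]\le 1$. Markov's inequality then bounds the probability of the event by
\[
\exp\big(-\lambda\epsilon + \phi(\lambda R)\lambda^2 v\big) = \exp\Big(-\lambda\epsilon + \frac{v}{R^2}\big(e^{\lambda R}-1-\lambda R\big)\Big).
\]

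\textbf{Step 3 (optimize $\lambda$).} Choose $\lambda = R^{-1}\log(1+R\epsilon/v)$. This yields the Bennett-type bound
\[
\exp\big(-\tfrac{v}{R^2}\,h(R\epsilon/v)\big), \qquad h(u) = (1+u)\log(1+u) - u .
\]
We then invoke the elementary inequality
\[
h(u) \ge \frac{u^2}{2(1+u/3)} \qquad \text{for } u \ge 0 .
\]
To verify it, compare the two sides at $u=0$ together with their first and second derivatives. Substituting $u = R\epsilon/v$ gives exactly $\exp\big(-\frac{\epsilon^2/2}{v+R\epsilon/3}\big)$.

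The only delicate points are the following. First, the supermartingale property in Step 1 needs the monotonicity of $\phi$ (which handles negative increments) and the predictability of $\langle Y\rangle$. Second, in Step 2 the condition is imposed on $\langle Y\rangle_n$ rather than on $\langle Y\rangle_\tau$, and monotonicity of $\langle Y\rangle$ bridges this. The calculus inequality for $h$ is routine. I expect the main care to be needed in Step 2, in getting the stopping argument exactly right.
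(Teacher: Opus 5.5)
Your proof is correct. It is the standard proof of Freedman's inequality: the exponential supermartingale $\exp\big(\lambda Y_k - \phi(\lambda R)\lambda^2\langle Y\rangle_k\big)$, stopping at the first passage above $\epsilon$, the Bennett choice $\lambda = R^{-1}\log(1+R\epsilon/v)$, then $h(u)\ge u^2/(2(1+u/3))$.

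The paper gives no proof of its own to compare against. It only lists the inequality in the appendix as a standard tool.

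Your two interpretive choices are exactly right:
\begin{itemize}
\item The $X_k$ in the definition of $\langle Y\rangle_n$ must be $\Delta Y_k$.
\item The normalization $Y_0=0$ is genuinely needed. The constant martingale $Y_k\equiv\epsilon$ has $\langle Y\rangle_n=0$, so the left-hand side equals $1$, which violates the stated bound for small $v,R$. This hypothesis does hold in both places the paper uses the inequality: $Y_t=U_t-\E[U_t]$ with $U_0=B_0$ deterministic, and $\widetilde U_0=0$.
\end{itemize}

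Two of your steps can be pinned down precisely.
\begin{itemize}
\item Monotonicity of $\phi$ on all of $\mathbb{R}$ follows from the representation $\phi(x)=\int_0^1(1-s)e^{sx}\,ds$.
\item For the calculus inequality, both $h(u)$ and $\frac{3u^2}{2(3+u)}$ vanish at $u=0$, and so do their first derivatives. For all $u\ge 0$, the second derivatives satisfy $h''(u)=\frac{1}{1+u}\ge\frac{27}{(3+u)^3}$. This holds because $(3+u)^3\ge 27(1+u)$.
\end{itemize}
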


%% file: single-noniid-missing.tex
\section{Omitted Proofs in \Cref{sec:unit-noniid}}
\label{sec:unit-noniid-missing}

In this section, we complete the missing proofs in \Cref{sec:unit-noniid}.

\subsection{Proof of Relaxation Lemma}
\begin{proof}[Proof of \Cref{lma:lprelax}]
It suffices to show that taking 
\[
x_{i, j} ~=~ \pr \left[\text{optimal offline solution buys from }i \text{ and sells to }j\right]
\]
gives a feasible solution of \ref{program:relax}.

Constraints \eqref{eq:LP-marginals} are feasible, as the probability that the optimal offline algorithm chooses an action of either buy or sell for request $i$ is at most $p_i$. Constraints \eqref{eq:lp-capacity} are feasible, as the quantity of $\sum_{i = 0}^{t-1} \sum_{j = t}^T x_{i, j}$ represents the expected number of the copy that the optimal offline algorithm carries after request $t$, which is at most $1$, as the offline algorithm can carry at most one copy at any time. Therefore, solution $x$ is feasible.
\end{proof}

\subsection{Proof of the $m$-Trades Lemma}

\begin{proof}[Proof of \Cref{lma:one-interval}]
    Note that it suffices to prove \Cref{lma:one-interval} only for $m = 1$, since when solution $x$ of \ref{program:relax} is an $m$-trades solution, 
    $x/m$ becomes a $1$-trade solution, and applying \Cref{lma:one-interval} to solution 
    $x/m$ gives the desired argument for $x$.

    When solution $x$ is a $1$-trade solution on interval $[\ell, r]$, there must be $\sum_{i = \ell}^r q^b_i(x) = \sum_{i = \ell}^r q^s_i(x) \leq 1$ (\Cref{def:mtrades}.\ref{item:mtrades-ii}), and the value of $q^b_i(x)$ and $q^s_i(x)$ can be non-zero only when $i \in [\ell, r]$ (\Cref{def:mtrades}.\ref{item:mtrades-i}). Our main idea is to apply a standard OCRS algorithm based on the values $q^b_i(x)$ and $q^s_i(x)$. To be specific, we show that \Cref{alg:one-trade} is the desired algorithm for \Cref{lma:one-interval}.

\begin{algorithm}[tbh]
\caption{\textsc{Trading Algorithm for $1$-trade solution}}
\label{alg:one-trade}
\begin{algorithmic}[1]
\State \textbf{Input:} $1$-trade solution $x$ of \ref{program:relax} on $[\ell, r]$.
\State Initial state: algorithm holds a copy with probability $0.5$.
\If{algorithm holds a copy}
\For{$i = \ell, \ldots, r$}
\If {price tuple $(b_i, s_i)$ arrives (with probability $p_i)$}
\State Sell the copy with probability 
            \[
            \frac{q^s_i(x)}{p_i} \cdot \frac{0.5}{1 - 0.5 \cdot \sum_{j = \ell}^{i-1} q^s_j(x)}. 
            \]
\State If the copy is sold, \textbf{break}.
\EndIf
\EndFor
\Else
\For{$i = \ell, \ldots, r$}
\If {price tuple $(b_i, s_i)$ arrives (with probability $p_i)$}
\State Buy a copy with probability 
            \[
            \frac{q^b_i(x)}{p_i} \cdot \frac{0.5}{1 - 0.5 \cdot \sum_{j = \ell}^{i-1} q^b_j(x)}. 
            \]
\State If algorithm buys a copy, \textbf{break}.
\EndIf
\EndFor
\EndIf
\end{algorithmic}
\end{algorithm}

\vspace{0.5em} \noindent \textbf{Proving the probability of holding a copy.} We first prove that after \Cref{alg:one-trade} is finished, the algorithm still keeps a copy in hand with probability $0.5$. To achieve this, we show that when the algorithm holds a copy when \Cref{alg:one-trade} begins, the copy is sold by \Cref{alg:one-trade} with probability $\frac{1}{2} \cdot \sum_{i = \ell}^r q^s_i(x)$. We prove by induction that for each $i\in [l, r]$, after request $i$ arrives, the copy remains available with probability $1 - \frac{1}{2} \cdot \sum_{j = \ell}^i q^s_j(x)$. The base case is $i = \ell - 1$, and the copy is available with probability $1$. Now assume the claim holds till request $i-1$. If the item remains available for request $i$, \Cref{alg:one-trade} sells the item with probability
    \[
    p_i \cdot \frac{q^s_i(x)}{p_i} \cdot \frac{0.5}{1 - 0.5 \cdot \sum_{j = \ell}^{i-1} q^s_j(x)} ~=~  \frac{0.5\cdot q^s_i(x)}{1 - 0.5 \cdot \sum_{j = \ell}^{i-1} q^s_j(x)}.
    \]
    Since whether the price tuple $(b_i, s_i)$ arrives and whether \Cref{alg:one-trade} sells the copy is independent to the decisions before request $i$, the copy remains available after request $i$ with probability
    \[
    \left(1 - \frac{1}{2} \cdot \sum_{j = \ell}^{i-1} q^s_j(x) \right) \cdot \left( 1 - \frac{0.5\cdot q^s_i(x)}{1 - 0.5 \cdot \sum_{j = \ell}^{i-1} q^s_j(x)} \right) ~=~ 1 - \frac{1}{2} \cdot \sum_{j = \ell}^{i} q^s_j(x),
    \]
    which proves the claim for the case after request $i$. Then, applying the claim for request $r$ shows that the item remains available with probability $1 - \frac{1}{2} \cdot \sum_{i = \ell}^r q^s_i(x)$, and equivalently the item is sold with probability $\frac{1}{2} \cdot \sum_{i = \ell}^r q^s_i(x)$.

    Symmetrically, when the algorithm does not hold a copy when \Cref{alg:one-trade} begins, it buys a copy with probability $\frac{1}{2} \cdot \sum_{i = \ell}^r q^b_i(x)$. Then, after \Cref{alg:one-trade} finishes, the algorithm holds a copy with probability
    \[
    0.5 \cdot \left(1 - \frac{1}{2} \cdot \sum_{i = \ell}^r q^s_i(x)\right) + 0.5 \cdot \frac{1}{2} \cdot \sum_{i = \ell}^r q^b_i(x) ~=~ 0.5,
    \]
    where we use the fact that $\sum_{i = \ell}^r q^s_i(x) = \sum_{i = \ell}^r q^b_i(x)$ in the last equality, which is true because every non-zero $x_{i,j}$ (buy at $i$, sell at $j$) satisfies $l\leq i<j\leq r$. So the total probability of buys and sells within the internal $[l,r]$ are equal.

\vspace{0.5em} \noindent \textbf{Proving the expected profit.} Next, we calculate the expected profit achieved by \Cref{alg:one-trade}. We first calculate the expected selling gain. Recall that an item remains available for request $i$ with probability  $1 - \frac{1}{2} \cdot \sum_{j = \ell}^{i-1} q^s_j(x)$, and the algorithm sells the item at price $s_i$ with probability $\frac{0.5\cdot q^s_i(x)}{1 - 0.5 \cdot \sum_{j = \ell}^{i-1} q^s_j(x)}$. Therefore, the total expected gain (assuming the algorithm holds a copy) is
\begin{align*}
    \sum_{i = \ell}^r  s_i \cdot \frac{0.5\cdot q^s_i(x)}{1 - 0.5 \cdot \sum_{j = \ell}^{i-1} q^s_j(x)} \cdot \left(1 - \frac{1}{2} \cdot \sum_{j = \ell}^{i-1} q^s_j(x)\right)  ~=~ \frac{1}{2} \cdot \sum_{i = \ell}^r s_i \cdot q^s_j(x).
\end{align*}
Symmetrically, the total expected gain assuming the algorithm does not hold a copy is
\begin{align*}
    \frac{1}{2} \cdot \sum_{i = \ell}^r b_i \cdot q^b_j(x).
\end{align*}
{Given that the decision maker holds a copy with probability $\frac{1}{2}$ before the algorithm begins,} the total expected profit achieved by \Cref{alg:one-trade} is
\begin{align*}
    \frac{1}{4} \cdot \sum_{i = \ell}^r (s_i \cdot q^s_j(x) - b_i \cdot q^b_j(x)) ~&=~ \frac{1}{4} \cdot \sum_{i = \ell}^r s_i \cdot \sum_{j = 0}^{i-1} x_{j,i} - \frac{1}{4} \cdot \sum_{i = \ell}^r b_i \cdot \sum_{j = i+1}^{r} x_{i,j} \\
    ~&=~ \frac{1}{4} \cdot \sum_{i = \ell}^r \sum_{j = i+1}^r x_{i, j} \cdot (s_j - b_i) ~=~ \frac{1}{4} \cdot \lp(x),
\end{align*}
{where the second-last equality is because $x_{i,j}=0$ if $i$ or $j$ are not in $[l, r]$.}
\end{proof}

Now, we apply \Cref{lma:one-interval} to prove \Cref{cor:multi-intervals}.

\begin{proof}[Proof of \Cref{cor:multi-intervals}]
    Let $x$ be a feasible solution of \ref{program:relax} that admits an $m$-trades decomposition corresponding to intervals $[\ell_1, r_1], \cdots, [\ell_D, r_D]$. We assume without loss of generality that the intervals are sorted in order. For $d \in [D]$, define solution $\hat x^{(d)}$ as $\hat x^{(d)}_{i, j} = x_{i, j} \cdot \one\big[i, j \in [\ell_d, r_d]\big]$. By definition, every $\hat x^{(d)}$ is an $m$-trades solution on $[\ell_d, r_d]$.
    
     Consider the algorithm that first buy a free copy from request $0$ with probability $0.5$, and then run \Cref{alg:one-trade} sequentially for solutions $\hat x^{(1)}, \cdots, \hat x^{(D)}$. Then, by \Cref{lma:one-interval}, the expected total profit of the above algorithm is
    \[
    \sum_{d \in [D]} \frac{1}{4m} \cdot \lp(\hat x^{(d)}) ~=~ \frac{1}{4m} \cdot \lp(x),
    \]
    where the equality follows from the definition of $m$-trades decomposition.
\end{proof}

\section{Generalizing Unit Capacity Case Beyond Bernoulli}
\label{sec:unit-noniid-generalize}


In this section, we briefly introduce how to extend the algorithms and proofs in \Cref{sec:unit-noniid} beyond Bernoulli distributions. Assume each distribution $\D_t$ has a bounded support size $K$ ($K$ types), such that for $k \in [K]$, the $t$-th price tuple becomes $(b^{(k)}_{t}, s^{(k)}_{t})$  with probability $p^{(k)}_{t}$. For the LP relaxation, we use $x_{i, j, k, k'}$ as the probability that the optimal offline algorithms buys from  $i$ with type $k$, and sells to  $j$ with type $k'$. Then, the LP relaxation can be correspondingly written as 
\begin{alignat}{2}
  \text{maximize} \qquad 
\sum_{i = 0}^{T-1} \sum_{j = i+1}^T \sum_{k \in [K]} \sum_{k' \in [K]}  x_{i, j, k, k'} \cdot &(s^{(k')}_{j} - b^{(k)}_{i}) &  \notag \\
  \text{s.t.}\quad
  \sum_{j = 0}^{i-1} \sum_{k' \in [K]} x_{j, i, k', k} + \sum_{j = i+1}^T \sum_{k' \in [K]} x_{i, j, k, k'}
    &\leq p^{(k)}_{i} && \forall i \in [T] \cup \{0\}, k \in [K],  \notag \\
    \sum_{i = 0}^{t-1} \sum_{j = t}^T \sum_{k \in [K]} \sum_{k' \in [K]} x_{i, j, k, k'} &\leq 1 && \forall t  \in [T], \notag \\
  x_{i,j,k,k'} &\geq 0. && \notag
\end{alignat}


Correspondingly, the definitions of $q^b_i(x)$ and $q^s_i(x)$ are generalized to $q^b_{i,k}(x)$ and $q^s_{i,k}(x)$, which are defined as
\[
q^b_{i,k}(x) ~:=~ \sum_{j = i+1}^T  \sum_{k' \in [K]} x_{i,j, k, k'} \qquad \text{and} \qquad q^s_{i,k}(x) ~:=~ \sum_{j = 0}^{i-1}  \sum_{k' \in [K]} x_{j, i, k, k'}
\]

The ideas of the remaining part of our algorithm are identical. For an $m$-trades solution on $[\ell, r]$, where the definition of $m$-trades solution corresponds to a solution on $[\ell, r]$ that satisfies
\[
\sum_{i = \ell}^{r-1} \sum_{j = i + 1}^r \sum_{k \in [K]} \sum_{k' \in [K]} x_{i, j, k, k'} ~\leq~ m,
\]
we use the generalized OCRS algorithm that works for beyond Bernoulli distributions to prove \Cref{lma:one-interval}. To be specific, if we initially hold a copy, when the price tuple $(b_i, s_i) = (b^{(k)}_i, s^{(k)}_i)$ arrives (with probability $p^{(k)}_i$), we sell the copy at price $s^{(k)}_i$ with probability
\[
\frac{q^s_{i,k}(x)}{p^{(k)}_i} \cdot \frac{0.5}{1 - 0.5 \cdot \sum_{j = \ell}^{i-1} \sum_{k' \in [K]} q^s_{j,k'}(x)}.
\]
Symmetrically, if we initially has no inventory, when the price tuple $(b_i, s_i) = (b^{(k)}_i, s^{(k)}_i)$ arrives (with probability $p^{(k)}_i$), we buy the copy at price $b^{(k)}_i$ with probability
\[
\frac{q^b_{i,k}(x)}{p^{(k)}_i} \cdot \frac{0.5}{1 - 0.5 \cdot \sum_{j = \ell}^{i-1} \sum_{k' \in [K]} q^b_{j,k'}(x)}.
\]
\Cref{lma:one-interval} can be proved analogously using the above OCRS algorithm. When \Cref{lma:one-interval} holds, the proof of \Cref{cor:multi-intervals} remains the same.

For the uncrossing step in \Cref{alg:uncrossing}, the condition in
Line 3 is generalized by also including the type of each
agent. Finally, the initial decomposition step in
\Cref{alg:init-decompose} and the proof of \Cref{lma:init-decompose}
are also identical, as the algorithm and related lemmas only rely on
the value of $q^b_i(x)$ and $q^s_i(x)$. Therefore, the algorithms and
proofs provided for Bernoulli distributions can be generalized to
distributions with support size $K$.


%% file: iid-mainalg-missing.tex
\section{Proof of \Cref{thm:iid-large-general}}
\label{sec:iid-mainalg-missing}

In this section, we prove \Cref{thm:iid-large-general}, which we now restate for convenience.

\iidmain*

The high level ingredients of the proof are the following. In
\Cref{lma:comp-ratio-bound}, we relate the competitive ratio of the
algorithm to two quantities: (a)~the final inventory level (i.e., the
number of unsold copies), and (b)~the number of times it is blocked
from buying an item because it is capacity-bound. We then bound each
of these in the following subsections, by relating these quantities to
a suitable random-walk process on the line.

\subsection{Notation}

To prove \Cref{thm:iid-large-general}, we first introduce some notation. Let
\[
\bar b ~:=~ \frac{\sum_{k \in [K]} b^{(k)} \cdot z^*_k}{\sum_{k \in [K]} z^*_k} ~=~ \frac{\sum_{k \in [K]} b^{(k)} \cdot z^*_k}{\Expectbuy} \quad \text{and} \quad \bar s ~:=~ \frac{\sum_{k \in [K]} s^{(k)} \cdot y^*_k}{\sum_{k \in [K]} y^*_k} ~=~ \frac{\sum_{k \in [K]} s^{(k)} \cdot y^*_k}{\Expectsell}
\]
be the expected buying/selling price when performing independent
rounding.
From \Cref{alg:iid-large}, recall the definition
\begin{gather}
  \tau := T - \lceil 13\log B \cdot
  \sqrt{T/\Expectsell} \rceil. \label{def:tau}
\end{gather}
For $t \in [T]$, let
\[
\xi_t ~:=~ -\one[1 - y^*_{k_t}/p^{(k_t)} \leq  \rho_t  \leq 1]  + \one[0 \leq \rho_t \leq z^*_{k_t}/p^{(k_t)}] \cdot \one[t \geq \tau + 1]
\]
represent the action we attempt to perform at time $t$, i.e., when
$\xi_t = -1$, it means we wish to sell at price $s^{(k_t)}$ if
inventory allows; when $\xi_t = +1$, it means we wish to buy at price
$b^{(k_t)}$ if inventory allows. It can be checked that when
$t \leq \tau$, we have
\[
\pr \left[ \xi_t = 1\right] ~=~ \sum_{k \in [K]} p^{(k)} \cdot \frac{z^*_k}{p^{(k)}} ~=~ \Expectbuy,
\]
and similarly we have $\pr[\xi_t = -1] = \Expectsell$ for every $t \in [T]$.

For $t \in [T] \cup \{0\}$, let $X_t$ be the inventory after request $t$ is processed, i.e., we have $X_0 = B_0$, and
\[
X_t = X_{t - 1} + \xi_t \cdot \one\big[X_{t-1} + \xi_t \in [0, B] \big] 
\]
for $t \in [T]$.

Finally, we define $N^+_t$ and $N^-_t$ to be the number of buys and sells we have performed till request $t$ respectively, and we define $M^+_t$ and $M^-_t$ to be the number of blocked buys and sells due to $[0, B]$ capacity constraint till request $t$, i.e., we have
\begin{align*}
    N^+_t ~=~ \sum_{t' = 1}^t \one[\xi_{t'} = 1 \land X_{t'-1} \neq B] \qquad &\text{and} \qquad N^-_t ~=~ \sum_{t' = 1}^t \one[\xi_{t'} = -1 \land X_{t'-1} \neq 0] \\
    M^+_t ~=~ \sum_{t' = 1}^t \one[\xi_{t'} = 1 \land X_{t'-1} = B] \qquad &\text{and} \qquad M^-_t ~=~ \sum_{t' = 1}^t \one[\xi_{t'} = -1 \land X_{t'-1} = 0].
\end{align*}

We remark that $\xi_t, X_t, U_t, N^+_t, N^-_t, M^+_t, M^-_t$ are all
random variables, and we will henceforth use their expectations to
characterize the performance of \Cref{alg:iid-large}.

\subsection{Proving \Cref{thm:iid-large-general}}

The following lemma reformulates the goal in \Cref{thm:iid-large-general}:
\begin{Lemma}[Reformulation Lemma]
\label{lma:comp-ratio-bound}
    The competitive ratio of \Cref{alg:iid-large} is lower-bounded by 
    \[
    1 - \frac{\E[X_T]}{B_0} - \frac{T - \tau}{T} - \frac{\E[M^+_\tau]}{\Gamma}.
    \]
\end{Lemma}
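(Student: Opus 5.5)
The plan is to write the expected profit of \Cref{alg:iid-large} exactly in terms of $\E[N^+_T]$ and $\E[N^-_T]$, eliminate $N^-_T$ using inventory conservation, and then charge the three resulting loss terms separately against $\lp(y^*,z^*)$.

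\textbf{Step 1: profit in terms of counts.} At each step $t$, the pair $(k_t,\rho_t)$ is fresh and independent of $X_{t-1}$. Conditioned on $\xi_t=-1$, the type $k$ is drawn with probability $y^*_k/\Expectsell$. Hence, conditioned on $\xi_t=-1$ and $X_{t-1}\neq 0$, the expected sale price is $\bar s$. Summing over $t$ by linearity, the expected revenue is $\bar s\,\E[N^-_T]$. Symmetrically, the expected cost is $\bar b\,\E[N^+_T]$. Since no buy is attempted after $\tau$, we have $N^+_T=N^+_\tau$ and $\E[N^+_\tau]=\tau\Expectbuy-\E[M^+_\tau]$. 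Inventory conservation gives $X_T=B_0+N^+_T-N^-_T$. Therefore
\[
\E[\mathrm{ALG}] ~=~ \bar s\,(B_0-\E[X_T]) + (\bar s-\bar b)\,\big(\tau\Expectbuy-\E[M^+_\tau]\big).
\]

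\textbf{Step 2: rewriting the benchmark.} By \eqref{eq:1}, $T\Expectsell=\Gamma=B_0+T\Expectbuy$, so
\[
\lp(y^*,z^*) ~=~ \bar s\,\Gamma-\bar b\,T\Expectbuy ~=~ \bar s B_0+(\bar s-\bar b)\,T\Expectbuy .
\]
I will argue that $\bar s\ge\bar b$ without loss of generality. If $\bar s<\bar b$, scale all $z^*_k$ down by some factor and the $y^*_k$ down correspondingly so that \eqref{eq:1} is preserved. This keeps $\sum_k y^*_k\ge B_0/T$, stays feasible, and strictly increases the objective, contradicting optimality. (If $\Expectbuy=0$, the second term simply vanishes.) I also use that prices are nonnegative, so $\bar b\ge 0$, and that $\lp(y^*,z^*)\ge 0$.

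\textbf{Step 3: charging the losses.} Subtracting, the loss is
\[
\lp(y^*,z^*)-\E[\mathrm{ALG}] ~=~ \bar s\,\E[X_T]+(\bar s-\bar b)(T-\tau)\Expectbuy+(\bar s-\bar b)\,\E[M^+_\tau].
\]
I bound each term by a fraction of the LP value:
\begin{itemize}
\item First term: $\bar s\,\E[X_T]=\frac{\E[X_T]}{B_0}\cdot \bar s B_0\le \frac{\E[X_T]}{B_0}\lp(y^*,z^*)$.
\item Second term: $(\bar s-\bar b)(T-\tau)\Expectbuy=\frac{T-\tau}{T}\cdot(\bar s-\bar b)T\Expectbuy\le\frac{T-\tau}{T}\lp(y^*,z^*)$.
\item Third term: here I use $\lp(y^*,z^*)/\Gamma=\bar s-\bar b\,(\Gamma-B_0)/\Gamma\ge \bar s-\bar b$, which holds because $\bar b\ge 0$ and $0\le \Gamma-B_0\le\Gamma$. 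This gives $(\bar s-\bar b)\,\E[M^+_\tau]\le \frac{\E[M^+_\tau]}{\Gamma}\lp(y^*,z^*)$.
\end{itemize}
Summing the three bounds and dividing by $\lp(y^*,z^*)$ gives the claimed ratio.

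\textbf{Main obstacle.} I expect the delicate step to be the justification of $\bar s\ge\bar b$ via the LP-optimality perturbation, together with the third bound, where the denominator must be $\Gamma$ rather than $T\Expectbuy$. The first-step independence argument also needs care: the blocking event depends only on $X_{t-1}$, which is independent of $(k_t,\rho_t)$, so conditioning on it does not bias the expected price.
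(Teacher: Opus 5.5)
Your proposal is correct and follows the paper's proof essentially step for step. You use the same profit expression $\bar s\,(B_0-\E[X_T]) + (\bar s-\bar b)(\tau\Expectbuy-\E[M^+_\tau])$, the same rewriting $\lp(y^*,z^*)=\bar s B_0+(\bar s-\bar b)T\Expectbuy$, and the same termwise charging of the loss. Your explicit justifications of $\bar s\ge\bar b$ (via the scaling perturbation of an optimal LP solution) and of $\bar b\ge 0$ (needed for the $\Gamma$-denominator in the third term) fill in sign conditions that the paper's final division step uses only implicitly.
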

Since the value of the $(T-\tau)/T$ can be read off
from~(\ref{def:tau}), it remains to bound $\E[X_T]$ (the expected
final inventory level), and $\E[M^+_\tau]$ (the expected number of
blocked buys), which is done using the following two lemmas.

\begin{Lemma}[Final Inventory Bound]
  \label{lma:xt-bound}
    $\pr[X_T \leq \frac{T - \tau}{T} \cdot B_0] \geq 1 - 2B^{-2}$.
\end{Lemma}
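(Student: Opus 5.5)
The plan is to control the inventory over the sell-only suffix $(\tau, T]$ alone. Write $L := T - \tau = \lceil 13\log B\cdot\sqrt{T/\Expectsell}\,\rceil$. For $t > \tau$ the algorithm never attempts to buy, so the inventory can only go down on that suffix. Each step $t>\tau$ independently attempts a sale with probability exactly $\Expectsell$, and such a sale goes through unless the inventory is already $0$. Hence, if we let $S := \sum_{t=\tau+1}^{T} \one[\xi_t = -1]$ count the attempted sales in the suffix, we get
\[
X_T ~=~ \max\{X_\tau - S,\, 0\} ~\leq~ \max\{B - S,\, 0\},
\]
using the trivial bound $X_\tau \leq B$. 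Here $S$ is a sum of $L$ i.i.d.\ Bernoulli$(\Expectsell)$ variables, and it is independent of $X_\tau$.

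It therefore suffices to show that $S$ is large enough, with probability at least $1-2B^{-2}$, that $X_\tau - S \leq \frac{L}{T}B_0$. The target quantity is $\frac{L}{T}B_0 = \frac{L}{T}\cdot T(\Expectsell-\Expectbuy) \leq L\Expectsell = \E[S]$, using \eqref{eq:1}, i.e.\ $B_0 = T(\Expectsell-\Expectbuy)$. The argument splits into two cases.

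\textbf{Case $L\Expectsell \geq 2B$.} Bernstein (\Cref{Bernstein-bounded}) with deviation $\epsilon = \E[S]/2$ and variance at most $\E[S]$ gives $S \geq \E[S]/2 \geq B \geq X_\tau$, so $X_T = 0$. The failure probability is $\exp(-\Omega(\E[S]))$. Since $\E[S] = L\Expectsell \geq 13\log B\sqrt{T\Expectsell} \geq 13\log B$ (note $T\Expectsell = \Gamma \geq B_0 \geq 1$), this is at most $B^{-2}$.

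\textbf{Case $L\Expectsell < 2B$.} Here the crude bound $X_\tau\le B$ is insufficient, and we must also show $X_\tau$ is not too large. Compare $X_\tau$ with the unconstrained walk $B_0 + \sum_{t\le\tau}\xi_t$, whose mean is $B_0 - \tau(\Expectsell-\Expectbuy) = \frac{L}{T}B_0$ and whose variance is at most $\tau(\Expectsell+\Expectbuy) \leq 2T\Expectsell$. Reflection at $0$ and $B$ can only push the constrained walk up when it hits $0$; that is, $X_t - (B_0+\sum_{t'\le t}\xi_{t'}) = M^-_t - M^+_t$. So the plan is to bound the final level via a maximal deviation of the free walk. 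Apply Freedman (\Cref{thm:freedman}) to the martingale $\sum_{t'\le t}(\xi_{t'}-\E\xi_{t'})$ with $v = 2T\Expectsell$ and $\epsilon = 6.5\log B\sqrt{T/\Expectsell}\cdot\Expectsell = 6.5\log B\sqrt{T\Expectsell}$. This shows that with probability $\ge 1-B^{-2}$ all partial deviations are at most $\epsilon$, and hence $X_\tau \leq \frac{L}{T}B_0 + 2\epsilon$ after accounting for the reflection at $0$ (a Skorokhod-type bound: the reflected walk exceeds the free walk by at most the running maximal downward deviation). Simultaneously, Bernstein gives $S \geq L\Expectsell - \epsilon'$ with $\epsilon'$ of order $\sqrt{L\Expectsell\log B}$, which is at most $\frac{L\Expectsell}{2}$ since $L\Expectsell \geq 13\log B\sqrt{T\Expectsell}$. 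Because $L\Expectsell \approx 13\log B\sqrt{T\Expectsell} \geq 2\epsilon+\epsilon'$ by the choice of the constant $13$, we conclude $X_\tau - S \leq \frac{L}{T}B_0$. A union bound over the two events yields the probability $1-2B^{-2}$.

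The main obstacle is this second case. One must carefully handle the reflection at the boundaries. The reflection at $B$ only lowers $X_\tau$ and is harmless; the reflection at $0$ raises it, so it must be controlled through the maximal deviation. One must also check that the constants line up so that $13\log B$ dominates the Freedman and Bernstein deviations in every regime of $\Expectsell$.
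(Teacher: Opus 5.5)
Your Case~2 argument is, up to constants, exactly the paper's proof. The paper bounds $X_\tau \le U_\tau - \min\{0,\min_{t\le\tau}U_t\}$ (\Cref{clm:xt-to-ut}). It then applies Freedman to the centered free walk and uses $\mathbb{E}[U_t]\ge \mathbb{E}[U_\tau]=\frac{T-\tau}{T}B_0\ge 0$, which gives $X_\tau\le \frac{T-\tau}{T}B_0+12\sqrt{\Gamma}\log B$ with probability $1-B^{-2}$ (\Cref{lma:xtau-bound}). Your ``Skorokhod-type'' step needs precisely this monotonicity of the mean to turn a deviation bound into $\min_t U_t\ge-\epsilon$. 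Next, Bernstein on the independent suffix gives $S\ge 12\sqrt{\Gamma}\log B$ with probability $1-B^{-2}$ (\Cref{lma:clear-inv-bound}), and a union bound finishes. None of this uses $L\alpha_S<2B$, so your Case~1 is unnecessary.

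The one real problem is that your constants do not close. With $\epsilon=6.5\log B\sqrt{\Gamma}$ you have $2\epsilon=13\log B\sqrt{\Gamma}$. On the other hand, $L\alpha_S=\lceil 13\log B\sqrt{T/\alpha_S}\rceil\,\alpha_S<13\log B\sqrt{\Gamma}+1$. So $L\alpha_S-2\epsilon<1$, which leaves no room for the Bernstein deviation $\epsilon'$ of order $\log B\cdot\Gamma^{1/4}\gg 1$. The inequality $L\alpha_S\ge 2\epsilon+\epsilon'$ that you attribute to ``the choice of the constant $13$'' is therefore false.

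The fix is to run Freedman with $\epsilon=6\log B\sqrt{\Gamma}$, as the paper does. This still gives $2e^{-3\log B}\le B^{-2}$ for large $\Gamma$ and $B$, and it leaves slack $\log B\sqrt{\Gamma}$. That slack dominates $\epsilon'=O(\log B\,\Gamma^{1/4}+\log B)$ once $\Gamma$ is large, exactly as in the paper's Bernstein computation.

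A similar slip occurs in Case~1. From $\mathbb{E}[S]\ge 13\log B$ alone, Bernstein at deviation $\mathbb{E}[S]/2$ gives failure probability only $\exp(-\tfrac{3}{28}\mathbb{E}[S])\le B^{-39/28}$, not $B^{-2}$. You would have to use $\mathbb{E}[S]\ge 2B$ (or $\mathbb{E}[S]\ge 13\log B\sqrt{\Gamma}$) there, though it is simpler to drop Case~1 altogether.
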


\begin{Lemma}[Blocked Buys Bound]
    \label{lma:mbuy-bound}
    $\E[M^+_\tau] \leq \frac{\Gamma}{B} + 13\sqrt{\Gamma} \cdot \log B$.
\end{Lemma}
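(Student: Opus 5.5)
The plan is to compare the capacity-constrained inventory $X_t$ with the unconstrained walk $S_t:=\sum_{t'\le t}\xi_{t'}$, so that blocked buys become a statement about upward excursions of a random walk with nonpositive drift. For $t\le\tau$ each step $\xi_t$ is $+1$ with probability $\Expectbuy$ and $-1$ with probability $\Expectsell$, independently. By \eqref{eq:1} the drift is $\Expectbuy-\Expectsell=-B_0/T\le 0$. The per-step variance is at most $\Expectbuy+\Expectsell$, so the total variance up to $\tau$ is at most $T(\Expectbuy+\Expectsell)\le 2\Gamma$ by \eqref{eq:iid-gamma-def}. The dynamics $X_t=X_{t-1}+\xi_t\one[X_{t-1}+\xi_t\in[0,B]]$ give the Skorokhod-type identity
\[
X_t = B_0 + S_t + M^-_t - M^+_t .
\]

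\textbf{Step 1 (uniform control of rises).} Let $Y_u:=S_u-u(\Expectbuy-\Expectsell)$, which is a martingale with bounded increments and predictable quadratic variation at most $2\Gamma$. Apply Freedman's inequality (\Cref{thm:freedman}) with $\epsilon\approx 6\sqrt{\Gamma}\log B$, and absorb lower-order terms using $\Gamma\ge B_0\ge 1$ (the case of very small $\Gamma$ is trivial). This gives that, with probability at least $1-B^{-2}$,
\[
\max_{s\le u\le\tau}(S_u-S_s)\le 2\max_u|Y_u|\le 13\sqrt{\Gamma}\log B=:R ,
\]
where the first inequality uses that the drift is nonpositive. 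On the complementary event, I bound $M^+_\tau$ by the total number of attempted buys. Its expectation is $\tau\Expectbuy\le\Gamma$, but it is not independent of the bad event, so I either use $M^+_\tau\le\tau$ together with a direct tail bound on $\sum\one[\xi_t=1]$, or absorb it by Cauchy--Schwarz. Either way the contribution is $O(\Gamma/B^2)\le \Gamma/B$.

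\textbf{Step 2 (cycle decomposition).} Split $[0,\tau]$ into epochs, where a new epoch starts each time the inventory returns to $0$ after having been at $B$. Within an epoch there is no reflection at $0$ after its start, so the identity above shows that the blocked buys in the epoch equal the amount by which the free walk's rise from the epoch's starting level exceeds the room up to $B$. This quantity is at most $R$ in the first epoch, since the walk starts at $B_0\le B$ with the same rise bound. Every completed epoch consumes at least $B$ successful sells, because the inventory goes from $B$ down to $0$. Hence the number of completed epochs is at most $N^-_\tau/B$, and $\E[N^-_\tau]\le\tau\Expectsell\le\Gamma$.

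The main obstacle is turning this into $\Gamma/B$ plus only one copy of $R$, rather than $R$ per epoch. In a later epoch the walk must first climb the full height $B$ from $0$ before it can be blocked. Blocking $m$ buys then needs a rise of $B+m$ from the epoch start, and the steps that climb past $B$ are exactly the blocked ones, so the count is again controlled by the window-rise bound $R$. The charge I need is: per completed cycle, the blocked buys exceed those of the first epoch by only $O(1)$ in expectation. The idea is that after reaching $B$ the walk behaves like a walk reflected at $B$ with nonpositive drift, whose expected local time at $B$ per cycle is $O(1)$ because its expected overshoot is constant. This would yield $\E[M^+_\tau]\le R+O(1)\cdot\E[\#\text{cycles}]\le 13\sqrt{\Gamma}\log B+\Gamma/B$.

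If this charging argument fails, I would split into the cases $\Gamma\le B^2$ and $\Gamma> B^2$ instead. When $\Gamma\le B^2$, the walk essentially never completes a cycle and the bound $R$ suffices. When $\Gamma> B^2$, I would argue that the reflected walk mixes to a near-uniform law on $\{0,\dots,B\}$ because the drift is small. Then $\pr[X_{t-1}=B]\lesssim 1/B$, which gives $\E[M^+_\tau]\le\tau\Expectbuy/B+\text{(mixing error)}\le\Gamma/B+R$, matching the intuition described after \Cref{thm:iid-large-general}.
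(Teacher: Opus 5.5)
Your proposal has a genuine gap in Step~2: the charging claim is false. A walk reflected at $B$ does not have $O(1)$ expected local time at $B$ per cycle in this setting. The drift is only $\alpha_B-\alpha_S=-B_0/T$. So when, e.g., $B_0=1$ and $\Gamma\gg B$, the walk is essentially symmetric on the scale of a cycle.

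By gambler's ruin, from $B-1$ the walk returns to $B$ before hitting $0$ with probability about $(B-1)/B$. A cycle therefore contains about $B$ departures from $B$, and hence about $B$ blocked buys. Equivalently, the free walk's rise past $B$ before a drawdown of size $B$ has mean about $B$, not a constant overshoot. Plugging $\Theta(B)$ per cycle into your count $\E[\#\text{cycles}]\le\E[N^-_\tau]/B\le\Gamma/B$ gives only $\E[M^+_\tau]\lesssim R+\Gamma$.

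The correct accounting is $\Theta(B)$ blocked buys per cycle times $O(\Gamma/B^2)$ cycles. Your cycle bound uses only that a cycle consumes $B$ sells, so it is off by a factor of $B$. A round trip $0\to B\to 0$ of an almost driftless walk in fact consumes quadratic variation of order $B^2$, out of a total budget of about $2\Gamma$. This second-moment accounting is the missing idea.

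There are also smaller issues. Within an epoch the walk may revisit $0$ before first reaching $B$, so the rise must be measured from the last such visit. And since $R$ already equals $13\sqrt{\Gamma}\log B$, an unspecified $O(1)$ factor on the cycle term, plus the bad-event contribution, cannot give the exact constants in the statement.

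The paper gets this via a potential-function identity rather than cycles. It first couples $X$ pathwise to a walk with at least as many blocked buys: horizon $T$ instead of $\tau$, starting at $B$, with both step probabilities equal to $\bar\alpha=(\alpha_B+\alpha_S)/2$. It then mirrors this to a balanced lazy walk $\widetilde X$ started at $0$, whose blocked sells $\widetilde M^-_T$ dominate $M^+_\tau$ in expectation. For this walk, the one-step drifts of $\widetilde X_t$ and $\widetilde X_t^2$ give $\E[\widetilde X_T]=\E[\widetilde M^-_T]-\E[\widetilde M^+_T]$ and $\E[\widetilde X_T^2]=2\bar\alpha T-\E[\widetilde M^-_T]-(2B+1)\E[\widetilde M^+_T]$. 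Hence $\E[\widetilde M^-_T]\le\bar\alpha T/(B+1)+\E[\widetilde X_T]$. Only then is Freedman used, to bound $\E[\widetilde X_T]\le 13\sqrt{\bar\alpha T}\log B$, and $\bar\alpha T\le\Gamma$ finishes the proof.

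Your fallback does not close the gap either. For $\Gamma\le B^2$, the window-rise bound $R$ already exceeds $B$ once $\Gamma\gtrsim B^2/\log^2 B$. A bound on window rises does not limit the number of $0$-to-$B$ oscillations, so ``$R$ suffices'' is unjustified there. For $\Gamma>B^2$, the fact $\pi(B)\le 1/(B+1)$ under nonpositive drift is fine. However, the claim that the pre-mixing excess of visits to $B$ costs at most $R$ blocked buys is asserted rather than proved, and it is again a local-time estimate of exactly the kind you are missing.
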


Before we prove these lemmas, let us use them to prove \Cref{thm:iid-large-general}.

\begin{proof}[Proof of \Cref{thm:iid-large-general}]    
    We first verify that \Cref{alg:iid-large} is feasible, which requires 
    \[
    \tau = T - \left \lceil 13\log B \cdot \sqrt{T/\Expectsell} \right \rceil \geq 0.
    \]
    Note that $\tau \geq 0$ must be satisfied when
    $T^2 \geq 14^2 \log^2 B \cdot \frac{T}{\Expectsell}$, which implies
    \[
    \Gamma ~\geq~ 196 \log^2 B.
    \]
    To fulfill this condition, we make an extra update before running \Cref{alg:iid-large}: if $B > 2\Gamma$, we set $B \gets 2\Gamma$.  Note that this update can only degrade the performance of \Cref{alg:iid-large}, while it does not change our proof of \Cref{thm:iid-large-general}, as the term $\log B/B$ is infinitesimal with respect to the term $\log B/\sqrt{\Gamma}$ when $B \geq \Omega(\Gamma)$. Therefore, assuming $B \leq 2\Gamma$ for \Cref{alg:iid-large} is without loss of generality,  and inequality $\Gamma ~\geq~ 196 \log^2 B$ holds when $\Gamma$ is sufficiently large, which is also without loss of generality as \Cref{alg:iid-large} trivially holds when $\Gamma$ is upper-bounded by a constant.
    
    Next, we prove the competitive ratio of \Cref{alg:iid-large}. By \Cref{lma:xt-bound} and the fact that $X_T \in [0, B]$, the value of $\E[X_T]$ is bounded by 
    \[
    \E[X_T] ~\leq~ (1 - 2B^{-2}) \cdot \frac{T - \tau}{T} \cdot B_0 + 2B^{-2} \cdot B ~\leq~ \frac{T - \tau}{T} \cdot B_0 + \frac{2B_0}{B},
    \]
    where the last inequality uses the assumption that $B_0 \geq 1$. Then, by \Cref{lma:comp-ratio-bound}, the competitive ratio of \Cref{alg:iid-large} is at least 
    \[
    1 - \frac{\E[X_T]}{B_0} - \frac{T - \tau}{T} - \frac{\E[M^+_\tau]}{\Gamma} ~\geq~ 1 - \frac{2(T - \tau)}{T} - \frac{3}{B} - \frac{13\log B}{\sqrt{\Gamma}} ~\geq~ 1 - \frac{3}{B} - \frac{41 \log B}{\sqrt{\Gamma}},
    \]
    where the last inequality uses the fact that 
    \[
    \frac{T - \tau}{T} ~=~ \frac{\left \lceil 13\log B \cdot \sqrt{T/\Expectsell} \right \rceil}{T} ~\leq~ \frac{14\log B \cdot \sqrt{T/\Expectsell}}{T} ~=~ \frac{14 \log B}{\sqrt{\Gamma}}. \qedhere
    \]
\end{proof}

We now prove \Cref{lma:comp-ratio-bound,lma:xt-bound,lma:mbuy-bound} in
the following subsections.

\subsection{Proof of the Reformulation Lemma (\Cref{lma:comp-ratio-bound})}

\begin{proof}[Proof of \Cref{lma:comp-ratio-bound}]
    In \Cref{alg:iid-large}, we first realize each request's type, and then decide our action via independent rounding. Since these two sources of randomness are independent, it's equivalent to first realize our decision $\xi_t$, and then we realize the type of current request to decide the buying/selling price. By conditional expectation, when $\xi_t = -1$, the expected selling price is exactly
    \[
    \frac{\sum_{k \in [K]}  p^{(k)} \cdot \frac{y^*_k}{p^{(k)}} \cdot s^{(k)} }{\sum_{k \in [K]} p^{(k)} \cdot \frac{y^*_k}{p^{(k)}}} ~=~ \bar s,
    \]
    and similarly, when $\xi_t = 1$, the expected buying price is exactly $\bar b$. 

    Note that when $\xi_t$ is fixed, whether the buy/sell action is proceeded only depends on the current inventory, instead of the buying/selling price. Since $N^+_T$ and $N^-_T$ represents the number of buys and sells we perform, the expected profit of \Cref{alg:iid-large} can be written as
    \begin{align*}
         \E[N^-_T] \cdot \bar s - \E[N^+_T] \cdot \bar b. 
    \end{align*}
    Note that for any realization of the instance, there must be $X_0 + N^+_t - N^-_t = X_T$. By replacing $N^-_t$ by $X_0 + N^+_t - X_T$, we further simplify the above expression as
    \begin{align}
        \big(B_0 + \E[N^+_T] - \E[X_T] \big) \cdot \bar s - \E[N^+_T] \cdot \bar b ~=~ \big(B_0 - \E[X_T] \big) \cdot \bar s + \E[N^+_\tau] \cdot (\bar s - \bar b), \label{eq:alg-bound}
    \end{align}
    where the equality uses the observation that \Cref{alg:iid-large} no longer buys a copy after request $\tau$, and therefore $N^+_T = N^+_\tau$. We further note that 
    \[
    N^+_\tau ~=~ \sum_{t \in [\tau]} \one[\xi_t = 1] - M^+_\tau.
    \]
    Since $\pr[\xi_t = 1] = \Expectbuy$ for every $t \leq \tau$, taking expectation over the above inequality and applying to \eqref{eq:alg-bound} further simplifies the expected profit  of \Cref{alg:iid-large} as
    \begin{align}
         \big(B_0 - \E[X_T] \big) \cdot \bar s + \big(\Expectbuy \cdot \tau - \E[M^+_\tau]\big) \cdot (\bar s - \bar b). \label{eq:alg-bound-final}
    \end{align}

    Next, we rewrite the value of $\lp(y^*, z^*)$ with  newly defined notations. We have
    \begin{align}
        \lp(y^*, z^*) ~&=~ T \cdot \sum_{k \in [K]} \big(s^{(k)} \cdot y_k - b^{(k)} \cdot z_k \big) \notag \\
        ~&=~ T \cdot \Expectsell \cdot \bar s - T \cdot \Expectbuy \cdot \bar b \notag \\
        ~&=~ B_0 \cdot \bar s + T \cdot \Expectbuy  \cdot (\bar s - \bar b), \label{eq:lp-bound-final}
    \end{align}
    where the second line follows the definition of $\Expectbuy, \Expectsell, \bar b, \bar s$, and the third line uses the observation that $T \cdot \Expectsell = \Gamma =  T \cdot \Expectbuy + B_0$.  Subtracting \eqref{eq:alg-bound-final} from \eqref{eq:lp-bound-final} gives the profit gap between \Cref{alg:iid-large} and $\lp(x^*, y^*)$ as
    \begin{align}
        \E[X_T] \cdot \bar s + \big( (T - \tau) \cdot \Expectbuy + \E[M^+_\tau] \big) \cdot (\bar s - \bar b) \label{eq:loss-bound-final}
    \end{align}
    Therefore, the competitive ratio of \Cref{alg:iid-large} is at least
    \begin{align*}
        1 - \frac{\text{expr.}~\eqref{eq:loss-bound-final}}{\text{expr.}~\eqref{eq:lp-bound-final}} ~&\geq~ 1 - \frac{\E[X_T] \cdot \bar s}{B_0 \cdot \bar s} - \frac{(T - \tau) \cdot \Expectbuy \cdot (\bar s - \bar b)}{T \cdot \Expectbuy \cdot (\bar s - \bar b)} - \frac{\E[M^+_\tau]\cdot (\bar s - \bar b)}{(B_0 + T \cdot \Expectbuy) \cdot (\bar s - \bar b)} \\
        ~&\geq~1 - \frac{\E[X_T]}{B_0} - \frac{T - \tau}{T} - \frac{\E[M^+_\tau]}{\Gamma}. \qedhere
    \end{align*}
\end{proof}

\subsection{Bounding the Final Inventory Level (\Cref{lma:xt-bound})}

Our proof of \Cref{lma:xt-bound} relies on the following two lemmas;
the first bounding $X_\tau$ with high probability, and the latter
showing that conditioned on the bound on $X_\tau$, the value of $X_T$
is also small (i.e., the final inventory is almost cleared). 
\begin{Lemma}
\label{lma:xtau-bound}
    $\pr[X_\tau  \leq 12\sqrt{\Gamma} \cdot \log B + \frac{T -
      \tau}{T} \cdot B_0] \geq 1 - B^{-2}$.
\end{Lemma}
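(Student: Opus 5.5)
The plan is to compare the inventory process $X_t$ with the \emph{unconstrained} walk $S_t := B_0 + \sum_{t'=1}^{t} \xi_{t'}$ and use a reflection-type (Skorokhod) bound. I read $\xi_t$ as the algorithm's intended action, with buys allowed only for $t \le \tau$, since \Cref{alg:iid-large} does exactly that. For $t \le \tau$ the increments $\xi_t$ are i.i.d.\ with $\pr[\xi_t = 1] = \Expectbuy$ and $\pr[\xi_t=-1]=\Expectsell$. By \eqref{eq:1} their mean is $\mu := \Expectbuy - \Expectsell = -B_0/T \le 0$, and their variance is at most $\Expectbuy + \Expectsell \le 2\Expectsell$.

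The first step is a deterministic pathwise inequality. Clipping at the upper barrier $B$ only ever \emph{decreases} the inventory, and clipping at $0$ resets it to $0$. Hence
\[
X_\tau ~\le~ \max\Big\{ S_\tau,\; \max_{0 \le s < \tau} \sum_{t=s+1}^{\tau} \xi_t \Big\}.
\]
To verify this, split on whether a sell was blocked at $0$ up to time $\tau$. If no such block occurred, only upper clipping happened, so $X_\tau \le S_\tau$. Otherwise, let $s$ be the last time a sell was blocked at $0$, so $X_s = 0$. After $s$ only upper clipping can occur, so $X_\tau \le \sum_{t=s+1}^\tau \xi_t$. This is a routine induction on $t$.

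The second step is to center the increments. Write $\eta_t := \xi_t - \mu$, which are mean-zero, independent, and satisfy $|\eta_t| \le 2$. Then:
\begin{itemize}
\item $S_\tau = B_0 + \tau\mu + \sum_{t \le \tau}\eta_t = \frac{T-\tau}{T}\cdot B_0 + \sum_{t\le\tau}\eta_t$;
\item for every $s$, since $\mu \le 0$, we have $\sum_{t=s+1}^\tau \xi_t \le \sum_{t=s+1}^\tau \eta_t$.
\end{itemize}
So it suffices to show that with probability at least $1-B^{-2}$,
\[
\max_{0\le s<\tau} \sum_{t=s+1}^{\tau}\eta_t ~\le~ \epsilon, \qquad \epsilon := 12\sqrt{\Gamma}\log B .
\]

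For this I will apply Freedman's inequality (\Cref{thm:freedman}) to the reversed partial sums $Y_k := \sum_{t=\tau-k+1}^{\tau}\eta_t$, $k=0,\dots,\tau$. Because the $\eta_t$ are independent and mean-zero, this is a martingale in the reversed filtration, with increments bounded by $R=2$. Its predictable quadratic variation is deterministic and at most
\[
v ~:=~ \tau(\Expectbuy+\Expectsell) ~\le~ 2T\Expectsell ~=~ 2\Gamma .
\]
Using a maximal inequality is the key point: a naive union bound over all $s$ would cost a factor of $T$, and $T$ may be far larger than any polynomial in $B$. Freedman gives
\[
\pr\Big[\max_k |Y_k| \ge \epsilon\Big] ~\le~ 2\exp\!\Big(-\frac{\epsilon^2/2}{2\Gamma + 2\epsilon/3}\Big).
\]

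The only delicate point is the arithmetic in this exponent. Here I use the standing reduction from the proof of \Cref{thm:iid-large-general} that $\Gamma \ge 196\log^2 B$, i.e., $\sqrt{\Gamma}\ge 14\log B$. This gives $2\epsilon/3 = 8\sqrt\Gamma\log B \le \tfrac{8}{14}\Gamma$, so the denominator is at most about $2.6\,\Gamma$. The exponent is then at least $72\,\Gamma\log^2B/(2.6\,\Gamma) \ge 27\log^2 B$, so the failure probability is far below $B^{-2}$ once $B$ exceeds a small constant; smaller $B$ is absorbed into constants as elsewhere in the paper. On the complementary event, both branches of the pathwise bound are at most $\frac{T-\tau}{T}B_0 + \epsilon$, which is the claimed bound on $X_\tau$.
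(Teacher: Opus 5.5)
Your proof is correct and is essentially the paper's argument. One small fix first: the range of $s$ in your pathwise bound must include $s=\tau$ (the empty sum), because the last blocked sell can occur at time $\tau$ itself; for example, $B_0=1$, $\tau=3$, $\xi_1=\xi_2=\xi_3=-1$ gives $X_\tau=0$ while every term in your maximum is negative. This is harmless, since $Y_0=0$ is already covered by your Freedman bound. With that fix, your bound $\max\{S_\tau,\max_s\sum_{t=s+1}^{\tau}\xi_t\}$ is exactly the paper's $U_\tau-\min\{0,\min_t U_t\}$ from \Cref{clm:xt-to-ut}.

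The only real difference is in the concentration step. You drop the drift termwise and apply Freedman once to the time-reversed partial sums with $\epsilon=12\sqrt{\Gamma}\log B$. The paper instead applies Freedman to the forward martingale $U_t-\E[U_t]$ with $\epsilon=6\sqrt{\Gamma}\log B$, and pays $2\epsilon$ by bounding $U_\tau$ and $-\min_t U_t$ separately. Your version therefore carries extra slack: a much smaller failure probability for the same bound.
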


\begin{Lemma}
\label{lma:clear-inv-bound}
    $\pr[X_T \leq \frac{T - \tau}{T} \cdot B_0 \mid X_\tau  \leq
    12\sqrt{\Gamma} \cdot \log B + \frac{T - \tau}{T} \cdot B_0] \geq 1 - B^{-2}$.
\end{Lemma}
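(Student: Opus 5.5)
The plan is to use the fact that after time $\tau$ \Cref{alg:iid-large} never buys, so the inventory can only go down, and each attempted sale succeeds as long as the inventory is positive. Concretely, let $S := \sum_{t=\tau+1}^{T} \one[\xi_t = -1]$ be the number of sell attempts in the final phase. By the recursion for $X_t$, every attempted sale at a time $t>\tau$ with $X_{t-1}\ge 1$ reduces the inventory by one, and nothing increases it. Hence
\[
X_T ~=~ \max\{0,\, X_\tau - S\}.
\]
It therefore suffices to show that, conditioned on $X_\tau \le 12\sqrt{\Gamma}\log B + \frac{T-\tau}{T}B_0$, we have $S \ge 12\sqrt{\Gamma}\log B$ with probability at least $1-B^{-2}$. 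On that event, $X_T \le \max\{0, \frac{T-\tau}{T}B_0\}$, which is the claim.

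The key structural observation is independence. The variables $\xi_t$ for $t>\tau$ depend only on the fresh draws $(k_t,\rho_t)$ at those steps. They are therefore independent of $X_\tau$, which is a function of the first $\tau$ steps only, and conditioning on the event about $X_\tau$ does not change the distribution of $S$. Moreover, $S$ is a sum of $T-\tau$ i.i.d.\ Bernoulli$(\Expectsell)$ variables, since $\pr[\xi_t=-1]=\Expectsell$ for every $t$.

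Next I compute the mean using the definition of $\tau$ in (\ref{def:tau}) together with $\Gamma = T\Expectsell$:
\[
\E[S] ~=~ (T-\tau)\Expectsell ~\ge~ 13\log B\cdot \sqrt{T/\Expectsell}\cdot \Expectsell ~=~ 13\sqrt{\Gamma}\log B.
\]
The ceiling also gives the upper bound $\E[S] \le 14\sqrt{\Gamma}\log B$, as already used in the proof of \Cref{thm:iid-large-general}. I then apply the lower tail of Bernstein's inequality (\Cref{Bernstein-bounded}) to the centered variables $\one[\xi_t=-1]-\Expectsell$, with $R=1$, $\sigma^2 \le \E[S] \le 14\sqrt{\Gamma}\log B$ and $\epsilon = \sqrt{\Gamma}\log B$. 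This yields
\[
\pr\big[S < 12\sqrt{\Gamma}\log B\big] ~\le~ \exp\left(-\frac{\Gamma\log^2 B/2}{(14+1/3)\sqrt{\Gamma}\log B}\right) ~\le~ \exp\left(-\frac{\sqrt{\Gamma}\log B}{29}\right).
\]

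The only delicate step is checking that this exponent is at least $2\log B$, that is, $\sqrt{\Gamma}\ge 58$. I would invoke the standing assumption from the proof of \Cref{thm:iid-large-general} that $\Gamma \ge 196\log^2 B$, i.e.\ $\sqrt{\Gamma}\ge 14\log B$. This suffices once $B$ exceeds an absolute constant. Alternatively, one may assume $\Gamma$ exceeds a suitable absolute constant, since the paper already treats bounded $\Gamma$ (and hence bounded $B$) as trivial. This bookkeeping of constants is the main obstacle; the probabilistic content is just the one-sided concentration of $S$.
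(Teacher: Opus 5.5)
Your proposal is correct and follows essentially the same route as the paper. Both reduce the claim to showing that the number $S$ of sell attempts in $[\tau+1,T]$ is at least $12\sqrt{\Gamma}\log B$, then apply the lower tail of Bernstein's inequality with $\epsilon=\sqrt{\Gamma}\log B$ using $13\sqrt{\Gamma}\log B \le (T-\tau)\alpha_S \le 14\sqrt{\Gamma}\log B$, and absorb the final constant by taking $\Gamma$ sufficiently large. Your explicit identity $X_T=\max\{0,X_\tau-S\}$ and your observation that $S$ is independent of $X_\tau$, which is what justifies dropping the conditioning, are details the paper leaves implicit, so they add rigor rather than change the argument.
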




Since the statement of~\ref{lma:xt-bound} follows from
\Cref{lma:xtau-bound,lma:clear-inv-bound} by a na\"{\i}ve union bound,
it now suffices to prove these two lemmas. We begin by bounding the
inventory level at time $\tau$.

\paragraph{Proving \Cref{lma:xtau-bound}.} Consider the following two random walk processes: let $\{U_t\}_{t \in [\tau]}$ be the unbounded random walk process w.r.t $\xi_t$, and $\{W_t\}_{t \in [\tau]}$ be the left-bounded random walk process w.r.t.\ $\xi_t$. I.e., we have $U_0 = W_0 = B_0$, and for $t \in [\tau]$, we have
\[
U_t ~=~ U_{t-1} + \xi_t, \qquad \text{and} \qquad W_t = W_{t - 1} + \xi_t \cdot \one\big[W_{t-1} + \xi_t \geq 0\big].
\]
 Then, the following \Cref{clm:xt-to-ut} leads us to instead bound the randomness of $U_t$:

\begin{Claim}
\label{clm:xt-to-ut}
    For any realization of $\{\xi_t\}_{t \in [\tau]}$, we have
    \[
    X_\tau ~\leq~ U_\tau - \min \Big\{0,~ \min_{t \in [\tau]} U_t \Big\}.
    \]
\end{Claim}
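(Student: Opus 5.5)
The plan is to pass through the intermediate left-bounded walk $W_t$ just defined. I will prove two facts for every fixed realization of $\{\xi_t\}_{t\in[\tau]}$:
\[
  X_t ~\leq~ W_t \qquad\text{and}\qquad W_t ~=~ U_t - m_t, \quad\text{where } m_t := \min\Big\{0,~\min_{s \in [t]} U_s\Big\} \text{ and } m_0 := 0.
\]
Taking $t=\tau$ then gives the claim. Throughout I use only that each $\xi_t \in \{-1,0,+1\}$; the buy and sell intervals for $\rho_t$ are disjoint because $y^*_k+z^*_k\le p^{(k)}$. I also use that all three processes are integer-valued and start at $B_0 \ge 0$.

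\textbf{Step 1: the Skorokhod-type formula $W_t = U_t - m_t$.} I prove it by induction on $t$. The base case is $W_0 = U_0 = B_0$ with $m_0 = 0$. For the inductive step I split into two cases.
\begin{itemize}
\item If $\xi_t \ge 0$, or if $\xi_t = -1$ with $W_{t-1} \ge 1$, then $W$ moves exactly with $U$. I need $m_t = m_{t-1}$. When $\xi_t \ge 0$ this holds because $U_t \ge U_{t-1} \ge m_{t-1}$. When $\xi_t=-1$ and $W_{t-1}\ge1$, the induction hypothesis gives $U_{t-1} - m_{t-1} \ge 1$, so $U_t = U_{t-1}-1 \ge m_{t-1}$.
\item If $\xi_t = -1$ and $W_{t-1} = 0$, then $W_t = 0$. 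The hypothesis gives $U_{t-1} = m_{t-1} \le 0$, so $U_t = m_{t-1}-1 < 0$ is a new running minimum. Hence $m_t = U_t$ and $U_t - m_t = 0 = W_t$.
\end{itemize}

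\textbf{Step 2: the monotone coupling $X_t \le W_t$.} Both walks see the same $\xi_t$. The only difference is that $X$ additionally blocks buys at level $B$, and I show this only pushes $X$ down. Again I induct, with $X_0 = W_0$ as the base case.
\begin{itemize}
\item If $X_{t-1} \le W_{t-1}-1$, then since each step changes a walk by at most one, $X_t \le X_{t-1}+1 \le W_{t-1}$. If $\xi_t \ge 0$ then $W_t \ge W_{t-1}$. If $\xi_t=-1$ then $W_t = W_{t-1}-1$, because $W_{t-1}\ge X_{t-1}+1 \ge 1$. In that sub-case $X_t \le X_{t-1} \le W_{t-1}-1 = W_t$.
\item If $X_{t-1} = W_{t-1}$, I check each value of $\xi_t$. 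For $\xi_t=0$ both walks stay put. For $\xi_t=+1$, $W$ always moves up while $X$ moves up or is blocked. For $\xi_t=-1$, the lower-barrier rule is the same for both walks, so they move identically.
\end{itemize}

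Combining the two steps at $t=\tau$ gives $X_\tau \le W_\tau = U_\tau - \min\{0,\min_{t\in[\tau]}U_t\}$. There is no real obstacle here. The only point needing care is the case analysis in Step 1 when $W$ sits at $0$: there I must verify that the running minimum $m_t$ decreases by exactly one whenever a sell is blocked, and that it is unchanged otherwise.
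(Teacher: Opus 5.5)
Your proof is correct. Its overall structure is the same as the paper's: compare $X$ to the left-reflected walk $W$, then compare $W$ to the free walk $U$.

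Your Step~2 is the paper's coupling argument. The paper phrases it as a minimal counterexample: at the first $t$ with $W_t<X_t$ one must have $W_{t-1}=X_{t-1}$, which holds because both walks step by $0$ or by the common $\xi_t$. Your explicit treatment of the case $X_{t-1}\le W_{t-1}-1$ just spells out that step.

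The genuine difference is in Step~1. The paper does not establish the reflection identity. Instead, when $U_m:=\min_{t\in[\tau]}U_t<0$, it restarts the walk from $B_0-U_m$ and argues that this shifted reflected walk never hits $0$, so it coincides with the shifted free walk. It then uses monotonicity of the reflected walk in its starting point to conclude $W_\tau\le U_\tau-U_m$.

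Your induction proves the exact Skorokhod formula $W_t=U_t-m_t$ for every $t$. This is stronger, since it is an equality at all times rather than an upper bound at $\tau$, and it is self-contained. The paper's route relies on the monotonicity-in-initial-position fact, which it asserts without proof; it would need a coupling of the same kind as your Step~2. Either route suffices for the claim.
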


\begin{proof}
    We first observe that there must be $W_t \geq X_t$ for every $t \in [\tau]$. We show this observation via contradiction: suppose $t \in [\tau]$ is the smallest index that satisfies $W_t < X_t$. Since all the random walk processes we defined move by at most one step at each time increment, there must be $W_{t - 1} = X_{t-1}$, which implies
    \[
    X_t - X_{t - 1} ~=~ \xi_t \cdot \one\big[X_{t-1} + \xi_t \in [0, B] \big] ~\leq~ \xi_t \cdot \one\big[W_{t-1} + \xi_t \geq 0 \big] ~=~ W_t - W_{t-1},
    \]
    which is in contrast to the assumption of $W_t < X_t$. Therefore, there must be $W_\tau \geq X_\tau$.

    We further show that 
    \begin{align}
        W_\tau ~\leq~ U_\tau - \min \Big\{0,~ \min_{t \in [\tau]} U_t \Big\} \label{eq:wt-bound}
    \end{align}
    to finish the proof of \Cref{clm:xt-to-ut}. Fix the realization of $\{\xi_t\}_{t \in [\tau]}$. Let $U_m = \min_{t \in [\tau]} U_t$ be the minimum value attained by $U_t$. If $U_m \geq 0$, then there must be $W_t = U_t$ for every $t \in [\tau]$, and \eqref{eq:wt-bound} holds. Otherwise, we define   $\widetilde U_t$ and $\widetilde W_t$ to be the unbounded and left-bounded random walk processes with a shifted initial position, i.e., we have   $\widetilde U_0 = \widetilde W_0 = B_0 - U_m$, while 
    \[
    \widetilde U_t ~=~ \widetilde U_{t-1} + \xi_t, \qquad \text{and} \qquad \widetilde W_t = \widetilde W_{t - 1} + \xi_t \cdot \one\big[\widetilde W_{t-1} + \xi_t \geq 0\big]
    \]
    for $t \in [\tau]$. Note that condition $\widetilde W_{t-1} + \xi_t \geq 0$ always holds. If not, let $t$ be the minimum index such that $\widetilde W_{t - 1} + \xi_t < 0$. Then, there must be $\widetilde W_{t - 1} = \widetilde U_{t - 1} = 0$ while $\xi_t = -1$, and therefore $\widetilde U_t = -1$. This implies
    \begin{align*}
        U_t ~=~ \widetilde U_t - \widetilde U_0 + U_0  ~=~ -1 - (B_0 - U_m) + B_0 ~=~ U_m - 1,
    \end{align*}
    which is in contrast to the assumption that $U_m = \min_{t \in [\tau]} U_t$. Therefore, $\one\big[\widetilde W_{t-1} + \xi_t \geq 0\big] = 1$ always holds, implying that $\widetilde W_t = \widetilde U_t$.

    Finally, note that $\widetilde W_t \geq W_t$ always holds, as $\widetilde W_t$ and $W_t$ share the same update process, while $\widetilde W_0 = W_0 - U_m > W_0$ (recall that we assume $U_m < 0$). Then, 
    \begin{align*}
        W_\tau ~\leq~ \widetilde W_\tau ~=~ \widetilde U_\tau ~=~ U_\tau - U_0 + \widetilde U_0 ~=~ U_\tau - U_m,
    \end{align*}
    which proves \eqref{eq:wt-bound}.
\end{proof}

Now, we are ready to prove \Cref{lma:xtau-bound}.

\begin{proof}[Proof of \Cref{lma:xtau-bound}]
    Recall that for $t \in [\tau]$, $\xi_t$ equals to $-1$ with probability $\Expectsell$, $1$ with probability $\Expectbuy$, and $0$ otherwise. Then, for $t \in [\tau]$, we have
    \[
    \E[U_t] = B_0 + \sum_{t' \in [t]} \E[\xi_{t'}] ~=~ B_0 - t \cdot (\Expectsell - \Expectbuy).
    \]
    
    Define $Y_t = U_t - \E[U_t]$. Then,  $\{Y_t\}_{t \in [\tau]}$ is a martingale with increments
    \[
    \Delta Y_t ~=~ Y_t - Y_{t-1} ~=~ \xi_t - (\Expectsell - \Expectbuy).
    \]
    Then, the absolute value of the increments is bounded by
    \[
    |\Delta Y_t| ~\leq~ |\xi_t| + |\Expectsell - \Expectbuy| ~\leq~ 2.
    \]
    Furthermore, since $\xi_t$ is a random variable independent to the history till $t - 1$, the predictable quadratic variation $\langle Y \rangle_t$ satisfies
    \begin{align*}
        \langle Y \rangle_t ~=~ \sum_{t' \in [t]} \E \left[ \big(\xi_{t'} - (\Expectsell - \Expectbuy)\big)^2\right] ~=~ t \cdot  \E[\xi^2_t] - t \cdot (\Expectsell - \Expectbuy)^2 ~\leq~ t \cdot (\Expectsell + \Expectbuy) ~\leq~ 2 \Gamma,
    \end{align*}
    where the second equality uses the fact that every $\xi_t$ is i.i.d. with expectation $\Expectsell - \Expectbuy$, and the last inequality uses the fact that $t \leq \tau \leq T$, while $T \cdot (\Expectsell + \Expectbuy) \leq 2T \cdot \Expectsell = 2\Gamma$. Let $\epsilon = 6\sqrt{\Gamma} \cdot \log B$. By Freedman's Inequality (\Cref{thm:freedman}), we have
\begin{align}
    \pr\left[ \max_{t \in [\tau]} |Y_t| \ge \epsilon \text{ and } \langle Y \rangle_\tau \le 2\Gamma \right] \le 2 \exp\left( -\frac{\epsilon^2 / 2}{2\Gamma + 2\epsilon / 3} \right) \leq 2\exp(-3\log B) \leq B^{-2}, \label{eq:xtau-freedman}
\end{align}
where the second inequality uses the fact that
\[
\epsilon^2/2 ~=~ 18\Gamma \cdot \log^2 B ~\geq~ 3\log B \cdot (2\Gamma + 4\sqrt{\Gamma} \cdot \log B),
\]
and the last inequality holds when $B$ is sufficiently large. Then, we have
\begin{align*}
    &\pr\left[ X_\tau  \leq 12\sqrt{\Gamma} \cdot \log B + \frac{T - \tau}{T} \cdot B_0\right] \\
    \geq~& \pr \left[U_\tau \leq 6 \sqrt{\Gamma} \cdot \log B + \frac{T - \tau}{T} \cdot B_0 ~\land~  \min_{t \in [\tau]} U_t \geq -6 \sqrt{\Gamma} \cdot \log B \right] \\
    \geq~& \pr \left[Y_\tau \leq 6 \sqrt{\Gamma} \cdot \log B  ~\land~  \min_{t \in [\tau]} Y_t \geq -6 \sqrt{\Gamma} \cdot \log B \right] \\
    \geq~&  \pr\left[ \max_{t \in [\tau]} |Y_t| \leq 6\sqrt{\Gamma} \cdot \log B  \right] \\
    \geq~& 1 - \pr\left[ \max_{t \in [\tau]} |Y_t| \geq 6\sqrt{\Gamma} \cdot \log B =\epsilon  ~\land~ \langle Y \rangle_\tau \le 2\Gamma \right] ~\geq~ 1 - B^{-2},
\end{align*}
where the first inequality uses \Cref{clm:xt-to-ut}; the second inequality uses the fact that for every $t \in [\tau]$, we have
\[
\E[U_t] ~\geq~ \E[U_\tau] ~=~ B_0 -\tau \cdot (\Expectsell - \Expectbuy) ~=~ B_0 - \tau \cdot \frac{B_0}{T} ~=~ \frac{T - \tau}{T} \cdot B_0 ~\geq~ 0,
\]
and therefore $Y_\tau \leq \epsilon$ implies $U_\tau \leq \epsilon + \E[U_\tau]$, while $Y_t \geq -\epsilon$ implies $U_t \geq -\epsilon$; the last line uses \eqref{eq:xtau-freedman} and the fact that $\langle Y \rangle_\tau \le 2\Gamma$ always holds.
\end{proof}

\paragraph{Proving \Cref{lma:clear-inv-bound}.} Finally, we prove \Cref{lma:clear-inv-bound} to complete the proof of \Cref{lma:xt-bound}.

\begin{proof}[Proof of \Cref{lma:clear-inv-bound}]

    When $t \in [\tau + 1, T]$, \Cref{alg:iid-large} attempts to sell one copy with probability $\Expectsell$. To prove \Cref{lma:clear-inv-bound}, it suffices to show that the total number of selling attempts when $t \in [\tau + 1, T]$ is at least $12 \sqrt{\Gamma} \cdot \log B$, i.e., we aim at showing
    \begin{align}
        \pr \left[\sum_{t = \tau + 1}^T \one[\xi_t = -1] ~\geq~ 12 \sqrt{\Gamma} \cdot \log B \right] ~\geq~ 1- B^{-2}. \label{eq:clear-inv-bound-target}
    \end{align}

    We prove \eqref{eq:clear-inv-bound-target} via Bernstein's Inequality. Note that each $\one[\xi_t = -1]$ is an independent variable that equals to $1$ with probability $\Expectsell$, and $0$ otherwise. Then, we have $\E[\one[\xi_t = -1]] = \Expectsell$, and the variance of the summation of mean-zero random variables $\one[\xi_t = -1]- \Expectsell$ can be bounded as
    \begin{align*}
        \var \left(\sum_{t = \tau + 1}^T \one[\xi_t = -1] - \Expectsell \right) ~&=~ \var \left(\sum_{t = \tau + 1}^T \one[\xi_t = -1] \right) \\
        ~&\leq~ \sum_{t = \tau + 1}^T \E\big[\one^2[\xi_t = -1]  \big]\\
        ~&=~ (T - \tau) \cdot \Expectsell ~\leq~ 14\sqrt{\Gamma} \cdot \log B.
    \end{align*}
    Then, we have
    \begin{align*}
        \pr \left[\sum_{t = \tau + 1}^T \one[\xi_t = -1] < 12 \sqrt{\Gamma} \cdot \log B \right] ~&\leq~ \pr \left[\sum_{t = \tau + 1}^T \big(\one[\xi_t = -1] - \Expectsell \big) ~<~ -\sqrt{\Gamma} \cdot \log B \right]  \\
        ~&\leq~ \exp \left(-\frac{\Gamma \cdot \log^2 B/2}{14\sqrt{\Gamma} \cdot \log B + \sqrt{\Gamma} \cdot \log B / 3}\right) \\
        ~&\leq~ \exp \left(-\frac{ \sqrt{\Gamma }\cdot \log B}{30}\right)  ~\leq~ B^{-2},
    \end{align*}
    where the first inequality uses the fact that $(T - \tau) \cdot \Expectsell ~\geq~ 13 \sqrt{\Gamma} \cdot \log B$;
    the second inequality follows Bernstein's Inequality (\Cref{Bernstein-bounded}), which is applicable as every random variable $\one[\xi_t = -1] - \Expectsell$ is mean-zero;  the last line holds when $\Gamma$ is sufficiently large. Taking the complement of the above inequality gives \eqref{eq:clear-inv-bound-target}.
\end{proof}

\subsection{Bounding the Number of Blocked Buys (\Cref{lma:mbuy-bound})}

Before proving \Cref{lma:mbuy-bound}, we first introduce the following \Cref{lma:balanced-walk}, which converts the value of $\E[M^+_\tau]$ into a balanced random-walk process and can further simplify our analysis.

\begin{Lemma}
    \label{lma:balanced-walk}
    Let $\{\widetilde X_t\}_{t \in [T]}$ be a $T$-steps random walk process with laze boundaries $[0, B]$. To be specific, let $\widetilde X_0 = 0$. For $t \in [T]$, let $\tilde \xi_t$ be an independent random variable such that $\pr[\tilde \xi_t = 1] = \pr[\tilde \xi_t = -1] = \bar \alpha$, and $\xi_t = 0$ otherwise, where $\bar \alpha = 0.5 \cdot (\Expectbuy + \Expectsell)$. For $t \in [T]$, the update rule of the random walk process follows
    \[
    \widetilde X_t = \widetilde X_{t - 1} + \tilde \xi_t \cdot \one\big[\widetilde X_{t - 1} + \tilde \xi_t \in [0, B] \big].
    \]
    Let $\widetilde X^-_T$ be the total number of rounds blocked by the left boundary, i.e., the number of rounds such that $\widetilde X_{t - 1} = 0$ while $\tilde \xi_t = -1$. Then, we have $\E[M^+_\tau] \leq \E[\widetilde X^-_T]$.
\end{Lemma}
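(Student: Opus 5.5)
The plan is to prove $\E[M^+_\tau] \leq \E[\widetilde X^-_T]$ pathwise. I will flip the inventory process so that blocked buys at the upper boundary $B$ become blocked moves at the lower boundary $0$. Then I will build a monotone coupling between the flipped process and the balanced lazy walk $\widetilde X$ under which every blocked event of the former is also a blocked event of the latter.

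\textbf{Step 1 (reflection).} Define $V_t := B - X_t$ for $t \in [\tau]\cup\{0\}$, so $V_0 = B - B_0 \geq 0 = \widetilde X_0$. For $t \le \tau$ the attempted increment of $V$ is $-\xi_t$, which equals $+1$ with probability $\Expectsell$, equals $-1$ with probability $\Expectbuy$, and is $0$ otherwise. The update is $V_t = V_{t-1} - \xi_t \cdot \one[V_{t-1} - \xi_t \in [0,B]]$, the same lazy-boundary rule on $[0,B]$. Under this map, a blocked buy (that is, $\xi_t = 1$ and $X_{t-1} = B$) is exactly an attempted down-step of $V$ while $V_{t-1} = 0$. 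So $M^+_\tau$ counts the rounds $t \le \tau$ in which $V$ is blocked at $0$.

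\textbf{Step 2 (monotone coupling).} Since $\Expectbuy \le \Expectsell$ by \eqref{eq:1}, we have $\Expectbuy \le \bar\alpha \le \Expectsell$. Realize both walks from one uniform $U_t \sim U[0,1]$ per step:
\begin{itemize}
\item $V$ attempts $-1$ iff $U_t < \Expectbuy$, and $+1$ iff $U_t > 1 - \Expectsell$;
\item $\widetilde X$ attempts $-1$ iff $U_t < \bar\alpha$, and $+1$ iff $U_t > 1 - \bar\alpha$.
\end{itemize}
The marginals are correct, and the steps $\{U_t\}$ are independent across $t$. The key property is one-sided domination of the attempts: whenever $V$ attempts a down-step, so does $\widetilde X$, and whenever $\widetilde X$ attempts an up-step, so does $V$. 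For $t > \tau$, only $\widetilde X$ continues to move.

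\textbf{Step 3 (order preservation).} I will show by induction that $V_t \ge \widetilde X_t$ for all $t \le \tau$. The base case is $V_0 \ge 0 = \widetilde X_0$. Both walks move by at most $1$ per step, so if $V_{t-1} \ge \widetilde X_{t-1}+1$ the order cannot be violated. It remains to check the tie case $V_{t-1} = \widetilde X_{t-1} = v$.
\begin{itemize}
\item If $V$ attempts $-1$, then $\widetilde X$ also attempts $-1$. Both move down (if $v>0$) or both stay (if $v=0$).
\item If $\widetilde X$ attempts $+1$, then $V$ also attempts $+1$. Both move up, or both stay if $v = B$.
\item In every remaining combination, $V$'s realized move is at least $\widetilde X$'s realized move.
\end{itemize}
This boundary case analysis is the only delicate point. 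The main thing I want to confirm is that the lazy rule at both $0$ and $B$ is applied identically to the two walks, so ties are never broken the wrong way.

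\textbf{Step 4 (conclusion).} Suppose round $t \le \tau$ contributes to $M^+_\tau$, so $V_{t-1}=0$ and $V$ attempts a down-step. By Step 3, $\widetilde X_{t-1} \le V_{t-1} = 0$, hence $\widetilde X_{t-1} = 0$. By Step 2, $\widetilde X$ also attempts $-1$, so round $t$ counts in $\widetilde X^-_T$. Therefore $M^+_\tau \le \widetilde X^-_\tau \le \widetilde X^-_T$ pointwise under the coupling, where the last inequality holds because $\tau \le T$ and the count is nondecreasing in time. Taking expectations gives the lemma.
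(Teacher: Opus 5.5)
Your argument is correct in substance, but it is organized differently from the paper. The paper passes through a chain of intermediate walks. It extends the horizon from $\tau$ to $T$, moves the start from $B_0$ to $B$, replaces the step probabilities $(\alpha_B,\alpha_S)$ by $(\bar\alpha,\bar\alpha)$, and then reflects. At each stage it asserts, without proof, that the expected number of top-blocked steps can only increase.

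You instead reflect first ($V=B-X$) and build one monotone coupling, which gives the pathwise bound $M^+_\tau\le\widetilde X^-_\tau\le\widetilde X^-_T$. This performs all of the paper's comparisons at once. It also supplies what the paper leaves implicit: its ``start closer to the boundary'' and ``more up-probability'' monotonicity claims each need exactly this kind of coupling. Your version further shows that the intermediate start at $B$ is unnecessary, since the initial gap $V_0-\widetilde X_0=B-B_0\ge 0$ is absorbed by order preservation. For your coupling to be well defined you also need $\alpha_B+\alpha_S\le 1$, so that $V$'s up and down intervals are disjoint. This follows by summing the LP constraints $y^*_k+z^*_k\le p^{(k)}$.

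One sentence in Step 3 is wrong as written: ``both walks move by at most $1$ per step, so if $V_{t-1}\ge\widetilde X_{t-1}+1$ the order cannot be violated.'' This fails when the gap is exactly $1$. If $V$ steps down while $\widetilde X$ steps up, then $V_t=\widetilde X_t-1$. Your coupling does rule this out, but you must invoke it in the non-tie case as well, via the following three cases.

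\begin{itemize}
\item If $V$ realizes a down-step, then $\widetilde X$ attempted $-1$, so $\widetilde X_t=\max(\widetilde X_{t-1}-1,0)$. This is at most $V_t$, because $V_t=V_{t-1}-1\ge\widetilde X_{t-1}-1$ and $V_t\ge 0$.
\item If $\widetilde X$ realizes an up-step, then $V$ attempted $+1$, so $V_t=\min(V_{t-1}+1,B)\ge\widetilde X_t$.
\item Otherwise, $V$'s realized move is $\ge 0$ and $\widetilde X$'s is $\le 0$.
\end{itemize}

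This argument handles ties and positive gaps uniformly. With it in place, Step 4 goes through exactly as you wrote it.
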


\begin{proof}
    We prove \Cref{lma:balanced-walk} via constructing a series of
    random-walk processes. Each of these random walks are restricted
    to $[0, B]$---any steps that cause them to move outside this
    interval are ``blocked'' and cause the position to remain
    unchanged. 
    We first initiate $\{X^{(1)}_t\}_{t \in [\tau]}$ to be the process equivalent to $\{X_t\}_{t \in [\tau]}$, i.e., we have $X^{(1)}_0 = B_0$, $\xi^{(1)}_t$ be an independent random variable that equals to $-1$ with probability $\Expectsell$, equals to $1$ with probability $\Expectbuy$, and equals to $0$ otherwise. The update rule follows $X^{(1)}_t = X^{(1)}_{t - 1} + \xi^{(1)}_t \cdot \one[X^{(1)}_{t - 1} + \xi^{(1)}_t \in [0, B]]$. Let $M^{(1)}_T$ be the number of rounds the random walk step is blocked by the upper boundary $B$, i.e., the number of $t \in [T]$ that satisfies $X^{(1)}_{t - 1} = B$ while $\xi^{(1)}_t = +1$. Then, there must be $\E[M^+_\tau] = \E[M^{(1)}_\tau]$.

    Now, we start to modify the random walk process. We first construct a process $\{X^{(2)}_t\}_{t \in [T]}$ identical to the process $\{X^{(1)}_t\}_{t \in [\tau]}$, except that the \emph{number of steps} in the random walk increases from $\tau$ to $T$. Let $M^{(2)}_T$ be the number of rounds the random walk step is blocked by the upper boundary $B$. Since $\{X^{(2)}_t\}$ follows the same process but with a larger number of steps, there must be
    \[
    \E[M^+_\tau] ~=~ \E[M^{(1)}_\tau] ~\leq~ \E[M^{(2)}_T].
    \]

    Next, we construct a process $\{X^{(3)}_t\}_{t \in [T]}$ identical to the process $\{X^{(2)}_t\}_{t \in [T]}$, except that the \emph{initial position} in the random walk shifts from $X^{(2)}_0 = B_0$ to $X^{(3)}_0 = B$. Let $M^{(3)}_T$ be the number of rounds the random walk step is blocked by the upper boundary $B$. Since $\{X^{(3)}_t\}$ follows the same process but with an initial position closer to the upper boundary, there must be
    \[
    \E[M^+_\tau] ~\leq~ \E[M^{(2)}_T] ~\leq~ \E[M^{(3)}_T].
    \]

    The next process we construct is $\{X^{(4)}_t\}_{t \in [T]}$, which is identical to the process $\{X^{(3)}_t\}_{t \in [T]}$, except that the \emph{transitioning probability} of the random walk process differs: let $\xi^{(4)}_t$ be the step we aim at taking in round $t$ for process  $\{X^{(4)}_t\}_{t \in [T]}$, and let  $\xi^{(3)}_t$ be the step we aim at taking for process  $\{X^{(3)}_t\}_{t \in [T]}$. Recall that we have 
    \[
    \pr[\xi^{(3)}_t = -1] ~=~ \Expectsell, \quad \text{and} \quad \pr[\xi^{(3)}_t = 1] ~=~ \Expectbuy, \quad \text{and} \quad \pr[\xi^{(3)}_t = 0] ~=~ 1 - \Expectsell - \Expectbuy.
    \]
    For process   $\{X^{(4)}_t\}_{t \in [T]}$, we update the probabilities to
    \[
    \pr[\xi^{(4)}_t = -1] ~=~  \pr[\xi^{(4)}_t = 1] ~=~ \bar \alpha, \quad \text{and} \quad \pr[\xi^{(4)}_t = 0] ~=~ 1 - \Expectsell - \Expectbuy ~=~ \pr[\xi^{(3)}_t = 0],
    \]
    where we recall $\bar \alpha = 0.5 \cdot (\Expectsell + \Expectbuy)$. Since $\Expectsell \geq \Expectbuy$, comparing to process  $\{X^{(3)}_t\}_{t \in [T]}$, process  $\{X^{(4)}_t\}_{t \in [T]}$ keeps the same stay-in-place probability, decreases the probability of going towards $-1$, while increases the probability of going towards $+1$. Let $M^{(4)}_T$ be the number of rounds the random walk step in  process  $\{X^{(4)}_t\}_{t \in [T]}$  is blocked by the upper boundary $B$. Then, there must be
    \[
    \E[M^+_\tau] ~\leq~ \E[M^{(3)}_T] ~\leq~ \E[M^{(4)}_T].
    \]

    Finally, note that the process $\{X^{(4)}_t\}_{t \in [T]}$ and the process $\{\widetilde X_t\}_{t \in [T]}$ exhibit \emph{mirror symmetry}. Therefore, there must be
    \[
     \E[M^+_\tau] ~\leq~ \E[M^{(4)}_T] ~=~ \E[\widetilde M^-_T]. \qedhere
    \]
\end{proof}

To analyze the random walk process $\{\widetilde X_t\}_{t \in [T]}$, we rely on the the following \Cref{lma:balanced-walk-mt-to-xt}, which converts our target from bounding $\E[\widetilde M^-_T]$ to bounding $\E[\widetilde X_T]$.

\begin{Lemma}
    \label{lma:balanced-walk-mt-to-xt}
    For random walk process $\{\widetilde X_t\}_{t \in [T]}$ described in \Cref{lma:balanced-walk}, we have
    \[
    \E[\widetilde M^-_T] ~\leq~ \frac{\bar \alpha T}{B} + \E[\widetilde X_T].
    \]
\end{Lemma}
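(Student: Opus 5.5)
The plan is a potential-function (Lindley/Skorokhod-type) argument applied to a quadratic function of the walk. Write $\widetilde X_t - \widetilde X_{t-1}$ as the attempted step $\tilde\xi_t$ corrected by the blocked steps:
\[
\widetilde X_t - \widetilde X_{t-1} ~=~ \tilde \xi_t + \one[\widetilde X_{t-1} = 0,\ \tilde\xi_t = -1] - \one[\widetilde X_{t-1} = B,\ \tilde\xi_t = +1].
\]
Summing over $t$ and taking expectations, with $\E[\tilde\xi_t]=0$ and $\widetilde X_0=0$, gives the identity
\[
\E[\widetilde X_T] ~=~ \E[\widetilde M^-_T] - \E[\widetilde M^+_T],
\]
where $\widetilde M^+_T$ counts the steps blocked by the upper boundary $B$. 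So it suffices to show $\E[\widetilde M^+_T] \le \bar\alpha T/B$.

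To bound the upper-boundary blocks, I use the potential $\Phi_t := \widetilde X_t^2 - (2B+1)\widetilde X_t$, i.e. $\Phi(x) = x(x-2B-1)$. The function $\Phi$ is symmetric about $B+\tfrac12$ and takes the same value at $x=0$ and at $x=2B+1$. Consider one step from state $x$.

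\textbf{Interior states} $1 \le x \le B-1$. No step is blocked, and
\[
\E[\Phi(x+\tilde\xi) - \Phi(x)] ~=~ \bar\alpha\bigl(\Phi(x+1)+\Phi(x-1)-2\Phi(x)\bigr) ~=~ 2\bar\alpha.
\]

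\textbf{State $x=0$.} A down-step is blocked, so only the up-step moves. The change is
\[
\bar\alpha\bigl(\Phi(1)-\Phi(0)\bigr) ~=~ \bar\alpha(1-2B-1) ~=~ -2B\bar\alpha.
\]
This is the adverse term, and it is a problem: the blocks we want to control occur exactly here, yet the potential drops. A potential that treats both boundaries alike, as $\Phi$ does by its symmetry, cannot give a one-sided bound.

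A cleaner route is to choose $\Phi$ so that the lower boundary is neutral and the upper boundary produces a large drop. Take $\Phi(x) = x^2$.
\begin{itemize}
\item Interior states: the drift is $2\bar\alpha$.
\item State $x=0$: a down-step is blocked, the up-step gives $+1$, so the drift is $\bar\alpha \le 2\bar\alpha$.
\item State $x=B$: an up-step is blocked, the down-step gives $(B-1)^2 - B^2 = -2B+1$, so the drift is $\bar\alpha(1-2B)$.
\end{itemize}
Relative to the baseline $2\bar\alpha$, the state $B$ is deficient by $\bar\alpha(2B+1)$. Note also that $\Pr[\widetilde X_{t-1} = B,\ \tilde\xi_t = 1] = \bar\alpha \Pr[\widetilde X_{t-1}=B]$. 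Summing the drifts and using $\Phi \ge 0$ with $\Phi(\widetilde X_0)=0$:
\[
0 ~\le~ \E[\widetilde X_T^2] ~\le~ 2\bar\alpha T - (2B+1)\,\E[\widetilde M^+_T].
\]
This yields $\E[\widetilde M^+_T] \le 2\bar\alpha T/(2B+1) \le \bar\alpha T/B$. Combined with the identity above, this proves the lemma.

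The only delicate step is choosing the potential: the symmetric potential fails, while $x^2$ charges the drop at $B$ and stays harmless at $0$. Everything else is a one-step drift computation. The argument relies on $\widetilde X_0 = 0$ as defined, which gives $\Phi(\widetilde X_0)=0$.
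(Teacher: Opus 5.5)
Your proof is correct and is essentially the paper's argument: both combine the first-moment identity $\E[\widetilde X_T]=\E[\widetilde M^-_T]-\E[\widetilde M^+_T]$ with the one-step drift of $\widetilde X_t^2$. The paper keeps the exact drift $\bar\alpha$ at $0$ and solves the two resulting linear equations for $\E[\widetilde M^-_T]$, whereas you relax that drift to $2\bar\alpha$ to bound $\E[\widetilde M^+_T]\le 2\bar\alpha T/(2B+1)$ on its own.

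One correction to your side remark: you were wrong to dismiss $\Phi(x)=x(x-2B-1)$. Since $\Phi(B+1)=\Phi(B)$, its drift at $B$ equals the interior value $2\bar\alpha$, so it does not treat both boundaries alike; it isolates the lower boundary. It gives $(2B+2)\E[\widetilde M^-_T]=2\bar\alpha T-\E[\widetilde X_T^2]+(2B+1)\E[\widetilde X_T]$, which is exactly the paper's identity and proves the lemma in one step.
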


\begin{proof}
    For simplicity of the proof, we also define $\widetilde M^+_T$ to be the number of rounds the random walk step in  process  $\{\widetilde X_t\}_{t \in [T]}$  is blocked by the upper boundary $B$. By linearity of expectation and the fact that the realization of $\tilde \xi_t$ is independent to the position $\widetilde X_{t-1}$, we have
    \begin{align}
        \E[\widetilde M^-_T] ~=~\sum_{t = 0}^{T - 1} \bar \alpha \cdot \pr[\widetilde X_t = 0], \quad \text{and} \quad \E[\widetilde M^+_T] ~=~\sum_{t = 0}^{T - 1} \bar \alpha \cdot \pr[\widetilde X_t = B] \label{eq:balanced-walk-mt}
    \end{align}

    To prove \Cref{lma:balanced-walk-mt-to-xt}, we first analyze the drift of $\widetilde X_t$. Note that when $\widetilde X_t$ is neither $0$ nor $B$, we have $\E[\widetilde X_{t+1} - \widetilde X_t] = 0$. When $\widetilde X_t = 0$, $\widetilde X_{t+1}$ moves to $1$ with probability $\bar \alpha$ and stays at $0$ otherwise, so $\E[\widetilde X_{t+1} - \widetilde X_t] = \bar \alpha$. Similarly, when  $\widetilde X_t = B$, $\widetilde X_{t+1}$ moves to $B - 1$ with probability $\bar \alpha$ and stays at $B$ otherwise, so $\E[\widetilde X_{t+1} - \widetilde X_t] = -\bar \alpha$. By linearity of expectation, we have 
    \begin{align}
        \E[\widetilde X_T] ~=~\E[\widetilde X_T - \widetilde X_0] ~=~ \bar \alpha \cdot \sum_{t = 0}^T \pr[\widetilde X_t = 0] - \bar \alpha \cdot \sum_{t = 0}^T \pr[\widetilde X_t = B] ~=~ \E[\widetilde M^-_T] - \E[\widetilde M^+_T], \label{eq:balanced-walk-first}
    \end{align}
    where we apply the fact that $\widetilde X_0 = 0$ in the first equality, and  \eqref{eq:balanced-walk-mt} in the last equality.

    Net, we analyze the drift of $\widetilde X^2_t$. Fix $\widetilde X_t$. When $\widetilde X_t$ is not at the boundary, we have
    \[
    \E[\widetilde X^2_{t+1} - \widetilde X^2_t] ~=~ \bar \alpha \cdot \big((\widetilde X_t - 1)^2 - \widetilde X^2_t\big) + \bar \alpha \cdot \big((\widetilde X_t + 1)^2 - \widetilde X^2_t\big) ~=~ 2\bar \alpha.
    \]
    When $\widetilde X_t = 0$, $\widetilde X_{t+1}$ changes to $1$ with probability $\bar \alpha$ and stays otherwise, so $\E[\widetilde X^2_{t+1} - \widetilde X^2_t] = \bar \alpha = 2\bar \alpha - \bar \alpha$. Similarly, when $\widetilde X_t = B$, we have $\E[\widetilde X^2_{t+1} - \widetilde X^2_t] = \bar \alpha \cdot (1 - 2B) = 2\bar \alpha - 2\bar \alpha  B - \bar \alpha$. By linearity of expectation, we have
    \begin{align}
        \E[\widetilde X^2_T] ~=~ \E[\widetilde X^2_T - X^2_0] ~&=~ 2\bar \alpha\cdot T - \alpha \cdot \sum_{t = 0}^{T-1} \pr[\widetilde X_t = 0] -(2\bar \alpha B + \alpha) \cdot \sum_{t = 0}^{T-1} \pr[\widetilde X_t = B] \notag \\
        ~&=~ 2\bar \alpha\cdot T  -  \E[\widetilde M^-_T] -(2B+1) \cdot  \E[\widetilde M^+_T], \label{eq:balanced-walk-second}
    \end{align}
    where the last inequality uses \eqref{eq:balanced-walk-mt}. Combining \eqref{eq:balanced-walk-first} and \eqref{eq:balanced-walk-second} together gives
    \begin{align*}
        \E[\widetilde M^-_T] ~=~ \frac{2 \bar \alpha T + (2B + 1) \cdot \E[\widetilde X_T] - \E[\widetilde X^2_T]}{2B + 2} ~\leq~ \frac{\bar \alpha T}{B} + \E[\widetilde X_T].  
    \end{align*}
\end{proof}

After converting $\E[\widetilde M^-_T]$ into a relatively simple form, we are now ready to prove \Cref{lma:mbuy-bound}.

\begin{proof}[Proof of \Cref{lma:mbuy-bound}]
    By \Cref{lma:balanced-walk}    and \Cref{lma:balanced-walk-mt-to-xt}, it remains to upper-bound the value of  $\E[\widetilde X_T]$ for the balanced random walk process defined in \Cref{lma:balanced-walk}. To achieves this, we introduce the following two random walk processes. Let $\{\widetilde U_t\}_{t \in [T]}$ be the unbounded random walk process w.r.t $\tilde \xi_t$, and $\{\widetilde W_t\}_{t \in [T]}$ be the left-bounded random walk process w.r.t. $\tilde \xi_t$, i.e., we have $\widetilde U_0 = \widetilde W_0 = \widetilde X_0 = 0$, and for $t \in [T]$, we have
\[
\widetilde U_t ~=~ \widetilde U_{t-1} + \tilde \xi_t, \qquad \text{and} \qquad \widetilde W_t = \widetilde W_{t - 1} + \tilde \xi_t \cdot \one\big[W_{t-1} + \xi_t \geq 0\big].
\]
Then, for any realization of $\{\tilde \xi_t\}_{t \in [T]}$, we have
\begin{align}
    \widetilde X_T ~\leq~ \widetilde W_T ~\leq~ \widetilde U_T - \min_{t \in [T]} \widetilde U_T. \label{eq:balanced-walk-xt-to-ut}
\end{align}
We omit the proof of \Cref{eq:balanced-walk-xt-to-ut}, as it is identical to the proof of \Cref{clm:xt-to-ut}. Then, it remains to bound the $\E[\widetilde U_T]$ and $ \E[\min_{t \in [T]} \widetilde U_T]$. We bound these two expectations via Freedman's Inequality. Since $\widetilde U_t$ is an unbounded and balanced random walk process, which means  $\E[\widetilde U_t] = \E[\widetilde U_{t-1}]$ for every $t \in [T]$, the sequence of $\{\widetilde U_t\}_{t \in [T]}$ can be viewed as a martingale process, such that the absolute value of the increments is bounded by
\[
|\Delta \widetilde U_t| ~=~ |\tilde \xi_t| ~\leq~ 1,
\]
and the predictable quadratic variation $\langle \widetilde U \rangle_T$ satisfies
\begin{align*}
    \langle \widetilde U \rangle_T ~=~ \sum_{t \in [T]} \E[\tilde \xi^2_t] ~=~ T \cdot 2 \bar \alpha.
\end{align*}
Let $\epsilon = 6\sqrt{\bar \alpha T} \cdot \log B$. By Freedman's Inequality (\Cref{thm:freedman}), we have
\begin{align*}
    \pr\left[ \max_{t \in [T]} |\widetilde U_t| \geq \epsilon \land \langle \widetilde U \rangle_T \leq 2\bar \alpha T \right] ~\leq~ 2\exp \left(- \frac{\epsilon^2/2}{2\bar \alpha T + 2\epsilon/3}\right) ~\leq~ 2\exp(-3\log B) ~\leq~ B^{-1},
\end{align*}
where the second inequality uses the fact that 
\[
\epsilon^2/2 ~=~ 18 \bar \alpha T \cdot \log^2 B ~\geq~ 6  \bar \alpha T \cdot \log B + 12 \sqrt{\bar \alpha T} \cdot \log^2 B ~=~ 3\log B \cdot (2 \bar \alpha T + 2\epsilon/3),
\]
and the last inequality holds when $B$ is sufficiently large.  Then, we have

\begin{align*}
    \E[\widetilde X_T] ~&\leq~ 2\epsilon \cdot \pr[\widetilde X_T \leq 2\epsilon] + B \cdot \pr\left[\widetilde X_T > 2\epsilon\right] \\
    ~&\leq~ 2\epsilon + B \cdot \pr \left[\widetilde U_T \geq \epsilon \lor - \min_{t \in [T]} \widetilde U_T \geq \epsilon \right] \\
    ~&\leq~ 2\epsilon + B \cdot  \pr\left[ \max_{t \in [T]} |\widetilde U_t| \geq \epsilon \land \langle \widetilde U \rangle_T \leq 2\bar \alpha T \right] \\
    ~&\leq~ 2\epsilon + B \cdot B^{-1} ~=~ 12\sqrt{\bar \alpha T} \cdot \log B + 1 ~\leq~ 13\sqrt{\bar \alpha T} \cdot \log B 
\end{align*}
where the first inequality uses the fact that $\tilde X_T \leq B$ always holds, the second inequality follows from \Cref{eq:balanced-walk-xt-to-ut}, and the third inequality uses the fact that $\langle \widetilde U \rangle_T \leq 2\bar \alpha T$ always holds. Finally, combining the above inequality with \Cref{lma:balanced-walk} and \Cref{lma:balanced-walk-mt-to-xt} gives
\[
\E[M^+_\tau] ~\leq~ \E[\tilde M^-_T] ~\leq~ \frac{\bar \alpha T}{B} + 13\sqrt{\bar \alpha T} \cdot \log B ~\leq~ \frac{\Gamma}{B} + 13\sqrt{\Gamma} \cdot \log B,
\]
where the last inequality uses the fact that $\bar \alpha = 0.5(\Expectsell + \Expectbuy) \leq \Expectsell$.
\end{proof}


